\newtheorem{theorem}{Theorem}
\newtheorem{corollary}{Corollary}
\newtheorem{lemma}{Lemma}
\newtheorem{conjecture}{Conjecture}
\theoremstyle{definition}
\newtheorem{definition}{Definition}
\newtheorem{claim}{Claim}
\theoremstyle{remark}
\newtheorem*{remark}{Remark}
\newtheorem*{notation}{Notation}
\newcommand{\R}{\mathbb{R}}
\newcounter{note}[section]
\title{On the Nisan-Ronen conjecture for submodular valuations}
\author{George Christodoulou\thanks{Department of Computer Science,
    University of Liverpool, UK.  Email:
    \texttt{gchristo@liverpool.ac.uk} }
  \and Elias Koutsoupias\thanks{Department of Computer Science, University
    of Oxford, UK. Email: \texttt{elias@cs.ox.ac.uk}}
  \and Annam{\'a}ria
  Kov{\'a}cs\thanks{Department of Informatics, Goethe University,
    Frankfurt M., Germany. Email: \texttt{panni@cs.uni-frankfurt.de}}
  }
\date{}
\begin{document}
\maketitle

\begin{abstract}
  We consider incentive compatible mechanisms for a domain that is
  very close to the domain of scheduling $n$ unrelated machines: the
  single exception is that the valuation of just one machine is
  submodular. For the scheduling problem with such cost functions, we
  give a lower bound of $\Omega(\sqrt{n})$ on the approximation ratio
  of incentive compatible deterministic mechanisms. This is a strong
  information-theoretic impossibility result on the approximation
  ratio of mechanisms on relatively simple domains. The lower bound of
  the current work assumes no restriction on the mechanism side, but
  an expanded class of valuations, in contrast to previous general
  results on the Nisan-Ronen conjecture that hold for only special
  classes of mechanisms such as local, strongly monotone, and
  anonymous mechanisms. Our approach is based on a novel
  characterization of appropriately selected smaller instances that
  allows us to focus on particular type of algorithms (linear
  mechanisms), from which we extract a locality property that gives
  the lower bound.
\end{abstract}

\section{Introduction}

The design of protocols that provide appropriate incentives to
participants, entice them to cooperate, and behave in a way which is
socially beneficial has a long and celebrated history. It is the realm
of mechanism design which is one of the most researched branches of
Game Theory and Microeconomics.  It studies the design of algorithms,
called {\em mechanisms}, and it has numerous applications to many
situations in modern societies, whenever a protocol of conduct of
selfish participants is required. The mechanism asks each participant
to bid their preferences over the different social outcomes, and
implements one of them (e.g. the one which is most socially
beneficial). The challenge is that the preferences of the participants
are {\em private}, and they are either unmotivated to report
correctly, or strongly motivated to report them erroneously, if a
false report is profitable. A truthful mechanism provides incentives
in a way that it is in the best interest of each participant to bid
{\em truthfully}.

The algorithmic nature of mechanism design and the associated
computational issues, were brought to light, in the seminal
20-year-old paper by Nisan and Ronen~\cite{NR01}, which essentially
established the area of algorithmic mechanism design.
They proposed the scheduling problem on unrelated machines, a
fundamental, extensively studied from the algorithmic perspective,
optimization problem as a representative specimen to study the
limitations of truthful mechanisms. The objective is to incentivize
$n$ machines to execute $m$ tasks, so that the maximum completion time
of the machines, i.e. the makespan, is minimized.

Nisan and Ronen applied the famous Vickrey-Clarke-Groves (VCG)
mechanism~\cite{Vic61,Cla71,Gro73} which is a general machinery that
truthfully computes the outcome that maximizes the {\em social
  welfare}, which for the case of scheduling is the allocation that
minimizes the sum of completion times. The VCG is truthful and
polynomial-time for scheduling, but with respect to the makespan
minimization it has a rather poor approximation ratio, equal to the
number of machines $n$. Despite this, they conjectured that the VCG is
the mechanism with the best approximation ratio for this problem.

\begin{conjecture}
  There is no deterministic truthful mechanism with approximation
  ratio better than $n$ for the problem of scheduling $n$ unrelated
  machines.
\end{conjecture}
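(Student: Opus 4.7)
The plan is to set up an adversarial argument showing that any deterministic truthful mechanism for scheduling on $n$ unrelated machines must, on some worst-case instance, produce a makespan that is within a factor $n$ of optimal. Since the VCG mechanism already achieves ratio $n$, the entire question is whether nothing better is possible. My approach would be to reduce the class of candidate mechanisms as much as possible via a characterization theorem, and then defeat the surviving candidates with a concrete instance.

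First, I would work on small instances (few machines, few tasks) and try to characterize the set of truthful mechanisms there. The hope is that on suitably chosen restricted domains, truthfulness forces the mechanism to behave like an affine maximizer, or more generally like a \emph{linear} mechanism that assigns task $j$ to the machine minimizing some weighted sum of reported costs. Such characterizations are known in limited form (e.g.\ Roberts' theorem in the unrestricted domain), and on the scheduling domain with cycle-monotonicity one can hope to pin down a useful structural form. To strengthen the characterization it may help to allow the adversary to use valuations slightly richer than additive ones -- as the abstract suggests, inserting one submodular machine seems to give enough leverage to force the structural step through.

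Second, from the characterization I would extract a \emph{locality} property: the allocation of each task is (essentially) independent of the costs of tasks that are ``far'' from it, in the sense of being cheap on disjoint sets of machines. The idea is that a linear/affine-maximizer mechanism cannot exploit global information about many tasks; each task's assignment is locally determined by a few coordinates of the cost vectors. Given such a locality lemma, one can then construct a family of $n$-machine, $\Theta(n)$-task instances in which every locally optimal assignment is globally bad, forcing a makespan of $\Omega(n) \cdot \mathrm{OPT}$ (or at least $\Omega(\sqrt n) \cdot \mathrm{OPT}$ with the weaker form of locality the submodular trick yields).

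The main obstacle is the characterization step. Without the submodular relaxation, the scheduling domain is notoriously rigid -- truthfulness is captured by cycle monotonicity, but extracting a clean affine-maximizer form from it has eluded two decades of effort, which is precisely why the $\Omega(n)$ bound remains a conjecture rather than a theorem. I expect that a full proof of the conjecture would require either (i) a new domain-specific characterization of truthful mechanisms that goes beyond cycle monotonicity, or (ii) a direct combinatorial argument that sidesteps characterization altogether by playing many coupled small instances against a single mechanism simultaneously. The partial result of this paper -- $\Omega(\sqrt n)$ via submodular valuations -- suggests that route (i) combined with a richer valuation class is currently the more tractable path, and that closing the gap from $\sqrt n$ to $n$ is the hardest remaining step.
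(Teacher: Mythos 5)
The statement you were asked to prove is the Nisan--Ronen conjecture itself, and the paper does \emph{not} prove it: it remains an open conjecture. What the paper actually establishes (Theorem~\ref{thm:main}) is the strictly weaker bound of $\sqrt{n-1}$, and only for the expanded domain in which one machine has a submodular cost function. Your ``proof'' is correctly self-aware on this point, but as a result it is a research program, not a proof, and it does not close the gap any more than the paper does.

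That said, your program sketch is an accurate high-level description of how the paper obtains its partial result. The pipeline you outline --- (i) a two-player, two-task characterization of truthful mechanisms, strengthened by allowing one submodular player; (ii) extracting that the surviving mechanisms are linear/affine on the restricted instances; (iii) a locality property (Lemma~\ref{lem:linearity}, Corollaries~\ref{cor:sliding-boundaries} and~\ref{cor:sliding-boundaries2}, Lemma~\ref{lem:critical}); and (iv) a star-like family of instances that defeats any local mechanism --- is exactly the structure of Sections~\ref{sec:characterization} and~\ref{sec:lb}. A few inaccuracies worth noting: the mechanisms here are affine \emph{minimizers} (this is a cost-minimization problem, not welfare maximization), and the paper is careful to work with \emph{relaxed} affine minimizers, since bounded nonnegative domains admit ``bundling tails'' that pure Roberts-style affine minimizers do not capture (Definition~\ref{dfn:relaxed-affine-minimizers}); getting rid of those tails requires the global approximation-ratio assumption (Lemma~\ref{lem:linear}), not truthfulness alone. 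Also, the locality obtained is not task-by-task independence, but a controlled statement about how the critical boundary $\psi_i$ shifts as $t_j$ or $s_j$ varies, which is weaker and is precisely why the construction only forces $\sqrt{n-1}$ rather than $n$.

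The genuine gap is therefore the one you name yourself: you have no argument that closes the distance from $\sqrt{n}$ (submodular) to $n$ (additive). Two specific obstructions are worth making explicit. First, the two-player characterization you rely on is \emph{false} in the purely additive domain: task-independent mechanisms are truthful there, and they destroy the ``one of the two allocations is empty/full'' dichotomy that the submodular player is used to enforce; this is exactly why the paper cannot be ``carbon-copied'' back to the original conjecture. Second, the locality that falls out of Lemma~\ref{lem:linearity} is only strong enough to reduce the number of non-trivial tasks one at a time under a $\sqrt{n-1}$-thresholded notion of inefficiency; to get $n$ you would need a qualitatively stronger coupling across tasks, and neither your sketch nor the paper supplies one. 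As written, the proposal is an honest outline of the paper's method and of why it falls short, not a proof of the conjecture.
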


The bound in the conjecture is information-theoretic, in the sense
that it should hold for all mechanisms (algorithms), polynomial-time
or not. The Nisan-Ronen conjecture developed into one of the central
problems in Algorithmic Game Theory, and despite intensive efforts
only very sparse progress has been made towards its resolution. The
original Nisan-Ronen paper showed that no truthful deterministic
mechanism can achieve an approximation ratio better than $2$. This was
improved to $2.41$~\cite{ChrKouVid09}, and later to $2.61$
\cite{KV07}, which leaves a huge gap with the known upper bound of
$n$. The most general and interesting result, by Ashlagi et
al.\cite{ADL09}, resolved a restricted version of the conjecture, for
the special yet natural class of {\em anonymous} mechanisms. However,
the original conjecture (for non-anonymous mechanisms) is widely open.

The original conjecture was posed for the case of additive valuations,
where the total cost of each machine equals the sum of its individual
costs of each task. However, many mechanism design settings consider
more general valuations such as submodular, subadditive etc. In these
valuations, the cost of a machine that takes a set of tasks $S$ is a
function of $S$ that satisfies some natural properties. For example,
for subadditive valuations the cost of a bundle of tasks $S$ can be
any value bounded above by the sum of the cost of the individual tasks
in $S$. This extended domain may affect significantly the incentive
compatibility aspect of mechanisms because it allows a machine to also
lie about the cost of bundles of tasks.

It is natural to pose the Nisan-Ronen question for these extended
valuation classes. Despite the importance of the problem and various
attempts, the conjecture has been widely open for all of them. In this
work, we consider \emph{submodular cost functions}, where the marginal
contribution of a task to the total cost of a machine is a
non-increasing function. The class of submodular valuations contains
all additive valuations, it is a proper subject of subadditive
valuations, and it is one of the most restrictive natural valuations
for the scheduling question. Furthermore it is the most-studied class
of valuations for the corresponding maximization problem of
combinatorial auctions (see for
example~\cite{LehmannLN06,DobzinskiV16,Dobzinski16}).

\subsection{Our result}

We give the first non-constant lower bound that works for {\em all}
deterministic truthful mechanisms when the cost functions are
submodular.

\begin{theorem}
\label{thm:main}
There is no deterministic truthful mechanism with approximation ratio
better than $\sqrt{n-1}$ for the problem of scheduling $n$ unrelated
machines with submodular cost functions.
\end{theorem}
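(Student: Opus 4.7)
The plan is to build a restricted family of instances with $n$ machines and roughly $n-1$ tasks, where machines $2,\dots,n$ have additive costs and machine~1 has a submodular cost function. Within this family, I would fix the additive machines' bids to a small, highly structured set so that much of the allocation is already constrained, while leaving machine~1's subset-values as free parameters. The extra dimensionality of submodular bids --- one value per subset of tasks --- is the leverage that should let us go well beyond the constant lower bounds known for the fully additive domain.

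The first step would be a characterization lemma on this sub-domain. The standard Saks--Yu weak monotonicity condition applied to each additive machine partitions its bid space into allocation regions with affine boundaries. The novelty is to combine these with the truthfulness constraints coming from machine~1's subset bids: by letting machine~1 pretend its cost function is various alternative submodular functions, I would aim to derive a rigid structural form for the allocation. Concretely, the goal is to show that every truthful mechanism agrees on this sub-domain with a \emph{linear mechanism} --- one whose assignment of each task reduces to comparing, across machines, a scoring function that is linear in the relevant task bid.

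From this characterization, the second step extracts a \emph{locality property}: the allocation of a particular task $j$ can be made to depend only on parameters associated with $j$ (the additive machines' per-task bids and machine~1's marginal cost for $j$), not on the rest of the bid profile. Locality is the lever for the final amplification step. I would combine $\sqrt{n-1}$ nearly independent ``task gadgets'' into a single instance and tune machine~1's submodular cost so that the optimum schedules $\sqrt{n-1}$ tasks on machine~1 at per-task cost $1/\sqrt{n-1}$, and one task on each additive machine, giving makespan $O(1)$. By locality the mechanism's per-task decisions are decoupled, so an adversary can pick the remaining bids to force either machine~1 or some additive machine to carry a load of $\sqrt{n-1}$, yielding the claimed ratio.

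The main obstacle will be the characterization lemma: weak monotonicity alone is known to produce only constant-factor lower bounds, so extra structure must be squeezed out of submodularity. The key new ingredient should be exploiting carefully chosen \emph{submodular deviations} of machine~1 --- local perturbations of $v$ on particular subsets that preserve submodularity --- to cancel global dependencies in the allocation map and pin it down to a linear form. Once the mechanism is known to be linear on the sub-domain, both the locality property and the adversarial amplification are expected to follow by comparatively routine combinatorial arguments.
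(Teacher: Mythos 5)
Your outline hits the right high-level skeleton --- a characterization step that forces ``linearity'' of the allocation, a locality property extracted from it, and an amplification argument with per-task values around $1/\sqrt{n-1}$ --- and you correctly identify the characterization as the bottleneck. But there are two substantive gaps that would block the argument as you have sketched it.

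First, the instance construction. You propose $n$ machines and roughly $n-1$ tasks, one per additive machine. The paper instead uses $2(n-1)$ tasks arranged in \emph{twin pairs} $\{i,i'\}$, with both twins contestable only between the submodular player and $s$-player $i$ (all other players bid a huge constant $\Theta$). This is not cosmetic: it is precisely what makes the characterization tractable. When you restrict to a twin pair and fix everything else, you get a genuine \emph{two-player, two-task} cut, and only for that setting does the paper prove (Theorem~\ref{theo:epschar}) that every WMON allocation with a submodular $t$-player is a relaxed affine minimizer, one-dimensional, or constant. The twin task $i'$, pinned at $s_{i'}=n$ and $t_{i'}=0$, is then used to kill the bundling tail and the degenerate cases, yielding the linearity of the boundary $\psi_i$ in $s_i$ (Lemma~\ref{lem:linear}). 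With only one task per $s$-player, the natural two-task cut $\{i,j\}$ involves \emph{three} players (the $t$-player and two distinct $s$-players), and no clean two-player characterization applies; you would be back to the multi-player monotonicity obstacle you were trying to avoid.

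Second, a technical misstatement that matters: you write that Saks--Yu weak monotonicity ``partitions the bid space into allocation regions with affine boundaries.'' WMON gives monotone, not affine, boundaries; obtaining affine boundaries is exactly the content of Lemma~\ref{lem:linear} in the paper and requires both the submodular two-player characterization (which rules out task-independent and relaxed task-independent mechanisms that \emph{are} truthful in the purely additive domain) and a bounded-approximation argument. A further nontrivial step you gloss over is Lemma~\ref{lem:linearity}: even after the per-task boundary is affine in $s_i$, the coefficient could a priori depend on the other $t_j$ values, and the paper uses a three-player slope argument to show it does not. Your ``exploit submodular deviations to cancel global dependencies'' is the right instinct for this step, but without the twin-task scaffold there is no two-dimensional cut on which to run the slope comparison, and the step does not obviously go through. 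So the proposal is in the right spirit and has the correct overall shape, but the missing twin-task device is a genuine gap, not a detail.
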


Actually we show a stronger result: Let $\mathcal{M}$ be the class of
scheduling mechanisms which are deterministic and truthful {\em when
  all machines are additive except for one machine, which is
  submodular}. Then no mechanism in $\mathcal{M}$ has an approximation
ratio better than $\sqrt{n-1}$ on the set of instances of scheduling
unrelated additive machines.

The lower bound of the current work assumes \emph{no restriction on
  the mechanism side}, but an expanded class of valuations. In
contrast, the
Nisan-Ronen conjecture has been shown to hold for only \emph{special classes of
mechanisms} (local~\cite{NR01}, strongly
  monotone~\cite{MualemS18}, and anonymous~\cite{ADL09} mechanisms).

\paragraph{Other valuation classes:}

Actually we don't anticipate any difficulty in translating, in fact
carbon-copying, our results to supermodular valuations; we will
include the details in the final version of the paper. Interestingly,
the class of additive valuations of the original Nisan-Ronen
conjecture is exactly the intersection of submodular and supermodular
valuations.

In fact, we provide a stronger version of our theorem by considering
submodular functions which are also $\epsilon$-additive, in the sense
that the execution time of a set of tasks is within an arbitrarily
small $\epsilon$ from the sum of the execution times of its tasks (see
Section~\ref{sec:preliminaries} for a precise definition). An
important feature of our lower bound construction is that the cost
functions of all machines are additive \emph{with the single exception
  of one machine which is submodular} for particular disjoint pairs of
tasks.

\subsection{Overview of the techniques}

We provide an overview of our approach for the lower bound.
We consider instances in which every task has a fixed large value,
practically infinite, for all except for two machines; one of the two
machines is always the submodular player (player $0$). We can assume
that these tasks are allocated to one of the two machines, otherwise
the approximation ratio is sufficiently high. This restriction of the
allocation to only two players per task, have been previously used
(e.g.~\cite{ChrKouVid09,KV07}). The main difference with previous
approaches is that we use properties of mechanisms that involve at
least three players, the submodular player and two other
players. Obtaining such multiplayer statements is the bottleneck for a
complete characterization of mechanisms in multiplayer domains.

\paragraph{Two-player characterization.} We first focus on the tasks
that can be allocated to a particular additive player and fix the
values of the remaining tasks. For each additive player, there are
only two such tasks, and the situation is very similar to two-player
and two-task special case in which the valuation of one player is
submodular. A core part of our proof, which is also of
independent interest, is a characterization of the allocation
functions of all truthful mechanisms for this case.

We provide a complete characterization for the case of two players
with submodular valuations which are non-negative and are bounded
above by a constant. This is essential in our construction to
guarantee that the fixed large value of the other machines does not
play any role. Since this is a minimization problem, it is the lower
bound (i.e., the restriction that the values are non-negative) that
creates complications rather then the upper bound on the domain. We
also provide a characterization for additive valuations, the actual
scheduling domain.

We note that similar two-player characterizations have been provided
by previous work \cite{DS08,ChristodoulouKV08}, for auctions and
scheduling domains, but none of these can be used in our approach. In
particular, the characterization of \cite{DS08} for scheduling relies
extensively on the bounded approximation for two players, but we need
a characterization without this assumption. The reason is that the
approximation ratio of any two players is in general unrelated to the
approximation ratio of the whole multi-player instance.  In the same
work, a characterization for auctions with subadditive valuations is
provided, but this is also of no use to the minimization we consider
here.
Finally, the characterization of~\cite{ChristodoulouKV08} cannot be
used because it allows negative values.

Indeed, as we show in Section~\ref{sec:characterization}, the
scheduling domain admits truthful mechanisms not present in the
previous characterizations of \cite{DS08,ChristodoulouKV08}, which we
call \emph{relaxed affine minimizers}. Such mechanisms are essentially
affine minimizers for large values, but they can have {\em non-linear}
boundaries for small values (see
Definition~\ref{dfn:relaxed-affine-minimizers} and
Figure~\ref{fig:ex-relaxed-minimizer}).

\paragraph{Gluing two-player mechanisms to multi-player
  linear mechanisms.}
From the characterization of two-payer two-task mechanisms and using
the fact that we are interested in mechanisms with small approximation
ratio for the whole multi-player instances, we are able to exclude all
mechanisms except of those that have affine boundaries
(Lemma~\ref{lem:linear}), that is, affine minimizers
{\em whose coefficients may depend on the values of the other tasks}.

One of the main technical steps is that we use the truthfulness of the
submodular player to show that the scaling coefficient of these linear
mechanisms do not actually depend on the values of the other tasks
(Lemma~\ref{lem:linearity}). The rest of the proof, which includes
the most complicated technical steps of this work, is to analyze the
properties of truthful linear algorithms that facilitate the proof of
the lower bound (Lemma~\ref{lem:critical}).

\subsection{Related work}
\label{sec:related-work}

The problem of scheduling unrelated machines is a typical
multi-dimensional mechanism design problem. In multi-dimensional
mechanism design, the valuation of each player for different outcomes
is expressed via a vector (one value for every outcome). In the case
of unrelated scheduling, this vector expresses the processing times of
a machine for each subset of tasks and can be succinctly represented
by an $m$-valued vector, one value for each task. 

An interesting special case, which is well-understood, is the
single-dimensional mechanism design in which the values of the vector
are linear expressions of a single parameter. The principal
representative is the problem of scheduling {\em related} machines,
where the cost of each machine can be expressed via a single
parameter, its {\em speed}. This was first studied by Archer and
Tardos~\cite{AT01} who showed that, in contrast to the unrelated
machines version, an algorithm that minimizes the makespan can be
truthfully implemented --- albeit in exponential time. It was
subsequently shown that truthfulness has essentially no impact on the
computational complexity of the problem. Specifically, a randomized
truthful-in-expectation\footnote{This is one of the two main
  definitions of truthfulness for randomized mechanisms, where
  truth-telling maximizes the expected utility of each player.} PTAS
was given in~\cite{DDDR11} and a deterministic PTAS was given
in~\cite{CK13}; a PTAS is the best possible algorithm even for the
pure algorithmic problem (unless $P=NP$).

The main obstacle in resolving the Nisan-Ronen conjecture is the lack
of clear algorithmic understanding of truthfulness for many players in
multi-dimensional domains. In contrast, we understand better
truthfulness for a single player. Saks and Yu~\cite{SY05} gave a nice,
complete characterization of deterministic truthful mechanisms for
convex domains which was later extended to truthful-in-expectation
randomized mechanisms in \cite{AK08}. This characterization states
that the class of allocations of truthful mechanisms is the class of
{\em weakly monotone} algorithms~\cite{BCR+06}. This is an elegant
characterization, but it has not been proved very useful for
mechanisms of many players, because it is difficult to combine
monotonicity of each individual player into a single global
condition. Its direct applications have provided only constant bounds
for makespan minimization~\cite{ChrKouVid09,MualemS18,KV07,NR01}. What
would be more useful is a characterization similar to the one provided
by the seminal work of Roberts \cite{Rob79} for {\em unrestricted}
domains. It essentially states that the only truthful mechanisms are affine
extensions of VCG. Similar characterizations have been provided in
\cite{ChristodoulouKV08, DS08,DobzinskiN15} for settings with only two
players. Extending these characterizations to multiple players for
scheduling and combinatorial auctions is notoriously hard, mainly due
to lack of externalities in these settings: the valuation of a player
for an allocation depends only on the subset of tasks it receives and
is indifferent on how the remaining of the tasks are assigned to the
other players\footnote{For two players, there exist implicit
  externalities as the tasks one player doesn't get determine what the
  other player gets.}. 

Scheduling is related to combinatorial auctions, where multiple items
need to be assigned to a set of buyers. This is a broad and successful
area, and the setting shares both aforementioned features of
multi-dimensionality and lack of externalities, therefore insights and
techniques can be transferred from the one problem to the
other. However the difference is that the objective for combinatorial
auctions is {\em social welfare maximization}, and this is known to be
achieved by the VCG mechanism, albeit in exponential time. Hence the
focus on this rich area is on what can be achieved by computationally
efficient mechanisms (see for example~\cite{DobzinskiV16}). But in the
case of the scheduling with the min-max objective, the flavor is more
information theoretic, as we know that {\em not even exponential time}
mechanisms can achieve the optimal makespan.

\subsubsection{Further related work}

Lavi and Swamy~\cite{LaviS09} proposed an interesting approach to
attack the Nisan-Ronen question, by restricting the input domain, but
still keep the multi-dimensional flavour of the setting. They assumed
that each entry in the input matrix can take only two possible values
``low'' and ``high'', that are publicly known to the designer. In this
case, they showed an elegant deterministic mechanism with an
approximation factor of 2. Surprisingly, even for this special case
there is a lower bound of $11/10$. Yu~\cite{Yu09} extended the results
for a range of values, and Auletta et al.~~\cite{Auletta0P15} studied
multi-dimensional domains where the private information of the
machines is a single bit.

Randomization has been explored and slightly improved the known
guarantees. There are two notions of truthfulness for randomized
mechanisms. A mechanism is {\em universally truthful} if it is defined
as a probability distribution over deterministic truthful mechanisms,
while it is {\em truthful-in-expectation}, if in expectation no player
can benefit by lying. In \cite{NR01}, a universally truthful mechanism
was proposed for the case of two machines, and was later extended to
the case of $n$ machines by Mu'alem and Schapira~\cite{MualemS18} with
an approximation guarantee of $0.875n$, which was later improved to
$0.837n$ by \cite{LuYu08}. Lu and Yu~\cite{LuY08a} showed a
truthful-in-expectation mechanism with an approximation guarantee of
$(m+5)/2$.
Mu'alem and Schapira~\cite{MualemS18}, showed a lower bound of $2-1/m$,
for both notions of randomization. Christodoulou, Koutsoupias and
Kov{\'a}cs~\cite{CKK10} extended the lower bound for fractional
mechanisms, where each task can be fractionally allocated to multiple
machines. They also showed a fractional mechanism with a guarantee of
$(m+1)/2$.
A sequence of papers studied randomized mechanisms for the special case
of two machines \cite{Lu09, LuY08a} where a tight answer on the
approximation factor is still unresolved. Currently, the best upper
bound is $1.587$ due to Chen, Du, and Zuluaga~\cite{ChenDZ15}.

The truthful implementation of other objectives have been explored by
Mu'alem and Schapi\-ra~\cite{MualemS18} for multi-dimensional problems
and by Epstein and van Stee~\cite{EpsteinLS13} for single-dimensional
ones, giving a PTAS for a wide range of objective functions. Leucci,
Mamageishvili and Penna~\cite{LeucciMP18} showed high lower bounds for
other min-max objectives on some combinatorial optimization
problems. In the Bayesian setting, Daskalakis and Weinberg
\cite{DaskalakisW15} showed a mechanism that is at most a factor of 2
from the {\em optimal truthful mechanism}, but not with respect to
optimal makespan. Chawla et al.~\cite{ChawlaHMS13} provided bounds of
prior-independent mechanisms (where the input comes from a probability
unknown to the mechanism). Giannakopoulos and
Kyropoulou~\cite{GiannakopoulosK17} showed that the VCG mechanism
achieves an approximation ratio of $O( \log n/\log \log n )$ under
some distributional and symmetry assumptions.

\section{Preliminaries}
\label{sec:preliminaries}

There is a set $N$ of $n$ machines and a set $M$ of $m$ tasks that
need to be scheduled on the machines. The processing time or cost that
each machine $i$ takes to process a subset $S$ of tasks is described
by a set function $t_i: 2^m \rightarrow \R_{\geq 0}$. In classic
unrelated machines scheduling, the cost functions are additive and the
objective is to minimize the makespan (min-max objective).

We also consider more general cost functions that are normalized
$(t_i(\emptyset)=0)$ and monotone ($t_i(S)\leq t_i(T)$ for $S\subseteq
T$). We focus on {\em submodular} cost functions, which satisfy the
following condition for every $S,T\subseteq M$ $$t_i(S\cup
T)+t_i(S\cap T)\leq t_i(S)+t_i(T).$$

In the classical Nisan-Ronen model, the cost functions are
additive. In our lower bound construction, we will assume that all
cost functions are additive except for one which is submodular. We
also consider valuations which are arbitrarily close to additive,
which we call $\epsilon$-additive, such that for every subset $S$,
$\sum_{j\in S}(t_i(\{j\}))-\epsilon \leq t_i(S)\leq \sum_{j\in
  S}(t_i(\{j\}))+\epsilon.$ Our results hold for valuations that are
both submodular and $\epsilon$-additive---even for valuations within
an $\epsilon$ multiplicative factor from additive.

\paragraph{Mechanism design setting.}
We assume that each machine $i\in N$ is controlled by a selfish agent
that is reluctant to process the tasks and the cost function $t_i$ is
private information known only to her (also called the {\em type} of
agent $i$). In the most general version of the problem, the set
$\mathcal{T}_i$ of possible types of agent $i$ consists of all vectors
$b_i\in \mathbb{R}_+^{2^m}.$ Let also $\mathcal{T} = \times_{i\in
  N}\mathcal{T}_i$ be the space of type profiles.

A mechanism defines for each player $i$ a set $\mathcal{B}_i$ of
available strategies, the player (agent) can choose from. We will
consider \emph{direct revelation} mechanisms, i.e.,
$\mathcal{B}_i=\mathcal{T}_i$ for all $i,$ meaning that the players strategies
are to simply report their types to the mechanism. A player may report
a false cost function $b_i\neq t_i$, if this serves her interests.

A mechanism $(A,P)$ consists of two parts:
\begin{description}
\item[An allocation algorithm:] The allocation algorithm $A$ allocates the jobs to the machines depending
  on the players' bids $b=(b_1,\ldots ,b_n)$. Let $\mathcal{A}$ be the
  set of all possible partitions of $m$ tasks to $n$ machines. The
  allocation function $A:\mathcal{T}\rightarrow \mathcal{A}$
  partitions the tasks into the $n$ machines; we denote by $A_i(b)$
  the subset of tasks assigned to machine $i$ in the bid profile $b$.

\item[A payment scheme:] The payment scheme $P=(P_1,\ldots,P_n)$ determines the payments also
  depending on the bid values $b.$ The functions $P_1,\ldots,P_n$ stand
  for the payments that the mechanism hands to each agent
  i.e. $P_i:\mathcal{T}\rightarrow \R$.
\end{description}

The {\em utility} $u_i$ of a player $i$ is the payment that he gets
minus the {\em actual} time that he needs to process the set of tasks
assigned to her, $u_i(b)=P_i(b)-t_i(A_i(b))$. We are interested in
\emph{truthful} mechanisms. A mechanism is truthful, if for every
player, reporting his true type is a \emph{dominant
  strategy}. Formally,

$$u_i(t_i,b_{-i})\geq u_i(t'_i,b_{-i}),\qquad \forall i\in [n],\;\; t_i,t'_i\in \mathcal{T}_i, \;\; b_{-i}\in \mathcal{T}_{-i},$$
where $\mathcal{T}_{-i}$ denotes the possible types of all players disregarding
$i.$

We use as an objective to evaluate the performance of a mechanism's
allocation algorithm the {\em makespan}, that is the maximum
completion time of a machine.  The makespan of the allocation
algorithm $A$ with respect to a given input $t$ is
$$Mech(t)\stackrel{\mathrm{def}}{=}\max_{i\in N}t_{i}(A_i(t)).$$
We aim at minimizing the makespan, hence the optimum
is $$Opt(t)=\min_{A\in \mathcal{A}}\max_{i\in N}t_i(A_i).$$ We
are interested in the approximation ratio of the mechanism's
allocation algorithm.  A mechanism $M$ is \emph{$c$-approximate}, if
the allocation algorithm is $c$-approximate, that is, if
$c\geq\frac{Mech(t)}{Opt(t)}\;$ for all possible inputs $t.$

We are looking for truthful mechanisms with low approximation ratio irrespective of the running time to compute $A$ and $P.$ In other words, our lower bounds do not make use of any computational assumptions.

\paragraph{Weak monotonicity.} 

A useful characterization of truthful mechanisms in terms of the
following monotonicity condition, helps us to get rid of the payments
and focus on the properties of the allocation algorithm.

\begin{definition}
An allocation algorithm $A$ is called {\em
  weakly monotone (WMON)} if it satisfies the following property: for
every two inputs $t=(t_i,t_{-i})$ and $t'=(t'_i,t_{-i})$, the
associated allocations $A$ and $A'$ satisfy $$t_i(A_i)-t_i(A'_i)\leq
t'_i(A_i)-t'_i(A'_i).$$

\end{definition}

It is well known that the allocation function of every truthful
mechanism is WMON~\cite{BCR+06}, and also that this is a sufficient
condition for truthfulness in convex domains~\cite{SY05}.

A useful tool in our proof relies on the following immediate
consequence of WMON, which holds in additive domains as well as in the other
domains that we consider. Intuitively, it states that when you fix the
values of all players for a subset of tasks (focus on a cut of your
domain), then the {\em restriction} of the allocation to the rest of
the tasks must remain weakly monotone.

\begin{lemma}\label{lem:restriction} Let $A$ be a WMON allocation. Let us fix an $(S,T)$  partition of $M$, and consider  only valuations $t_i$ of player $i$ that are additive across $S$ and $T$, i.e., for every
  $X\subseteq M$, $t_i(X)=t_i(X \cap S)+ t_i(X \cap T)$. Then the
  restriction of the allocation $A$ on $S$ is weakly monotone for each
   valuation fixed on the subsets of $T.$ 
\end{lemma}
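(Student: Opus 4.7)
The plan is to derive the restricted monotonicity directly from the global WMON inequality by exploiting the additivity across $(S,T)$ to make the $T$-contributions cancel. Concretely, fix a valuation $\tau$ on the subsets of $T$ and consider two valuations $t_i,t'_i$ of player $i$ that are additive across $S$ and $T$ and that both agree with $\tau$ on subsets of $T$; they differ only in how they value subsets of $S$. Write $A=A(t_i,t_{-i})$ and $A'=A(t'_i,t_{-i})$ for the resulting allocations.

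Next I would apply the global WMON inequality at the profiles $(t_i,t_{-i})$ and $(t'_i,t_{-i})$, which gives
$$t_i(A_i)-t_i(A'_i)\leq t'_i(A_i)-t'_i(A'_i).$$
Using additivity across $(S,T)$ on both sides, each $t_i(A_i)$ splits as $t_i(A_i\cap S)+t_i(A_i\cap T)$, and similarly for the other three terms. Because $t_i$ and $t'_i$ coincide on every subset of $T$, the terms $t_i(A_i\cap T)$ cancel with $t'_i(A_i\cap T)$, and the terms $t_i(A'_i\cap T)$ cancel with $t'_i(A'_i\cap T)$. What remains is exactly
$$t_i(A_i\cap S)-t_i(A'_i\cap S)\leq t'_i(A_i\cap S)-t'_i(A'_i\cap S),$$
which is the WMON inequality for the restriction of $A$ to $S$, evaluated on the $S$-parts $A_i\cap S$ and $A'_i\cap S$ of the allocations.

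There is no real obstacle here; the lemma is essentially a bookkeeping observation. The only point to be careful about is the quantification: the restricted WMON must hold for every fixed valuation on $T$ (both entries of player $i$ on $T$ and of the other players $t_{-i}$), and the argument above treats exactly this case, since the pair of bids we compare differs only in the $S$-part while the $T$-part is pinned down by $\tau$. Since the restricted valuation on $S$ is determined by $t_i$ restricted to subsets of $S$ (the additive decomposition guarantees that this restriction carries all the monotonicity-relevant information), the conclusion follows for arbitrary such restrictions, proving the lemma.
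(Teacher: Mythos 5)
Your proof is correct and takes essentially the same route as the paper: apply the global WMON inequality at $(t_i,t_{-i})$ and $(t'_i,t_{-i})$, split each term by additivity across $S$ and $T$, and cancel the $T$-contributions because $t_i$ and $t'_i$ agree there. The only cosmetic difference is that you keep the WMON inequality in the form from the definition while the paper rearranges it first; the content is identical.
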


\begin{proof}

$A$ is weakly monotone, therefore 
$$t_i(A_i)-t'_i(A_i)\leq t_i(A'_i)-t'_i(A'_i),$$
and additivity across $S$ and $T$ implies
$$t_i(A_i\cap S)-t'_i(A_i \cap S)\leq t_i(A'_i\cap S)-t'_i(A'_i\cap S),$$
as the values of subsets of $T$ are fixed, $t_i(A_i\cap
T)=t'_i(A_i\cap T)$ and $t_i({A'}_i\cap T)=t'_i({A'}_i\cap
T)$.

\end{proof}

The following lemma was essentially shown in \cite{NR01} and has been
a useful tool to show lower bounds for truthful mechanisms for several
variants (see for example \cite{ChrKouVid09,MualemS18,ADL09}). Although
this holds more generally, we only state it (and use it in
Section~\ref{sec:lb}) for additive valuations.

\begin{lemma}\label{lemma:tool}
  Let $t$ be a bid vector  of additive valuations, and let $S=A_i(t)$ be the subset assigned to
  player $i$ by the mechanism. For any bid vector $t'=(t'_i,t_{-i})$
  such that only the bid of machine $i$ has changed and in such a way
  that for every task in $S$ it has decreased (i.e., $t'_i(\{j\})<
  t_i(\{j\}), j\in S$) and for every other task it has increased
  (i.e., $t'_i(\{j\})> t_i(\{j\}), j\in M\setminus S$).  Then the mechanism does not
  change the allocation to machine $i$, i.e., $A_i(t')=A_i(t)=S$.
\end{lemma}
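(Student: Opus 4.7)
The plan is to derive the conclusion directly from the weak monotonicity property, exploiting the additivity of the bids to reduce valuations of sets to sums over single tasks. Since $A$ is the allocation of a truthful mechanism, WMON holds for player $i$, so for $S = A_i(t)$ and $S' = A_i(t')$ we have
$$t_i(S) - t_i(S') \le t'_i(S) - t'_i(S').$$
Rearranging, this is
$$\bigl(t_i(S) - t'_i(S)\bigr) - \bigl(t_i(S') - t'_i(S')\bigr) \le 0.$$

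The next step is to expand both bracketed terms using additivity of $t_i$ and $t'_i$. Writing $\Delta(j) := t_i(\{j\}) - t'_i(\{j\})$, the inequality becomes
$$\sum_{j \in S} \Delta(j) \;\le\; \sum_{j \in S'} \Delta(j),$$
which after cancellation of $S \cap S'$ is equivalent to
$$\sum_{j \in S \setminus S'} \Delta(j) \;\le\; \sum_{j \in S' \setminus S} \Delta(j).$$

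Now I would invoke the sign hypotheses on the bid changes. Every $j \in S \setminus S' \subseteq S$ has $\Delta(j) > 0$ by assumption, so the left side is nonnegative and strictly positive unless $S \setminus S' = \emptyset$. Every $j \in S' \setminus S \subseteq M \setminus S$ has $\Delta(j) < 0$, so the right side is nonpositive and strictly negative unless $S' \setminus S = \emptyset$. The only way the inequality can be satisfied is if both sides vanish, forcing $S \setminus S' = \emptyset$ and $S' \setminus S = \emptyset$, i.e., $A_i(t') = S$.

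There is no real obstacle: the entire argument is a one-shot application of WMON combined with the additive decomposition, and the strict inequalities in the hypotheses are exactly what is needed to rule out any symmetric difference between $S$ and $S'$. The only subtle point is to make sure that $\Delta$ has the advertised sign on every element of the symmetric difference, which follows immediately because $S \setminus S' \subseteq S$ (where bids strictly decreased) and $S' \setminus S \subseteq M \setminus S$ (where bids strictly increased).
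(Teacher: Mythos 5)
Your proof is correct, and it is the canonical argument for this fact. The paper itself does not supply a proof (it credits Nisan--Ronen \cite{NR01} and states it without derivation), but what you wrote is exactly the standard derivation: apply WMON for player $i$, expand both sides by additivity, cancel $S \cap S'$, and use the strict sign hypotheses on $\Delta(j)$ to force the symmetric difference $S \triangle S'$ to be empty. The sign check is right: $S \setminus S' \subseteq S$ so the left-hand sum is nonnegative, and $S' \setminus S \subseteq M \setminus S$ so the right-hand sum is nonpositive, and the inequality $\text{LHS} \le \text{RHS}$ then forces both to vanish.
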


The main challenge in multi-player settings is that the allocation of
the other machines may change and the above condition makes no promise
about how this can happen.

\section{Characterization for two players}
\label{sec:characterization}

A core element of our lower bound proof is a characterization of
truthful mechanisms for two tasks and two players (called $t$-player,
and $s$-player). We first provide a characterization for additive
valuations $t=(t_1,t_2)$, and $s=(s_1,s_2)$, so that both $s_1$ and
$s_2$ are bounded by an arbitrarily large but fixed value $B.$ Then we
extend it when the $t$ player has submodular valuations
$t=(t_1,t_2,t_{12})$ (Theorem~\ref{theo:epschar}), which is the main
element that we need in the lower bound in Section~\ref{sec:lb}. We
note that both the characterization and the lower bound result hold
analogously when the $t$-player has arbitrary monotone, or
$\epsilon$-additive (or submodular \emph{and} $\epsilon$-additive)
valuations. 

We postpone the details of the characterization proof to the Appendix,
but in this section we introduce the definitions that we will need in
order to state the main result and also to introduce some notation
which will be used in our lower bound proof in Section~\ref{sec:lb}.
We keep in the Appendix intact the whole proof of the
characterization.

\subsection{Basic Definitions}

Let $(A,P)$ be a truthful mechanism, where $A$ is the WMON allocation
function, and $P$ denotes the payment function. For input $(t,s)$ the
allocation is $A(t,s).$ Since we have only tasks 1 and 2, in $A(t,s)$
we can denote the allocation to one of the players as
$\alpha_t,\alpha_s\in\{12,1,2,\emptyset\}.$ 

For given $s\in[0,B)\times [0,B)$ the allocation for the $t$-player as
function of his \emph{own} bids $(t_1,t_2)$ (or $(t_1,t_2,t_{12})$) is denoted by $A[s],$ and
symmetrically $A[t]$ is an allocation function for the $s$-player.
For $\alpha_t\in\{12,1,2,\emptyset\},$ the allocation \emph{regions}
$R_{\alpha_t}(s)\subseteq \mathbb R_{\geq 0}^2$ (resp. $\mathbb R_{\geq 0}^3$) of $A[s]$ are defined
to be the \emph{interior} (wrt. $\mathbb R_{\geq 0}^2$) of the set of
all $t$ values such that $A(t,s)=\alpha_t.$ For the $s$-player we denote the respective regions by $R_{\alpha_s}^s(t).$

\begin{figure}[t]
\centerline{\includegraphics[height=4.5cm]{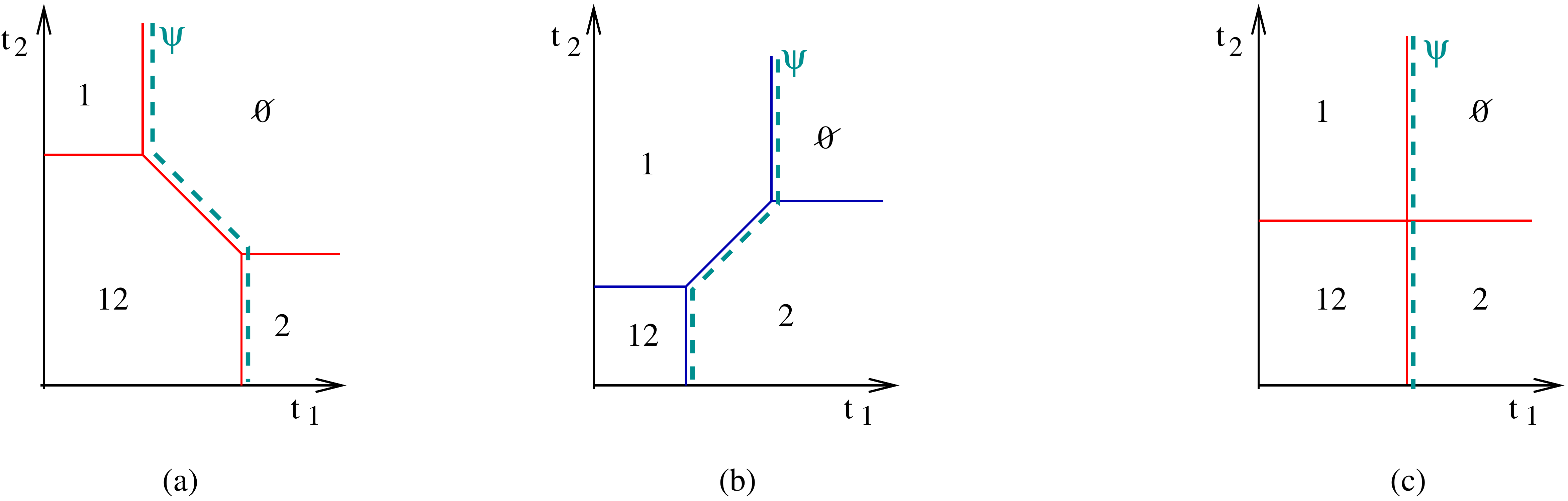}}
\caption{The allocation $A[s]$ to the $t$-player depending on his own bid vector $(t_1,t_2):\quad$ \newline  (a) quasi-bundling allocation; (b) quasi-flipping allocation; (c) crossing allocation. The interiors of some regions might be empty, but $R_\emptyset\neq \emptyset$ can be assumed w.l.o.g.
We marked the functions of critical values  $\psi(t_2,s)$ (Definition~\ref{def:critval}) for receiving task $1$  by broken lines.}
\label{fig:shapesPure}
\end{figure}

\paragraph{Additive players} 
It is known that in the case of two tasks, the regions in a WMON
allocation subdivide $R_{\geq 0}^2$ basically in three possible forms,
which are characteristic for the type of the whole
allocation-function $A$ (see Figure~\ref{fig:shapesPure}). The regions and their boundaries
determine the \emph{critical values} for $t_1$ (as function
of $t_2$ ) above which the $t$-player cannot get task 1, and
symmetrically for task 2. These critical value functions are
determined by the payment functions $P_\emptyset(s)=0,
P_1(s),P_2(s),P_{12}(s)$ for the fixed $s.$

\begin{definition} For given $s,$ we call the allocation $A[s]$
\begin{itemize}
\item \emph{quasi-bundling}, if there are at least two points $t\neq t'$ on the boundary of $R_{12}$ and $R_{\emptyset}$
\item \emph{quasi-flipping}, if there are at least two points $t\neq t'$ on the boundary of $R_{1}$ and $R_{2}$ 
\item \emph{crossing} otherwise (see  Figure~\ref{fig:shapesPure}). 
\end{itemize}
\end{definition}

\subsection{Mechanisms for submodular players}

We now introduce the types of WMON allocations that can occur when the
$t$-player is submodular and the $s$-player is additive and bounded
above by the value $B$. We denote these particular domains $\mathcal
T_t\times \mathcal T_s$ for two players and two tasks by
$V_\mathit{submod}\times V_{+,B},$ and refer the reader to the
appendix Section~\ref{sec:add} for detailed definitions, and also for
extension of our results to other domains. We assume that there always
exist high enough $t$ values so that the $t$-player receives no task,
i.e., that $R_\emptyset(s)\neq \emptyset$ for every $s.$\footnote{This
  assumption is without loss of generality for mechanisms with finite
  approximation of the makespan, even if some 2D cut mechanism of a
  WMON mechanism with more tasks and/or players is considered.}

\paragraph{1-dimensional mechanisms}

In a \emph{one-dimensional mechanism} at most two possible allocations are ever realized. Due to the assumption on $R_\emptyset,$ one of these must be the allocation $\emptyset$ to the $t$-player.
If the two occuring allocations (for the $t$-player) are $\emptyset$ and $12,$  we call the mechanism \emph{bundling mechanism}. The  other cases when the allocations to the $t$-player are $\emptyset$ and $1$ (or $\emptyset$ and $2$) are degenerate  \emph{task-independent} allocations, and can be defined similarly to bundling mechanisms. 

\begin{definition} 
 In a \emph{bundling mechanism} only  the allocations $\emptyset$ and $12$ can occur. There is an arbitrary, increasing\footnote{Throughout the paper, by 'increasing' we mean 'non-decreasing'; otherwise we say 'strictly increasing'.} function $\xi:[0,B)\rightarrow [0,\infty)$ so that if $t_{12}>\xi(s_1+s_2)$ then the $t$-player gets $\emptyset,$ and if $t_{12}<\xi(s_1+s_2)$ then the $t$-player gets $12.$ 

 If $\xi$ has a jump discontinuity in some point $s_1+s_2$ then the
 critical value may depend on the concrete $(s_1,s_2)$ with the given
 fixed sum, as long as it is between $\xi((s_1+s_2)^-)$ and
 $\xi((s_1+s_2)^+).$

\end{definition}

\paragraph{Relaxed affine minimizers}

\begin{definition} An allocation $A$ is an \emph{affine minimizer}, if there exist positive constants  per player $\mu_t$ and $\mu_s,$ and constants $\gamma_{\alpha} \in \mathbb R\cup\{-\infty,\infty\}$ per allocation (say, here $\alpha=\alpha_t$), so that for every input $(t,s)$ the allocation $A(t,s)$ minimizes over $$\mu_t t_{12}+\gamma_{12},\qquad \mu_t t_1+\mu_s s_2+\gamma_{1},\qquad \mu_s s_1+\mu_t t_2+\gamma_{2},\qquad \mu_s(s_1+s_2)+\gamma_{\emptyset}.$$
 \end{definition}

 \begin{definition}\label{dfn:relaxed-affine-minimizers}
   An allocation $A$ is a \emph{relaxed affine minimizer}, if there
   exist positive constants per player $\mu_t$ and $\mu_s,$ and
   constants $\gamma_{\alpha}$ per allocation $\alpha$ (of the
   $t$-player), furthermore an arbitrary increasing function
   $\xi:[0,\min(\gamma_{1},\gamma_{2})-\gamma_{\emptyset})\rightarrow
   [0,\infty)$ (if the interval
   $[0,\min(\gamma_{1},\gamma_{2})-\gamma_{\emptyset})$ is nonempty)
   with
   $\min(\gamma_{1},\gamma_{2})-\gamma_{12}=\xi(\min(\gamma_{1},\gamma_{2})-\gamma_{\emptyset})$,
   so that for every input $(t,s)$

\begin{itemize}
\item[$a)$] if $\mu_s\cdot(s_1+s_2)\geq \min(\gamma_{1},\gamma_{2})-\gamma_{\emptyset},$ the allocation $A(t,s)$ is that of an affine minimizer with the given constants 
\item[$b)$] if $\mu_s\cdot(s_1+s_2)\leq \min(\gamma_{1},\gamma_{2})-\gamma_{\emptyset},$ then if $\mu_t\cdot t_{12}>\xi(\mu_s(s_1+s_2))$ then the allocation for the $t$-player is $\emptyset$ and if $\mu_t\cdot t_{12}< \xi(\mu_s\cdot(s_1+s_2))$ then it is $12.$
\end{itemize}

\end{definition}

The boundary conditions in the above definition are such that the
affine minimizer part (a) fits with the bundling mechanism part (b)
and the resulting mechanism is truthful. See
Figure~\ref{fig:ex-relaxed-minimizer}, for an example of such a
mechanism in the additive domain.

\begin{figure}
\centering
\begin{tikzpicture}[scale=0.65]

\draw[->] (0,0) -- (8,0) node[anchor=north] {$t_1$};
\draw[->] (0,0) -- (0,8) node[anchor=east] {$t_2$};

\draw[very thick, blue] (0, 6) node[anchor=east,black] {$s_2$} -- (4, 6) -- (6, 4) -- (8, 4);
\draw[very thick, blue ] (6, 0) node[anchor=north,black] {$s_1$}-- (6, 4) -- (4, 6) -- (4, 8);

\draw[dashed, gray] (4,6) -- (4,0) node[anchor=north,black] {$s_1-1$};
\draw[dashed, gray] (6,4) -- (0,4) node[anchor=east,black] {$s_2-1$};

\draw[very thick, dashed, blue] (1.5,0) node[anchor=north,black] {$\sqrt{s_1+s_2}$} -- (0,1.5) node[anchor=east,black] {$\sqrt{s_1+s_2}$};

\end{tikzpicture}
\caption{An example of a relaxed affine minimizer, which shows the
  allocation of the $t$-player. The solid lines show the boundaries of
  the allocations for values of the $s$-player when $s_1+s_2\geq
  1$. The dashed lines show the allocation boundary when
  $s_1+s_2<1$. Sometimes we refer to this part of relaxed affine
  minimizers as ``bundling tail''.}
\label{fig:ex-relaxed-minimizer}
\end{figure}
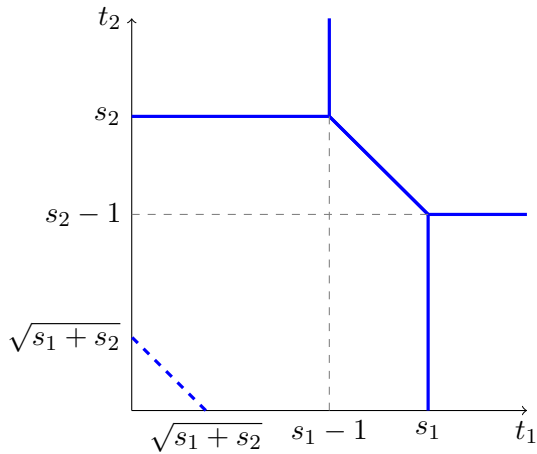

\paragraph{Constant mechanisms}

In a \emph{constant mechanism} the allocation is independent of the
bids of at least one of the players. This property can also be
interpreted as being an affine minimizer with multiplicative constant
$\mu=0.$ Due to the assumption on $R_\emptyset$ we only need to
consider constant mechanisms that are independent (at least) of the
$s$-player.

\subsection{The main characterization result}

The characterization that we use for the lower bound is captured by
the following theorem, whose proof can be found in the Appendix.

\begin{theorem}\label{theo:epschar} Every WMON allocation for two
  tasks and two players with bids $t\in V_{\mathit{submod}}$ and $s\in
  V_{+,B}$, where both tasks are always allocated, and
  $R_\emptyset(s)\neq \emptyset$ for every $s,$ is one of these three
  types: (1) relaxed affine minimizer, (2) one-dimensional mechanism,
  or (3) constant mechanism.  
\end{theorem}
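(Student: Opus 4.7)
The plan is to reduce the submodular case to the additive two-player characterization established earlier in the Appendix, and then analyze how such a truthful additive mechanism can be uniquely extended to accept submodular bids from the $t$-player.

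First, I would fix an arbitrary bid $s$ of the additive $s$-player and restrict $A[s]$ to the additive sub-domain $\{(t_1,t_2,t_{12}) : t_{12}=t_1+t_2\}$. The restriction of a WMON allocation to any sub-domain is WMON, so by the already-proved additive two-player characterization, this restriction is a relaxed affine minimizer, a one-dimensional mechanism, or a constant mechanism. Globalizing this per-$s$ information will then yield the claimed classification of the full allocation.

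Second, I would characterize the extension to $t$-bids with $t_{12}<t_1+t_2$. Since the cost of every allocation other than $\{1,2\}$ is independent of $t_{12}$, WMON applied along the coordinate $t_{12}$ alone (with $t_1, t_2, s$ fixed) implies that for each $(t_1,t_2,s)$ there is a threshold $\xi(t_1,t_2,s)\in[0,\infty]$ below which the $t$-player gets $\{1,2\}$ and above which it gets the same allocation as on the additive slice. Comparing any two types $(t_1,t_2,t_{12})$ and $(t_1',t_2',t_{12})$ sharing the same $t_{12}$ but mapped to allocations $\{1,2\}$ and $X\neq \{1,2\}$ respectively, truthfulness of the $t$-player forces $P_{12}(s)-P_X(s)=t_{12}$. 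This pins $\xi$ to the $s$-only quantity $P_{12}(s)-P_\emptyset(s)$ (and symmetrically at the $\{1,2\}/\{1\}$ and $\{1,2\}/\{2\}$ boundaries where they are relevant), so $\xi$ cannot depend on $(t_1,t_2)$.

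Third, I would show that $\xi(s)$ depends on $s$ only through $s_1+s_2$ (up to the jump discontinuities explicitly allowed in Definition~\ref{dfn:relaxed-affine-minimizers}) and is an increasing function of this sum. WMON on the $s$-player, together with the additivity of $s$ (so the costs of $\{1,2\}$ and $\emptyset$ depend only on $s_1+s_2$), implies that moving $(s_1,s_2)$ along a level set of $s_1+s_2$ can only shift the $\{1,2\}/\emptyset$ boundary in the very constrained manner permitted at jump points; increasing $s_1+s_2$ strictly penalizes the $s$-player on the bundle, giving monotonicity. Matching the resulting tail to the affine-minimizer regime at the transition $\mu_s(s_1+s_2)=\min(\gamma_1,\gamma_2)-\gamma_\emptyset$ produces the boundary condition $\min(\gamma_1,\gamma_2)-\gamma_{12}=\xi(\min(\gamma_1,\gamma_2)-\gamma_\emptyset)$ appearing in Definition~\ref{dfn:relaxed-affine-minimizers}, and yields the relaxed affine minimizer. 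In the cases where the additive slice is already a one-dimensional or constant mechanism, the same analysis degenerates to the corresponding case in the statement.

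The main obstacle is the third step: reconciling the WMON constraints coming from both players, and from both the additive slice and the bundling tail, into a single coherent structural description. In particular, proving that the only globally consistent way to extend the additive-slice mechanism to submodular bids preserves its type (rather than creating some new, more exotic truthful object) requires carefully combining truthfulness of both players across the transition between the affine-minimizer region and the bundling tail.
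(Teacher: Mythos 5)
Your reduction to the additive characterization is the right starting point and matches the paper's strategy, but the proposal contains a genuine gap at its very first step: you assert that the restriction of the mechanism to additive $t$-bids falls into one of \emph{three} types (relaxed affine minimizer, one-dimensional, constant), whereas the additive characterization (Theorem~\ref{theo:addchar}) yields \emph{four} types, the fourth being relaxed task-independent mechanisms with arbitrary increasing (in general non-linear) critical-value functions $\phi(s_1)$ and $\eta(s_2)$. Eliminating that fourth type is precisely the content of Theorem~\ref{theo:epschar}: on the additive domain, task-independent mechanisms with non-linear boundaries are perfectly truthful, and the theorem's whole point is that they cease to be extendable once the $t$-player may bid $t_{12}<t_1+t_2$ (even by an arbitrarily small $\epsilon$). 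Your steps~2 and~3, which analyze the bundling threshold $\xi$ along the $t_{12}$-coordinate and its dependence on $s_1+s_2$, only concern the $R_{12}/R_\emptyset$ interface and would never detect the obstruction, because for a task-independent mechanism the problematic structure lives at the $R_{12}/R_1$ and $R_{12}/R_2$ boundaries just below the additive plane.

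The paper closes this gap in two moves that are absent from your outline. First, it proves a uniqueness-of-extension lemma (Lemma~\ref{lem:same}): the virtual payments of the $t$-player are the unique normalized, monotone, truthful payments, and they are already pinned down by the restriction to the additive plane, so any WMON allocation on $V_{\mathit{submod}}\times V_{+,B}$ is the \emph{unique} extension of its additive restriction. (Your direction-by-direction WMON argument in step~2 recovers a fragment of this but not the full statement.) Second, for a candidate task-independent extension with payments $P_1=\phi(s_1)$, $P_2=\eta(s_2)$, $P_{12}=\phi(s_1)+\eta(s_2)$, it computes the $s$-player's virtual payments at carefully chosen submodular ``valid'' points $t$ and derives the functional equation $\phi^{-1}(t_1)+\eta^{-1}(t_{12}-t_1)=\eta^{-1}(t_2)+\phi^{-1}(t_{12}-t_2)$, which (by a dyadic subdivision argument) forces $\phi$ and $\eta$ to be linear with a common slope, i.e., the mechanism to be a task-independent affine minimizer. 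Without an argument of this kind, your classification cannot exclude, say, a task-independent mechanism with $\phi(s_1)=\sqrt{s_1}$, and the theorem does not follow.
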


In the Appendix we show that the same characterization (and hence our
lower bound) holds for WMON allocations for various domains of
valuations of the $t$-player and in particular for submodular
valuations which are arbitrarily close to additive, which we denote by
$V_{\epsilon}\cap V_\mathit{submod}$.

\section{Lower Bound}
\label{sec:lb}

\newcommand{\A}{\Theta}
\newcommand{\T}[1]{#1}

In this section, we give a proof of our main theorem
(Theorem~\ref{thm:main}). First in Section~\ref{sec:construction} we
describe the domain of instances that we use, and in
Section~\ref{sec:from-affine-minim} we use the characterization for
two machines and two tasks (Theorem~\ref{theo:epschar}) to establish
that the only interesting mechanisms are linear (see the subsection
for a precise definition), then in Section~\ref{sec:lin} we explore
the linearity property of mechanisms with bounded approximation ratio
to establish some useful locality lemmas that are eventually used in
Section~\ref{sec:proof-main} to complete the proof.

\subsection{The construction}
\label{sec:construction}

To prove the lower bound, we focus on the domain of $2(n-1)$ tasks and
$n$ players. Player $0$ is special and for convenience we use the
symbol $t$ for its values; sometimes we refer to it as the $t$-player. We
use the symbol $s$ for the values of the remaining players
$1,\ldots,n-1$, and sometimes we refer to them as the $s$-players.

The set of tasks
$M=\{\T1,\T{1'},\T2,\T{2'},\ldots, \T{n-1},\T{(n-1)'}\}$ is
partitioned in pairs and each pair $\{\T{i},\T{i'}\}$ is associated with
player $i$, $i=1,\ldots,n-1$.  We call the two tasks of each pair
$\{\T{i},\T{i'}\}$ \emph{twin
  tasks}. 

Let $v_i(S)$ denote the cost (valuation) of player $i$ when it takes
the subset $S\subseteq M$ of tasks.
\begin{itemize}
\item the cost of the $s$-players is \emph{additive}:
  $v_i(S)$ is additive for $i\geq 1$.
\item the cost of the $s$-players for tasks not in their associated
  pair is a sufficiently large fixed constant $\A\gg 4n^2$: for
  distinct $i,j\geq 1$, $v_i(\{\T{j}\})=v_i(\{\T{j'}\})=\A$.
\item the cost of the $t$-player for twin tasks is submodular: for
  every $i=1,\ldots,n-1$, the restriction of $v_0(S)$ to
  $S\subseteq\{\T{i},\T{i'}\})$ is submodular.
\item the cost of the $t$-player is \emph{additive across pairs}:
  $v_0(S)=\sum_{i=1}^{n-1} v_0(S\cap \{\T{i},\T{i'}\})$. Therefore
  $v_0(S)$ is \emph{submodular}, as the sum of submodular functions.
\end{itemize}

To simplify the notation we will denote an instance that satisfies the
above conditions by
\begin{align} \label{def:tt' instance}
(t_i,\,s_i,\,t_{i'},\,s_{i'},\,t_{i,i'})_{i=1}^{n-1},
\end{align}
where
\begin{itemize}
\item $s_i$ and $s_{i'}$ the cost of $i$-th $s$-player for the tasks
  in its associated pair $\{\T{i},\T{i'}\}$
\item $t_i$ and $t_{i'}$ the cost of the $t$-player when it takes
  \emph{only one} of the twin tasks $\{\T{i},\T{i'}\}$
\item $t_{i,i'}$ the cost of the $t$-player when it takes \emph{both
    twin tasks} $\{\T{i},\T{i'}\}$
\end{itemize}

With the exception of costs $t_{i,i'}$ an instance is captured by the
following matrix indexed by players and tasks, which shows the cost of
each when a player gets no other task.
\begin{align*}
  \left[
\begin{array}{c c c c c c c }
  t_1 & t_{1'} & t_2 & t_{2'} & \cdots & t_{n-1} & t_{(n-1)'}\\
  s_1 & s_{1'} & \A & \A & \cdots & \A & \A \\
  \A & \A & s_2 & s_{2'} & \cdots & \A & \A \\
  & &  & \vdots  &  &  &  \\
  \A & \A & \A & \A & \cdots & s_{n-1} & s_{(n-1)'} \\
\end{array}\right].
\end{align*}
If the valuations of all players were additive, this matrix would be
sufficient to determine the cost for all bundles. The instances we
consider have more general cost functions and include the submodular
valuation of twin tasks for the $t$-player. Since for two tasks, a
function is submodular if and only if it is subadditive, values
$t_{i,i'}$ satisfy
$$t_{i,i'}\leq t_i+t_{i'}.$$

It is useful to think of the value $\A$ as practically infinite, since
it is much larger that the other values. On the other hand, to prove
our main theorem in its generality, we need this value to be finite
and this complicates the characterization of truthful mechanisms.

We focus on instances that satisfy $s_i\in[0,1]$, $s_{i'}=n$,
$t_{i'}=0$, $t_{i,i'}=t_i+t_{i'}$ for all $i\in[n-1]$. Note the
subtlety here: while the domain contains instances with
$t_{i,i'}\leq t_i+t_{i'}$, for the proof of the lower bound, we
consider the subclass of additive instances. This should not be
surprising in the sense that lower bound proofs usually employ a
subclass of instances. Still, it raises the question whether we could
carry out the same proof in the additive domain. The answer is
negative, because the sets of mechanisms for additive and submodular
(subadditive) domains for two tasks are different; for example,
task-independent mechanisms are not truthful for submodular domains.

\begin{definition}[Restricted $(t,s)$ instance]
  Instances of the form
  $(t_i,\,s_i,\,t_{i'},\,s_{i'},\,t_{i,i'})_{i=1}^{n-1}$
  (Equation~\ref{def:tt' instance}) that satisfy
\begin{align}  \label{def:ts instance}
   s_i&\in[0,1], \quad s_{i'}=n, \quad t_{i'}=0, \quad
        t_{i,i'}=t_i+t_{i'}, & \text{for all $i\in[n-1]$,}
  \\
    & (t,s)=((t_i)_{i=1}^{n-1},(s_i)_{i=1}^{n-1}) \nonumber
\end{align}
will be called \emph{restricted $(t,s)$ instances}.
\end{definition}
Note that, unlike the other values, the $t_i$ values can be
arbitrarily high. This will be useful later
(Lemma~\ref{lem:linear}). Note also that the optimum makespan of every
restricted instance is at most 1. For these instances, any algorithm
with approximation ratio less than $n$, must allocate all the $\T{i'}$
tasks to the $t$-player and every $\T{i}$ task either to the
associated $s_i$-player or the
$t$-player. 

\subsection{From affine minimizers for twin tasks to linear mechanisms}
\label{sec:from-affine-minim}

In this section, we use the characterization of mechanisms
(Theorem~\ref{theo:epschar}) for \emph{each pair of twin tasks} in an
instance $(t_i,\,s_i,\,t_{i'},\,s_{i'},\,t_{i,i'})_{i=1}^{n-1}$, to
derive an essential property of mechanisms for the restricted $(t,s)$
instances. By the characterization, there are mechanisms of twin
tasks, such as the one-dimensional, constant, or bundling tails of
relaxed affine minimizers, that may have non-affine boundaries between
different allocations. In this section, we show that mechanisms with
non-affine boundaries are also excluded, when we additionally require
that the mechanisms have small approximation ratio---on whole
instances, not just on a pair of twin tasks.

First we define the notion of boundaries that we consider in proof of
the lower bound.
\begin{definition}\label{def:critval} For a given allocation
  algorithm, considered on restricted $(t,s)$ instances, we call
  $\psi_i(s_i,s_{-i},t_{-i})$ a \emph{critical value} or
  \emph{boundary} if the $t$-player receives task $i$ when
  $t_i<\psi_i(s_i,s_{-i},t_{-i}),$ and does not receive task $i$ when
  $t_i>\psi_i(s_i,s_{-i},t_{-i})$ (for example in
  Figure~\ref{fig:shapesPure}, $\psi_1$ is shown with dashed lines).
\end{definition}

We claim below that a truthful allocation with a reasonable
approximation ratio (say, of at most $n$) for the restricted $(t,s)$
instances satisfies the following important property:
\begin{definition}[Linear mechanisms]
  An allocation algorithm for the restricted $(t,s)$ instances
  (Equation~\ref{def:ts instance}), is called \emph{linear} if the
  critical values for task $i$, $i=1,\ldots,n-1$ are truncated affine
  functions in $s_i$. In particular when
  \begin{align*}
  \psi_i(s_i,s_{-i},t_{-i}) = \max(0,\, \lambda_i(s_{-i},t_{-i}) \, s_i \, +
  \, \gamma_i(s_{-i},t_{-i})),
  \end{align*}
  for some $\lambda_i(s_{-i},t_{-i}) > 0$. We call a mechanism
  linear, if it uses a linear allocation algorithm.
\end{definition}

\begin{lemma} \label{lem:linear} A mechanism for the instances
  $(t_i,\,s_i,\,t_{i'},\,s_{i'},\,t_{i,i'})_{i=1}^{n-1}$
  (Equation~\ref{def:tt' instance}) with approximation ratio less than
  $n$ must be linear for the restricted $(t,s)$ instances.
\end{lemma}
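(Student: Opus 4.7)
The plan is to fix an arbitrary pair of twin tasks $\{T_i, T_{i'}\}$, freeze all other bids, and apply the two-player two-task characterization (Theorem~\ref{theo:epschar}) to the resulting sub-mechanism between the $t$-player and the $i$-th $s$-player. Concretely, I would fix $(s_{-i}, t_{-i})$ at arbitrary values consistent with a restricted $(t,s)$ instance (i.e.\ $s_{j} \in [0,1]$, $s_{j'} = n$, $t_{j'} = 0$, $t_{j,j'} = t_j$ for $j \neq i$), and regard $(t_i, t_{i'}, t_{i,i'})$ and $(s_i, s_{i'}) = (s_i, n)$ as free. Lemma~\ref{lem:restriction}, applied to the partition of $M$ into $\{T_i, T_{i'}\}$ and its complement, yields that the restriction of the global WMON allocation to these two tasks is WMON. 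Since every other $s$-player bids $\Theta \gg 4n^2$ on $T_i$ and $T_{i'}$ while the optimum makespan of the full instance is at most $1$, an approximation ratio below $n$ forces $T_i$ and $T_{i'}$ to go only to the $t$-player or to the $i$-th $s$-player. Theorem~\ref{theo:epschar} then applies and classifies the sub-mechanism as constant, one-dimensional, or relaxed affine minimizer.

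I would rule out the first two types using the approximation bound. In a one-dimensional sub-mechanism, one of the two occurring allocations is $\emptyset$ for the $t$-player, so the $s$-player takes both $T_i$ and $T_{i'}$ at cost $s_i + n \geq n$, while the optimum of at most $1$ is witnessed by sending $T_{i'}$ to the $t$-player and $T_i$ to the $s$-player, giving a ratio at least $n$. For a constant sub-mechanism, the threshold $c^\ast$ for the $t$-player receiving $T_i$ is independent of $s_i$: choosing $s_i$ arbitrarily small while keeping $t_i$ just below $c^\ast$ (or $t_i$ large in the degenerate case $c^\ast = \infty$, or $t_i = 0,\, s_i = 1$ in the degenerate case $c^\ast = 0$) produces an unbounded approximation ratio.

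This leaves the relaxed affine minimizer with positive multipliers $\mu_t, \mu_s$ and constants $\gamma_\alpha$. I would next exclude the bundling tail: when $\mu_s(s_i + n) < \min(\gamma_1, \gamma_2) - \gamma_\emptyset$, only $\emptyset$ and $\{T_i, T_{i'}\}$ occur for the $t$-player, and taking $t_i > \xi(\mu_s(s_i + n))/\mu_t$ forces $\emptyset$ and hence makespan at least $n$; thus the sub-mechanism is in its affine region for every $s_i \in [0,1]$. An analogous contradiction rules out the $t$-player's allocations $\{T_i\}$-alone and $\emptyset$ on the restricted slice (each would push $T_{i'}$ onto the $s$-player at cost $n$). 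Hence as $t_i$ varies, the $t$-player's allocation transitions directly from $\{T_i, T_{i'}\}$ to $\{T_{i'}\}$; by the affine-minimizer formulas this transition is given by $\mu_t t_i + \gamma_{12} = \mu_s s_i + \gamma_2$, so the critical value is $\psi_i(s_i, s_{-i}, t_{-i}) = \max\bigl(0,\, (\mu_s/\mu_t)\, s_i + (\gamma_2 - \gamma_{12})/\mu_t\bigr)$, which is linear in $s_i$ with positive slope $\lambda_i = \mu_s/\mu_t$; the parameters $\mu_t, \mu_s, \gamma_{12}, \gamma_2$ (hence $\lambda_i$ and the intercept) may depend arbitrarily on $(s_{-i}, t_{-i})$ through the fixed sub-mechanism.

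The main obstacle is the careful case analysis that rules out every non-affine allocation structure on the slice: one-dimensional sub-mechanisms, constant sub-mechanisms, the bundling tail, and the individual allocations $\{T_i\}$-only and $\emptyset$ in the affine region. Each needs a concrete witness within the submodular domain (with $t_{i,i'} = t_i + t_{i'}$ so the instance remains restricted) exhibiting either the $s$-player stuck with $T_{i'}$ at cost $n$ or the $t$-player stuck with an arbitrarily large $t_i$. Once these are eliminated, the linearity of $\psi_i$ is a direct reading of the affine-minimizer inequality, with the truncation at $0$ absorbing the range where the affine expression is negative and the $t$-player simply never takes $T_i$.
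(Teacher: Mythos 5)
Your proposal is correct and follows essentially the same route as the paper: restrict to a twin pair via Lemma~\ref{lem:restriction}, invoke Theorem~\ref{theo:epschar}, eliminate the non-affine cases by exhibiting makespan $\geq n$, and read off the linear boundary from the affine-minimizer objectives. The one noticeable stylistic difference is that the paper dispatches the one-dimensional and constant cases by quoting Lemma~\ref{lem:highapprox} (which constructs a bad two-player instance with makespan $\Omega(\sqrt{\Theta})$, possibly outside the restricted slice), whereas you re-derive the bad-ratio witnesses by hand within or near the restricted domain, exploiting $s_{i'}=n$ and the fact that $t_i$ is unbounded; you also spell out explicitly the elimination of the $\{T_i\}$-alone and $\emptyset$ allocations and the resulting formula $\psi_i = \max\bigl(0,\,(\mu_s/\mu_t)s_i + (\gamma_2-\gamma_{12})/\mu_t\bigr)$, which the paper leaves implicit in its concluding sentence. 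Both routes reach the same conclusion; yours is more self-contained at the cost of a slightly longer case analysis, and should take care to note (as the paper's step with $t_{i'}=\delta$ does, and as your argument implicitly uses through $s_{i'}=n$) that the bad-ratio witnesses are legitimate members of the domain over which the approximation hypothesis is assumed.
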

\begin{proof}
  By Lemma~\ref{lem:restriction}, the restriction of the mechanism for
  the twin tasks $\T{i}$, $\T{i'}$ is weakly monotone, and therefore
  by the characterization theorem (Theorem~\ref{theo:epschar}), it is
  either a relaxed affine minimizer, a one-dimensional, or a constant
  mechanism. We will show that if the approximation ratio is less than
  $n$, the mechanism cannot be one-dimensional, constant, or even
  relaxed affine minimizer with a bundling tail. The reason
  is that algorithms in all these cases are so inefficient that their
  makespan only for these two tasks is large enough to show a large
  approximation ratio for the entire instance.
  
  Let's first argue that it cannot be a one-dimensional or a constant
  mechanism. We fix all values of the other tasks as in restricted
  instances. By Lemma~\ref{lem:highapprox}, there are values for the
  twin tasks $\T{i}$, $\T{i'}$ for which the algorithm has makespan
  $\mu\geq\sqrt{\A}$ and approximation ratio at least $\sqrt{\A}$. Since
  the optimal makespan of the other tasks of a restricted instance is
  at most $1$, the approximation ratio is at least
  $\mu/(1+\mu/\sqrt{\A})\geq \sqrt{\A}/2\geq n$.

  As a result, the restriction to twin tasks $\T{i}$, $\T{i'}$ is a
  relaxed affine minimizer. We now argue that there should be no
  bundling part of the relaxed affine minimizer when the approximation
  ratio is less than $n$. Let us fix the values $s_j\in[0,1]$,
  $s_{j'}=n$ for all $j\in[n-1]$, and the values $t_j'=0$,
  $t_{j,j'}=t_j+t_{j'}$ for all $j\neq i$. We observe that the
  restriction of the mechanism in the cut $t_{-\{i,i'\}}$ (the
  2-dimensional partition when we also fix all other values except
  $t_i$ and $t_i'$) should definitely contain an area where the
  $t$-player gets task $i'$ but not $i$ (in particular, when
  $t_{i'}=\delta$, for some arbitrarily small $\delta$, and $t_i$ is
  very large). Otherwise the mechanism has approximation ratio at
  least $n$.
  
  This shows that in the domain of restricted $(t,s)$ instances, the
  mechanisms with approximation ratio less than $n$ are affine
  minimizers in every cut of pairs of twin tasks (even if in general,
  that is, for $s_{i'}<n$ they are relaxed affine minimizers).
  Therefore, the only mechanisms with approximation ratio less than
  $n$ are linear for the restricted $(s,t)$ instances.
\end{proof}

\subsection{Linearity}
\label{sec:lin}

From now on we focus on linear truthful mechanisms for restricted
$(t,s)$ instances. Establishing linearity of the boundaries
(Lemma~\ref{lem:linear}) is not directly useful, because the linear
coefficient of the boundary $i$ \emph{may depend on the other values
  of $t$}. The next lemma shows that this is not the case for the
scaling factor $\lambda_i$, although the $\gamma_i$ term may still
depend on the other values of $t$. Its proof is based on the
interaction of pairs of tasks $\T{i}$ and $\T{j}$. Note that these are
\emph{not} twin tasks but involve at least three players, the
$t$-player and the two associated $s$-players. Obtaining such
multiplayer statements is the bottleneck for a complete
characterization of mechanisms in multiplayer domains, and it is
perhaps the most crucial part of the proof.

\begin{lemma} \label{lem:linearity} For fixed $i$ and $j$, and for
  fixed $s_{-i}$, $t_{-ij}$, assume that $\psi_i(s_i,s_{-i},t_{-i})$
  is a truncated linear function of $s_i$, i.e.,
  $\psi_i(s_i,s_{-i},t_{-i})=\max(0,\, \lambda_i(s_{-i},t_{-i}) s_i+
  \gamma_i(s_{-i},t_{-i}))$, for some
  $\lambda_i(s_{-i},t_{-i})\geq 0$. Then $\lambda_i(s_{-i},t_{-i})$ is
  the same for all $t_j$ that satisfy $\psi_i(s_i,s_{-i},t_{-i})>0$.
\end{lemma}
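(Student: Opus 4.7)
The plan is to exploit the truthfulness of the $t$-player in the two-dimensional cut $(t_i, t_j)$, deriving a path-independence identity from which the conclusion follows by separating variables.

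First I would fix $s_i, s_{-i}, t_{-ij}$ and restrict attention to the cut where only $t_i$ and $t_j$ vary, with $t_{i'} = t_{j'} = 0$ and $t_{i,i'} = t_i$, $t_{j,j'} = t_j$ as prescribed by restricted $(t,s)$ instances. Because the $t$-player's valuation is additive across the pairs $\{i,i'\}$ and $\{j,j'\}$, Lemma~\ref{lem:restriction} implies that the allocation restricted to tasks $\{i,j\}$ is weakly monotone in $(t_i, t_j)$. The domain of this cut is a convex product of intervals, so by Saks--Yu the restriction is truthful, and the envelope theorem gives that the $t$-player's utility $U_0(t_i,t_j) = P_0(t_i,t_j) - t_i[i \in A_0] - t_j[j \in A_0] - \text{const}$ has gradient $-([i \in A_0], [j \in A_0])$ and is therefore path-independent.

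Using the hypothesized linear boundary $\psi_i(s_i, t_j) = \max(0, \lambda_i(t_j) s_i + \gamma_i(t_j))$ and letting $\psi_j(s_j, t_i)$ denote the analogous $t_j$-threshold above which the $t$-player loses task $j$ (well-defined by WMON of the $t$-player in $t_j$), I would compute $U_0(T, S) - U_0(0, 0)$ along the two paths $(0,0) \to (T,0) \to (T,S)$ and $(0,0) \to (0,S) \to (T,S)$ and equate the results. Each segment integrates the corresponding allocation indicator to a $\min$ with the appropriate threshold, yielding
\[
\min(T, \psi_i(s_i, 0)) + \min(S, \psi_j(s_j, T)) = \min(S, \psi_j(s_j, 0)) + \min(T, \psi_i(s_i, S)).
\]
Choosing $T$ and $S$ large enough to exceed every $\psi$-value in play (permissible since $t_i, t_j$ may be bid arbitrarily high while the boundaries remain finite in the linear regime), each $\min$ collapses to the respective threshold and the identity reduces to
\[
\psi_j(s_j, T) - \psi_j(s_j, 0) = \psi_i(s_i, S) - \psi_i(s_i, 0).
\]

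Since the left-hand side depends only on $T$ (with $s_j$ fixed) and the right-hand side only on $S$ (with $s_i$ fixed), and the identity holds for all large $T, S$ independently, both sides must be constant in their respective arguments. In particular $\psi_i(s_i, t_j)$ is independent of $t_j$ in this regime; substituting the linear form $\lambda_i(t_j) s_i + \gamma_i(t_j)$ and letting $s_i$ range over $[0,1]$ then forces both $\lambda_i(t_j) = \lambda_i(0)$ and $\gamma_i(t_j) = \gamma_i(0)$, which gives the lemma's conclusion (and in fact the slightly stronger statement that $\gamma_i$ is also constant).

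The main obstacle is ensuring this conclusion extends uniformly across all $t_j$ in the positive regime $\psi_i > 0$, rather than only the asymptotic large-$T, S$ regime used to simplify the $\min$ terms. Concretely, given any two target $t_j$ values one must produce common $T, S$ exceeding every threshold in the comparison, which is handled by taking $T, S$ slightly larger than the maximum of the finitely many thresholds involved. A further subtlety is the behavior of $\psi_j$ near its own truncation, where the clean form of the identity fails; however this does not affect the conclusion about $\lambda_i$ within the positive linear regime, which is exactly the range the lemma addresses.
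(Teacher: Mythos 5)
Your path-independence identity is correct, but the deductions you draw from it do not give the lemma. After the $\min$'s collapse you obtain $\psi_j(s_j,T)-\psi_j(s_j,0)=\psi_i(s_i,S)-\psi_i(s_i,0)$ only for $T,S$ \emph{beyond all the relevant thresholds}; separating variables then tells you that both sides equal a common constant $c$, i.e.\ $\psi_i(s_i,S)=\psi_i(s_i,0)+c$ for \emph{large} $S$. This constrains $\psi_i(s_i,\cdot)$ only at $t_j=0$ and at large $t_j$, and says nothing about the intermediate range of $t_j$ --- which is precisely where the boundary is non-constant (the slope-$\pm1$ segment in Figure~\ref{fig:upwards}), and which the lemma must cover since it asserts the conclusion for \emph{all} $t_j$ with $\psi_i>0$. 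Your proposed fix does not repair this: $S$ is simultaneously the endpoint of the integration path and the argument at which $\psi_i$ is evaluated, so you cannot take it ``slightly larger than the maximum of the thresholds'' and also equal to an arbitrary target value in the positive regime. If you instead run the rectangle between two arbitrary levels $a<b$, the two vertical segments contribute the $t_j$-integrals of the task-$j$ indicator at $t_i=0$ and at $t_i=T$, and their difference does not cancel; controlling it requires exactly the local geometric information your argument has discarded.

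There is also a sign that something has gone wrong: your endpoint --- $\psi_i(s_i,t_j)$ independent of $t_j$ and $\gamma_i(t_j)=\gamma_i(0)$ --- is false in general. For a quasi-bundling or quasi-flipping affine minimizer the constant is $c=\gamma_{12}+\gamma_{\emptyset}-\gamma_1-\gamma_2\neq 0$, the correct conclusion from your own identity for large $S$ is $\gamma_i(S)=\gamma_i(0)+c$, and $\psi_i$ genuinely varies with $t_j$; if your stronger claim held, every $(t_i,t_j)$ cut would be crossing, contradicting the characterization. The lemma deliberately claims only that the $s_i$-coefficient $\lambda_i$ is constant in $t_j$, while $\gamma_i$ may move (this is what Corollary~\ref{cor:sliding-boundaries} then exploits). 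The paper's proof is local: weak monotonicity of the $(t_i,t_j)$ cut forces $\psi_i(s_i,\cdot)$ to be piecewise linear in $t_j$ with slope in $\{0,1,-1\}$ (Figure~\ref{fig:shapesPure}); writing $\partial\psi_i/\partial t_j=(\partial\lambda_i/\partial t_j)\,s_i+\partial\gamma_i/\partial t_j$ and varying $s_i$ shows the slope would leave $\{0,1,-1\}$ unless $\partial\lambda_i/\partial t_j=0$ on each piece. To salvage your integral route you would need the rectangle identity between nearby levels together with this same slope constraint --- at which point you have reconstructed the paper's argument.
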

\begin{proof}
  To keep the notation simple, we drop all values except for $s_i$ and
  $t_j$.  Let 
  $K(s_i)=\{t_j\,:\,\psi_i(s_i,t_j)>0\}$ be the interval of
  interest. Note first that when we increase $s_i$, the interval of
  interest can only expand, because $\psi_i(s_i,t_j)$ is
  non-decreasing in $s_i$.

  Within interval $K(s_i)$, function $\psi_i(s_i,t_j)$ is positive and
  therefore equal to $\lambda_i(t_j)s_i+\gamma_i(t_j)$. Furthermore,
  weak monotonicity implies that as a function of $t_j$,
  $\psi_i(s_i,t_j)$ is a piecewise linear function with derivative
  (slope) in $\{0,1,-1\}$ (see for example, Figure~\ref{fig:shapesPure}).

  The piecewise linear function $\psi_i(s_i,t_j)$ is
  differentiable\footnote{It is not hard to see that if
    $\psi_i(s_i,t_j)$ is differentiable then so are $\lambda_i(t_j)$
    and $\gamma_i(t_j)$; alternatively, one could use small
    differences instead of derivatives.}  everywhere except perhaps of
  the at most two break points. We have
\begin{align*}
  \frac{\partial \psi_i(s_i,t_j)}{\partial t_j}&= \frac{\partial
                                                 \lambda_i(t_j)}{\partial
                                                 t_j} s_i +  \frac{\partial
                                                 \gamma_i(t_j)}{\partial
                                                 t_j}. 
\end{align*}
If $\partial \lambda_i(t_j) /\partial t_j\neq 0$, then by varying
$s_i$, the slope cannot stay in $\{0,1,-1\}$. We conclude that
$\partial \lambda_i(t_j) /\partial t_j = 0$ in each piece, which shows
that $\lambda_i(t_j)$ is independent of $t_j$ within each piece and
therefore independent of $t_j$ everywhere. 
\end{proof}

The following corollary is an essential tool for establishing the
lower bound.

\begin{corollary} \label{cor:sliding-boundaries} The constant parts of
  the piecewise linear function $\psi_i(s_i,s_{-i},t_{-i})$, as a
  function of $t_j$, are independent of $s_j$, for $j\neq
  i$. 
\end{corollary}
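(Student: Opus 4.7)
The plan is to derive Corollary~\ref{cor:sliding-boundaries} by extending the analysis used in the proof of Lemma~\ref{lem:linearity}, combined with a symmetric application of that lemma to task $j$. By Lemma~\ref{lem:linearity}, in the positive range of $\psi_i$ I may write
\[
  \psi_i(s_i, s_j, t_j) \;=\; \lambda_i(s_j)\, s_i \;+\; \gamma_i(s_j, t_j),
\]
where $\lambda_i$ is independent of $t_j$ (the other parameters are fixed and suppressed). On a constant piece of $\psi_i$ as a function of $t_j$, the term $\gamma_i(s_j, t_j)$ is locally constant in $t_j$; write its value there as $\Gamma(s_j)$, and set $\Lambda(s_j) := \lambda_i(s_j)\, s_i + \Gamma(s_j)$ for the value of $\psi_i$ on that piece. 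The goal is to show that $\Lambda(s_j)$ does not depend on $s_j$.

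Geometrically, a slope-$0$ piece of $\psi_i(t_j)$ appears, in the two-dimensional $(t_i, t_j)$ picture with $s$ fixed, as a vertical boundary segment at $t_i = \Lambda(s_j)$, and crossing this segment changes only the allocation of task $i$, not that of task $j$. First I would apply Lemma~\ref{lem:linearity} with the roles of $i$ and $j$ swapped to obtain an analogous decomposition $\psi_j(s_j, s_i, t_i) = \lambda_j(s_i)\, s_j + \gamma_j(s_i, t_i)$ with $\lambda_j$ independent of $t_i$. This would supply the companion structure for the task-$j$ boundary, which I can then couple with the $\psi_i$ structure in the $(t_i, t_j)$ picture.

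Next I would argue by contradiction: suppose $\Lambda(s_j)$ depends on $s_j$. Choosing values $s_j^1 \neq s_j^2$ with $\Lambda(s_j^1) \neq \Lambda(s_j^2)$ and picking $t_i$ strictly between them, switching from $s_j^1$ to $s_j^2$ would flip the $t$-player's allocation for task $i$ without any change in her own bid. Tracking how this flip interacts with the boundary $\psi_j(t_i)$ and its linear decomposition in $s_j$, I would then try to replay the derivative argument of Lemma~\ref{lem:linearity} but in the $s_j$ direction: the slope of the coupled boundary in $s_j$ must lie in a discrete set dictated by weak monotonicity, and if $\Lambda$ varies with $s_j$ then varying $s_i$ would force slopes outside this set, yielding the contradiction. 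From this both $\partial \lambda_i / \partial s_j = 0$ and $\partial \Gamma / \partial s_j = 0$ would follow.

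The hard part will be carrying this contradiction through rigorously. The difficulty is twofold: I must ensure that the entire bid profile remains in the positive ranges of both $\psi_i$ and $\psi_j$ throughout the $s_j$-variation, and I must handle the possibility that the slope-$0$ interval of $\psi_i(t_j)$ is itself a function of $s_j$, with its endpoints shifting or disappearing as $s_j$ changes. A piecewise local analysis, combined with the continuity and monotonicity of the linear coefficients inherited from weak monotonicity, should let the derivative argument of Lemma~\ref{lem:linearity} be replayed with $s_j$ in place of $t_j$, yielding that $\lambda_i(s_j)$ and $\Gamma(s_j)$ are locally constant on each constant piece, as required.
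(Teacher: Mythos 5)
There is a genuine gap. You correctly identify the one ingredient the paper's proof also uses --- applying Lemma~\ref{lem:linearity} with the roles of $i$ and $j$ swapped, so that $\psi_j$ is truncated linear in $s_j$ with a coefficient $\lambda_j$ independent of $t_i$ --- but you never extract the consequence that actually closes the argument. The right consequence is geometric: since $\lambda_j$ does not depend on $t_i$, changing $s_j$ shifts the entire graph of $\psi_j$ (viewed as a function of $t_i$ in the $(t_j,t_i)$ cut) by the \emph{same} amount $\lambda_j\,\Delta s_j$ for every $t_i$, i.e., a rigid translation in the $t_j$ direction. Hence the break points of $\psi_j$, as a function of $t_i$, stay at the same $t_i$-coordinates. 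The constant (horizontal) parts of $\psi_i$ sit exactly at those $t_i$-coordinates, because in the two-task WMON picture the horizontal segments of the task-$i$ boundary meet the diagonal segment at the corners of the task-$j$ boundary. So the heights of the constant parts of $\psi_i$ cannot move when $s_j$ changes. That is the whole proof; no contradiction argument is needed.

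Your substitute argument does not go through as described. First, the observation that changing $s_j$ can ``flip the $t$-player's allocation for task $i$ without any change in her own bid'' is not in tension with truthfulness: the $t$-player's allocation is allowed to depend on other players' bids, so this yields no contradiction by itself. Second, and more seriously, your plan to ``replay the derivative argument of Lemma~\ref{lem:linearity} but in the $s_j$ direction'' rests on the claim that $\partial\psi_i/\partial s_j$ must lie in a discrete set dictated by weak monotonicity. That discreteness (slopes in $\{0,1,-1\}$) is a consequence of WMON applied to the \emph{$t$-player's own} coordinates, i.e., it constrains $\partial\psi_i/\partial t_j$; it does not constrain the dependence of $\psi_i$ on another player's bid $s_j$, which is mediated only through $\lambda_j$, an arbitrary positive real. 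Without that discreteness the ``vary $s_i$ to force an impossible slope'' step has nothing to push against, so the contradiction never materializes. (A minor additional point: you aim to prove $\partial\lambda_i/\partial s_j=0$ and $\partial\Gamma/\partial s_j=0$ separately, which is stronger than the corollary, which only concerns the value of $\psi_i$ on its slope-$0$ pieces.)
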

\begin{proof}
  We fix all values except for tasks $i$ and $j$. By the previous
  lemma, when we change $s_j$, the boundary
  $\psi_j(s_j,s_{-j},t_{-j})$ is translated rectilinearly and
  therefore its break points remain the same. Since the constant parts
  of $\psi_i(s_i,s_{-i},t_{-i})$ are determined by the break points of
  $\psi_j(s_j,s_{-j},t_{-j})$, they also remain the same. See
  Figure~\ref{fig:upwards} for an illustration.
\end{proof}

\begin{corollary} \label{cor:sliding-boundaries2} The piecewise linear
  function $\psi_i(s_i,s_{-i},t_{-i})$ is either non-decreasing in
  $t_j$ or non-decreasing in $s_j$.
\end{corollary}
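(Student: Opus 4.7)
Proof plan:

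I would fix $s_i$, $s_{-i,j}$, and $t_{-i,j}$, and view $\psi_i$ as a function of the two remaining variables $(t_j,s_j)$. The plan is to apply the two-task characterization (Theorem~\ref{theo:epschar}) to the restriction of the mechanism to the tasks $\{i,j\}$, combining the $s_i$- and $s_j$-players---the only players besides the $t$-player that can receive $i$ or $j$ in the restricted $(t,s)$ domain---into a single virtual additive player with values bounded by $1$. Lemma~\ref{lem:restriction} shows that the restriction is WMON, so Theorem~\ref{theo:epschar} classifies it as either a relaxed affine minimizer, a one-dimensional mechanism (including bundling), or a constant mechanism.

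The structural observation driving the proof is that in every one of these cases the non-zero slopes of the piecewise linear profile $\psi_i(t_j)$ lie either entirely in $\{-1\}$ or entirely in $\{+1\}$, never both: a slope-$(-1)$ segment arises from a common boundary between the allocation regions $R_{ij}$ and $R_\emptyset$ (regardless of whether that boundary comes from the affine-minimizer body, the bundling tail, or a fully one-dimensional mechanism), while a slope-$(+1)$ segment arises from a common boundary between $R_i$ and $R_j$, and the sign of the supermodularity gap $\gamma_{ij}+\gamma_\emptyset-\gamma_i-\gamma_j$, which is a single constant across the cut, decides which one is realised. If $\psi_i$ has no $-1$-slope piece then it is non-decreasing in $t_j$, giving the first alternative of the statement.

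Otherwise $\psi_i(t_j)$ has only slopes in $\{0,-1\}$, and I would show that it is then non-decreasing in $s_j$. Corollary~\ref{cor:sliding-boundaries} already guarantees that the slope-$0$ pieces are independent of $s_j$; on each slope-$(-1)$ piece the boundary equation takes the form $\psi_i=c(s_i,s_j)-t_j$, where $c$ depends on $s_j$ only through a non-decreasing additive term (the $\mu_s s_j$ summand in the affine-minimizer formula, or the analogous increasing term in the bundling threshold $\xi(\mu_s(s_i+s_j))$), so $\psi_i$ is non-decreasing in $s_j$ on the piece. Assembling the two kinds of pieces yields the second alternative. I expect the most delicate step to be the mutual exclusion of $-1$ and $+1$ slopes, for without it a ``valley'' profile could combine both behaviours and violate both alternatives; this is exactly what forces the argument through the full two-task characterization rather than through WMON on the $t$-player alone.
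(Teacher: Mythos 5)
Your case split coincides with the paper's (the $(t_i,t_j)$ cut of the $t$-player is crossing/quasi-flipping, giving slopes in $\{0,+1\}$ and hence monotonicity in $t_j$, or quasi-bundling, giving slopes in $\{0,-1\}$ and hence monotonicity in $s_j$), but the way you justify the classification contains a genuine gap. You obtain the structure by applying Theorem~\ref{theo:epschar} to the restriction to tasks $\{i,j\}$ after ``combining the $s_i$- and $s_j$-players into a single virtual additive player.'' Tasks $i$ and $j$ are not twin tasks: they belong to two distinct strategic agents, and the two-player characterization needs weak monotonicity of the second player under \emph{joint} deviations $(s_i,s_j)\to(s_i',s_j')$. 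Truthfulness of player $i$ and of player $j$ separately imposes no such constraint, and there is no single payment function for the merged entity, so Theorem~\ref{theo:epschar} does not apply to this cut. This is exactly the three-player bottleneck that Lemma~\ref{lem:linearity} and Corollary~\ref{cor:sliding-boundaries} are built to circumvent. The same gap infects your second step: the boundary formula $\psi_i=c(s_i,s_j)-t_j$ with a non-decreasing $\mu_s s_j$ summand is read off the affine-minimizer form of a mechanism you have not shown to exist.

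The facts you need are available by legitimate and weaker means, which is what the paper does. The mutual exclusion of $+1$ and $-1$ slopes does \emph{not} require the two-player characterization: it follows from WMON of the $t$-player alone on the 2D cut $(t_i,t_j)$ (Lemma~\ref{lem:restriction} together with the standard three-shape picture of Figure~\ref{fig:shapesPure}), since a single WMON cut cannot contain both an $R_i|R_j$ diagonal and an $R_{ij}|R_\emptyset$ diagonal. So the step you flag as most delicate is in fact the easy one, and your claim that it forces the argument through the full characterization is backwards. For the quasi-bundling alternative, the correct source of monotonicity in $s_j$ is Lemma~\ref{lem:linearity} applied to $\psi_j$: increasing $s_j$ translates $\psi_j$ rectilinearly to the right, which by Corollary~\ref{cor:sliding-boundaries} leaves the horizontal parts of $\psi_i$ at the same height and shifts its slope-$(-1)$ diagonal to the right, hence upward at every fixed $t_j$. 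Replacing the merged-player appeal with these two observations repairs the proof.
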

\begin{proof}
When we fix all other values and consider the $t_{-\{i,j\}}$ cut, it
is either quasi-flipping or crossing, in which case
$\psi_i(s_i,s_{-i},t_{-i})$ is non-decreasing in $t_j$, or
quasi-bundling, in which case 2D geometry shows that it is
non-decreasing in $s_j$ (when $s_j$ decreases the diagonal part shifts
downwards, and the rectilinear parts remain fixed). See
Figure~\ref{fig:upwards} for an illustration.

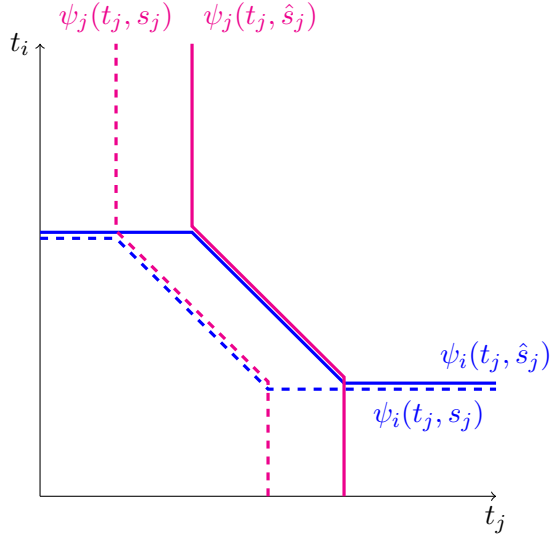
\begin{figure}
\centering
\begin{tikzpicture}

\draw[->] (0,0) -- (6,0) node[anchor=north] {$t_j$};
\draw[->] (0,0) -- (0,6) node[anchor=east] {$t_i$};
\draw[very thick, blue] (0,3.5) -- (2,3.5) -- (4,1.5) -- (6,1.5) node[anchor=south] {$\psi_i(t_j, \hat s_j)$} ;

\draw[very thick, magenta] (4,0) -- (4, 1.58) -- (2, 3.58) -- (2, 6)  node[anchor={south west}] {$\psi_j(t_j, \hat s_j)$} ;

\draw[very thick, magenta, dashed] (3,0) -- (3, 1.52) -- (1, 3.52) -- (1, 6) node[anchor=south] {$\psi_j(t_j, s_j)$} ;
 
\draw[very thick, blue, dashed] (0, 3.42) -- (1, 3.42) -- (3, 1.42) -- (6,1.42) node[anchor={north east}] {$\psi_i(t_j, s_j)$} ;

\end{tikzpicture}

\caption{Consider the dashed lines that show $\psi_i(t_j, s_j)$ and $\psi_j(t_j, s_j)$ (these functions may depend on other values but they play no role and we can ingore them). The regular lines show $\psi_i(t_j, \hat s_j)$ and $\psi_j(t_j, \hat s_j)$, for some $\hat s_j > s_j$. Notice that linearity implies that $\psi_i(t_j, \hat s_j)$ is a shift to the right of $\psi_i(t_j, s_j)$; in particular, the break point stay at the same height. As a consequence, the blue horizontal parts remain at the same height and the blue diagonal part shifts to the right. Therefore $\psi_i(t_j, \hat s_j) \geq  \psi_i(t_j, s_j)$.}
\label{fig:upwards}
\end{figure}

\end{proof}

\subsection{The main theorem}
\label{sec:proof-main}

\def\kkk{n-1}

We now have most of the ingredients to prove next theorem which
directly implies the main lower bound Theorem~\ref{thm:main}).

\begin{theorem} \label{thm:star} The approximation ratio of linear
  truthful algorithms on restricted $(s,t)$ instances
  (Equation~\ref{def:ts instance}) with $n$ machines is at least
  $\sqrt{n-1}$.
\end{theorem}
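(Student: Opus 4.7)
The plan is to argue by contradiction. Suppose a linear truthful mechanism $\mech$ achieves approximation ratio $c < \sqrt{\kkk}$ on all restricted $(t,s)$ instances (Equation~\ref{def:ts instance}). Since the optimum makespan of every restricted instance is at most $1$, $\mech$'s makespan must stay strictly below $\sqrt{\kkk}$ on every such input, and the goal is to produce a single instance where this fails.

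First, I would pin down the scaling coefficients $\lambda_i$ of the linear boundaries $\psi_i = \max(0, \lambda_i s_i + \gamma_i)$ identified in Lemma~\ref{lem:linearity}. If some $\lambda_i$ is much larger than $\sqrt{\kkk}$, a small shift of $s_i$ moves the threshold by more than $\sqrt{\kkk}$; choosing $t_i$ just below the shifted threshold forces the $t$-player to accept $\T{i}$ at cost exceeding $\sqrt{\kkk}$, violating the bound. Symmetrically, if $\lambda_i$ is much smaller than $1/\sqrt{\kkk}$, raising $s_i$ from $0$ to $1$ cannot displace $\T{i}$ from the $s$-player, leaving that player with load $1$ against an optimum close to zero. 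Hence every nontrivial slope sits in the narrow band $[1/\sqrt{\kkk}, \sqrt{\kkk}]$, while the degenerate case $\psi_i \equiv 0$ is handled as a boundary case.

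Second, I would globalise these single-coordinate attacks using Corollaries~\ref{cor:sliding-boundaries} and~\ref{cor:sliding-boundaries2}. Corollary~\ref{cor:sliding-boundaries2} labels each ordered pair $(i,j)$ according to whether $\psi_i$ is non-decreasing in $t_j$ or in $s_j$. I anticipate that a counting argument over the $\kkk(\kkk-1)$ ordered pairs isolates a substantial subset $I \subseteq \{1,\ldots,\kkk\}$ on which the updates can be carried out in a consistent order: for each $i \in I$, set $s_i = 1$ and push $t_i$ just past the current value of $\psi_i$, which forces the mechanism to hand every $\T{i}$, $i \in I$, back to its $s$-player and leaves that $s$-player with load $1$. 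A complementary move on $I^c$ (dropping $t_i$ just below the threshold) forces the $t$-player to retain enough small-cost tasks to accumulate load at least $|I|/\sqrt{\kkk}$. Balancing these two overloads against the slope band from the first step yields a loaded side of magnitude $\sqrt{\kkk}$ against an optimum bounded by $O(1)$, the desired contradiction.

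The main obstacle is the offset $\gamma_i(s_{-i}, t_{-i})$, which depends on every other coordinate and could otherwise undo earlier steps of the construction. The two corollaries exist precisely to tame this interaction: the breakpoint-stability of Corollary~\ref{cor:sliding-boundaries} ensures that varying $s_j$ leaves the $t_j$-breakpoints of $\psi_i$ fixed, and the monotone direction of Corollary~\ref{cor:sliding-boundaries2} ensures that any indirect shift moves in a known direction. Updating the coordinates of $I$ in an order compatible with its orientation should therefore compose into one globally consistent bad instance; I expect the most delicate bookkeeping to be tracking how the individual slopes $\lambda_i$ and the chosen orientations combine to yield exactly the $\sqrt{\kkk}$ bound in the final makespan calculation.
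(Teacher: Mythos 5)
Your plan is genuinely different from the paper's and, as written, has gaps at each of the three stages you describe, so it does not yet constitute a proof.

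The paper organizes the argument around a notion you don't use: \emph{$s$-inefficient} instances, i.e., restricted instances with at least one non-trivial task where every non-trivial task $i$ has $s_i/t_i > \sqrt{n-1}$ yet the mechanism gives all such tasks to the $s$-player. It then splits into two cases. If no $s$-inefficient instance exists, the instance with $t_i = 1/\sqrt{n-1} - \delta/n$ and $s_i = 1$ for all $i$ forces either the $t$-player to carry total load $\approx \sqrt{n-1}$, or one can apply Lemma~\ref{lemma:tool} to produce an $s$-inefficient instance after all, a contradiction. If $s$-inefficient instances do exist, the heart of the proof is Lemma~\ref{lem:critical}: a minimal-counterexample argument showing that from an $s$-inefficient instance with $\ge 2$ non-trivial tasks one can trivialize one task (using Corollaries~\ref{cor:sliding-boundaries} and~\ref{cor:sliding-boundaries2} to control how $\psi_2$ shifts, with a careful case analysis over the quasi-flipping and quasi-bundling cuts and the sign of $\gamma_1$), so there is an $s$-inefficient instance with exactly one non-trivial task, which gives ratio $\ge \sqrt{n-1}$ immediately.

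Your first step — confining $\lambda_i$ to $[1/\sqrt{n-1}, \sqrt{n-1}]$ — is not sound as stated: the quantity that controls whether a single task forces a bad ratio is the threshold value $\psi_i(s_i) = \max(0, \lambda_i s_i + \gamma_i)$ for $s_i \in [0,1]$, not the slope alone. A huge $\lambda_i$ with $\gamma_i \approx -\lambda_i$ yields a perfectly small threshold, so nothing is wrong with the mechanism there; conversely a tiny $\lambda_i$ with large positive $\gamma_i$ gives a large threshold. What you actually need (and what the paper's case analysis extracts implicitly) is control on $\psi_i$ \emph{jointly with} the surrounding configuration, because $\gamma_i$ depends on $s_{-i}, t_{-i}$.

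The second step is where the real content should be, but "a counting argument over the $n(n-1)$ ordered pairs isolates a substantial subset $I$" is an assertion, not an argument. The difficulty — which you correctly identify but do not resolve — is exactly that changing $(s_j, t_j)$ for one task perturbs every other $\psi_i$ through $\gamma_i(s_{-i}, t_{-i})$, and you must show your sequence of moves composes consistently. The paper's Lemma~\ref{lem:critical} exists precisely to do this bookkeeping: it fixes a minimal bad instance and shows one non-trivial task can be trivialized while, thanks to the rectilinear-shift property of Corollary~\ref{cor:sliding-boundaries} and the monotone-direction dichotomy of Corollary~\ref{cor:sliding-boundaries2}, task $2$ remains with the $s$-player; the remaining non-trivial tasks are then repaired via Lemma~\ref{lemma:tool}. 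Nothing in your sketch replaces this inductive step, and the closing "balancing" computation producing a loaded side of $\sqrt{n-1}$ is likewise undeveloped. I'd recommend looking at the $s$-inefficiency dichotomy and the minimal-instance reduction: the counting you anticipate is neither needed nor obviously carried out.
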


To prove this theorem, we fix some linear truthful algorithm and we
focus on a particular set of instances, which we call
$s$-inefficient. The set of $s$-inefficient instances depends on the
linear algorithm and each instance consists of two types of tasks:
either a task is unimportant (i.e., it has very small value for one of
the machines), or its $s$ value is significantly higher than its $t$
value, yet the algorithm allocates the task to the $s$-player.

\begin{definition}[$s$-inefficient instances] Let's call a task $i$
  {\em trivial} when either $s_i=0$ or $t_i\in (0,\delta_0]$, for some
  fixed (sufficiently small) $\delta_0$. We call a restricted $(t,s)$
  instance \emph{$s$-inefficient} for a mechanism, if it contains at
  least one non-trivial task, every non-trivial task $i$ satisfies
  $s_i/t_i > \sqrt{\kkk}$, and the mechanism allocates all non-trivial
  tasks to the $s$-player.
\end{definition}

\emph{The heart of the proof is to show that if the set of
  $s$-inefficient instances is non-empty, there exists an
  $s$-inefficient instance with exactly one non-trivial task}. From
this, it immediately follows that the algorithm has approximation
ratio at least $\sqrt{\kkk}$.

However, it may be that for a given linear truthful algorithm there
are no $s$-inefficient instances. But then we can use weak
monotonicity to easily derive a $\sqrt{\kkk}$ lower bound on the
approximation ratio as the following lemma shows.
\begin{lemma}
  If for a given truthful algorithm the set of $s$-inefficient instances
  is empty, then its approximation ratio is at least $\sqrt{\kkk}$.
\end{lemma}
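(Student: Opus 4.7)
The plan is to exhibit a restricted instance on which the mechanism has approximation ratio at least $\sqrt{n-1}$. Consider the one-parameter family of symmetric instances $I^{\star}(\tau)$ given by $s_i=1$ and $t_i=\tau$ for every $i\in[n-1]$. For $\tau<1/\sqrt{n-1}$, every task is non-trivial and satisfies $s_i/t_i>\sqrt{n-1}$, so by the hypothesis the mechanism assigns at least one task to the $t$-player. Let $f(\tau)$ denote the number of tasks assigned to the $t$-player at $I^{\star}(\tau)$; weak monotonicity (applied to the $t$-player) yields that $f$ is non-increasing in $\tau$. Define $\tau^{\star}:=\sup\{\tau<1/\sqrt{n-1}: f(\tau)=n-1\}$.

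I would then distinguish two cases. If $\tau^{\star}=1/\sqrt{n-1}$, then for every $\tau<1/\sqrt{n-1}$ the mechanism assigns all $n-1$ tasks to the $t$-player, so its makespan equals $(n-1)\tau$. The optimum makespan at $I^{\star}(\tau)$ equals $\min((n-1)\tau,1)$; for $n\geq 3$ and $\tau>1/(n-1)$, which holds for $\tau$ close to $1/\sqrt{n-1}$, this is $1$. Letting $\tau\to 1/\sqrt{n-1}^-$, the approximation ratio approaches $\sqrt{n-1}$, proving the lemma.

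For the other case $\tau^{\star}<1/\sqrt{n-1}$, I would switch to the family of asymmetric instances $J_k$ with $s_i=1$ and $t_i=1/(k\sqrt{n-1})$ for $i\leq k$, and $s_i=0$, $t_i=H$ large for $i>k$. At each $J_k$, the trivial tasks must be assigned to the respective $s$-players for any mechanism with bounded approximation, the optimum makespan equals $1/\sqrt{n-1}$, and by the hypothesis at least one non-trivial task is on the $t$-player. If for some $k$ the mechanism places at least one of the $k$ non-trivial tasks on an $s$-player, the mechanism's makespan is $1$, yielding ratio $\sqrt{n-1}$ and completing the proof. Otherwise the mechanism assigns all $k$ non-trivial tasks to the $t$-player at every $J_k$; I would then use linearity (Lemma~\ref{lem:linearity}) together with the monotonicity of the critical boundaries (Corollaries~\ref{cor:sliding-boundaries} and \ref{cor:sliding-boundaries2}) to propagate this ``all-on-$t$'' behaviour to the symmetric instance $I^{\star}(\tau)$ for $\tau$ just below $1/\sqrt{n-1}$, contradicting the assumption $\tau^{\star}<1/\sqrt{n-1}$.

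The main obstacle is the propagation step in the second case. Moving between $J_k$ and the symmetric instance $I^{\star}$ simultaneously changes $s_j$ from $0$ to $1$ and $t_j$ from $H$ down to $\tau$ for each newly introduced non-trivial task $j$, and by Corollary~\ref{cor:sliding-boundaries2} the boundary $\psi_i$ is only guaranteed to be non-decreasing in at least one of these variables. The argument therefore requires a careful case analysis (monotonicity in $s_j$ versus in $t_j$), the construction of intermediate instances where only one of the two variables is altered at a time, and repeated applications of Lemma~\ref{lemma:tool} to ensure that tasks already on the $t$-player cannot be dislodged; in the harder ``non-decreasing in $s_j$'' subcase one would first raise $s_j$ while keeping $t_j=H$ (which can only push $\psi_i$ up), and then lower $t_j$ down to $\tau$ using Lemma~\ref{lemma:tool} together with the constraint that the resulting intermediate allocation must still satisfy the no-$s$-inefficient hypothesis.
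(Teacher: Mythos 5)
Your Case~1 is fine and mirrors the first half of the paper's argument: on the symmetric instance with $t_i$ just below $1/\sqrt{n-1}$ and $s_i=1$, if all $n-1$ tasks sit on the $t$-player the ratio tends to $\sqrt{n-1}$. But Case~2 is a genuine gap, and moreover an unnecessary detour. Once you know that at $I^\star(\tau)$ (for some $\tau$ slightly below $1/\sqrt{n-1}$) at least one task sits on an $s$-player and, by the no-$s$-inefficient-instance hypothesis, at least one task sits on the $t$-player, you do not need the asymmetric family $J_k$, nor Lemma~\ref{lem:linearity}, nor the sliding-boundary corollaries, nor any propagation across instances. Instead, apply exactly Lemma~\ref{lemma:tool} once: lower the $t$-values of the tasks currently on the $t$-player into $(0,\delta_0]$ (making them trivial), and raise the $t$-values of the tasks currently on the $s$-players slightly while keeping them below $1/\sqrt{n-1}$. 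Lemma~\ref{lemma:tool} guarantees the $t$-player's allocation is unchanged, so the resulting instance has at least one non-trivial task, every non-trivial task $i$ still has $s_i/t_i>\sqrt{n-1}$, and every non-trivial task is on an $s$-player. That is an $s$-inefficient instance, contradicting the hypothesis. This is precisely the paper's argument.

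The propagation scheme you sketch for Case~2 does not work as stated: moving from $J_k$ to $I^\star(\tau)$ requires simultaneously dropping $t_j$ from $H$ to $\tau$ and raising $s_j$ from $0$ to $1$, and Corollary~\ref{cor:sliding-boundaries2} only gives you monotonicity of $\psi_i$ in one of $t_j$ or $s_j$, not both, with the direction depending on the type of cut and potentially changing as the intermediate instance changes. You acknowledge this obstacle, but do not close it. In addition, Lemma~\ref{lem:linearity} and the corollaries are stated for linear truthful algorithms; the present lemma is stated for an arbitrary truthful algorithm (linearity is only established for algorithms of approximation ratio below $n$, which is what you are trying to prove here), so invoking them creates a circularity unless you first reprove the ratio bound $<n$, which is not done. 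The direct perturbation via Lemma~\ref{lemma:tool} avoids all of this.
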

\begin{proof}
  Towards a contradiction, consider an algorithm that has
  approximation ratio $\sqrt{\kkk}-\delta$, for some $\delta>0$, for
  which the set of $s$-inefficient instances is empty. We consider the
  instance with $t_i=\alpha=1/\sqrt{\kkk}-\delta/n$ and $s_i=1$,
  for all $i\in[\kkk]$.
\begin{align}\label{eq:101}\left[\begin{array}{cccc}
\alpha & \alpha & \cdots & \alpha \\
1 & & &\\
& 1 & &\\
&&\vdots&\\
&&& 1
\end{array}\right]\end{align}
Note that at least one task is allocated to the $s$-players, because
if all tasks are assigned to the $t$-player, the makespan is
$(\kkk)\alpha=\sqrt{\kkk}-(\kkk)\delta/n>\sqrt{\kkk}-\delta$, the optimum
makespan is $1$, and the approximation ratio is strictly greater than
$\sqrt{\kkk}-\delta$.

Now for every task $i$ which is allocated to the $t$-player, we lower
its value from $\alpha$ to some small value in $(0,\delta_0]$ and
increase the $t$ value of every other task to
$\alpha+\delta/(2n)<1/\sqrt{\kkk}$. By weak monotonicity
(Lemma~\ref{lemma:tool}), the allocation of the tasks for the
$t$-player remains the same. Since at least one task is allocated to
the $s$-players, we end up with an $s$-inefficient instance, a
contradiction.
\end{proof}

The proof of the next lemma (Lemma~\ref{lem:critical}), which provides
a very useful property of linear mechanisms, is the most critical part
of the proof. It shows that linear weakly monotone algorithms satisfy
a locality property, that bears some resemblance to the locality
property in~\cite{NR01}. But unlike~\cite{NR01}, our proof does not
assume this property, but it derives it from weak monotonicity for the
special class of instances that we consider.

\begin{lemma} \label{lem:critical}
  If for a given linear truthful algorithm the set of $s$-inefficient
  instances is non-empty, then there is an $s$-inefficient instance
  with exactly one non-trivial task.
\end{lemma}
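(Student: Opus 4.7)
The plan is to fix a single non-trivial task $i^*$ in any given $s$-inefficient instance and then systematically trivialize every other non-trivial task one by one, always choosing the modification that cannot increase the boundary $\psi_{i^*}$ for task $i^*$. The key preliminary observation is that in an $s$-inefficient instance, every non-trivial task $i$ satisfies $t_i>\psi_i$ (this is exactly what it means for the $s$-player to receive task $i$); so as long as we can shrink the set of non-trivial tasks without raising $\psi_{i^*}$, task $i^*$ will remain allocated to its $s$-player, and the other $s$-inefficiency conditions for $i^*$ are automatically preserved since $s_{i^*}$ and $t_{i^*}$ are untouched throughout.

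Concretely, I would start from any $s$-inefficient instance $(t,s)$ with set of non-trivial tasks $I$; the case $|I|=1$ is immediate, so assume $|I|\geq 2$. Fix an arbitrary $i^*\in I$ and process each $j\in I\setminus\{i^*\}$ in turn. At each step apply Corollary~\ref{cor:sliding-boundaries2} to the pair $(i^*,j)$ at the current values: by the dichotomy, either $\psi_{i^*}$ is non-decreasing in $t_j$, in which case I lower $t_j$ to some value in $(0,\delta_0]$; or $\psi_{i^*}$ is non-decreasing in $s_j$, in which case I set $s_j:=0$. In either case task $j$ becomes trivial by definition and $\psi_{i^*}$ does not increase. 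After processing every $j\in I\setminus\{i^*\}$, the ratio $s_{i^*}/t_{i^*}>\sqrt{n-1}$ is unchanged and the inequality $t_{i^*}>\psi_{i^*}$ that held at the start is still satisfied, so $i^*$ is still allocated to its $s$-player. The resulting restricted $(t,s)$ instance is therefore $s$-inefficient and has exactly one non-trivial task.

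The main difficulty, and the reason to single out one surviving task $i^*$ rather than trying to decide a uniform coordinate (either all $t_j$ or all $s_j$) to lower, is that the dichotomy in Corollary~\ref{cor:sliding-boundaries2} can resolve differently for different pairs $(i,j)$: there need not exist any single coordinate whose decrease is simultaneously safe for \emph{every} surviving non-trivial task. By committing to just one target $i^*$ we only need to handle one boundary, one pair at a time, which makes the sequential argument go through cleanly. A small bookkeeping point to watch is that after modifying $t_{j_1}$ or $s_{j_1}$ the ``other fixed values'' referenced by the next invocation of the corollary at pair $(i^*,j_2)$ have changed, so the corollary must be reapplied at the updated configuration at every step; this is unproblematic since the corollary holds for every fixed setting of the other coordinates.
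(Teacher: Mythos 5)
Your proposal is correct in spirit and takes a genuinely different route from the paper's. The paper argues by choosing an $s$-inefficient instance with a \emph{minimum} number of non-trivial tasks and deriving a contradiction by eliminating a single task, whereas you fix one surviving task $i^*$ and greedily trivialize the remaining non-trivial tasks one by one, at each step choosing the coordinate (either $t_j$ or $s_j$) whose decrease the dichotomy of Corollary~\ref{cor:sliding-boundaries2} declares safe for $\psi_{i^*}$. This buys two simplifications. First, since you only care about $i^*$'s allocation, you never need the final post-processing step in the paper's proof that invokes Lemma~\ref{lemma:tool} to repair the allocation of the other non-trivial tasks (and the footnoted $t$-perturbation that comes with it). Second, in the quasi-bundling case you simply set $s_j := 0$, which trivializes $j$ without touching $t_j$; the paper instead distinguishes three subcases $(a),(b),(c)$ depending on the sign of $\gamma_1$ and whether $\psi_1>0$, precisely because it only lowers $s_1$ to the truncation point $-\gamma_1/\lambda_1$ and must then \emph{also} lower $t_1$ to make the task trivial, and that change in $t_1$ interacts adversely with $\psi_2$ (in the quasi-bundling case $\psi_2$ is non-increasing in $t_1$), which forces the careful geometric argument of Figure~\ref{fig:shift-quasibundling}.

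The one place you are leaning harder on the corollary than the paper does is the global reach of ``non-decreasing in $s_j$.'' When $\gamma_j<0$, pushing $s_j$ below $-\gamma_j/\lambda_j$ truncates $\psi_j$ to $0$; the paper deliberately stops at that truncation point (cases $(b),(c)$) and then does a hands-on geometric check, presumably because the rectilinear-shift reasoning behind Corollaries~\ref{cor:sliding-boundaries} and~\ref{cor:sliding-boundaries2} is stated only for the non-truncated regime ($\psi_j>0$). If you want your argument to be watertight rather than relying on the corollary's statement being read globally, you should add a line noting why monotonicity of $\psi_{i^*}$ in $s_j$ persists past that truncation point. The check is short: by Corollary~\ref{cor:sliding-boundaries} the high-horizontal height $P_{12}-P_j$ and the low-horizontal height $P_{i^*}$ of $\psi_{i^*}(\cdot,t_j)$ are independent of $s_j$; once $P_j$ hits $0$ it stays $0$, hence $P_{12}=P_{12}-P_j$ is frozen, the diagonal part of $\psi_{i^*}$ stops moving, and $\psi_{i^*}$ is in fact constant in $s_j$ on $[0,-\gamma_j/\lambda_j]$. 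Also note in passing that the cut type (quasi-bundling vs.\ quasi-flipping) cannot flip as you slide $s_j$, since it is governed by the comparison of $P_{i^*}$ with $P_{12}-P_j$, both of which are $s_j$-independent. With these two small remarks spelled out, your iterative argument is complete and, to my reading, cleaner than the paper's.
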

\begin{proof}
  Fix a linear truthful algorithm and consider an $s$-inefficient instance
  $(t,s)$ with the minimum number of non-trivial tasks. If the number
  of non-trivial tasks is at least two, let us assume without loss of
  generality that tasks $1$ and $2$ are non-trivial. We will derive a
  contradiction by reducing the number of non-trivial tasks.

  Consider the boundary function of the first task,
  $\psi_1(s_1,s_{-1},t_{-1})=\max(0, \lambda_1(s_{-1},t_{-1})
  s_1+\gamma_1(s_{-1},t_{-1})$, for some positive $\lambda_1$. The
  crux of the matter is that we can reduce the value of $s_1$ to $0$
  or $t_1$ to at most $\delta_0$, and guarantee that the $t$-player will
  keep not getting the second task. This guarantees that the second
  task is non-trivial and is given to the $s$-player, while the first
  task becomes trivial.

  If $\psi_2(s_2,s_{-2},t_{-2})$ as a function of $t_1$ is
  non-decreasing (i.e., the $t_{-\{1,2\}}$ cut is quasi-flipping or
  crossing), we set $t_1^{*}\in(0,\delta_0]$. Since
  $\psi_2(s_2,s_{-2},t_{-2})$ is non-decreasing in $t_1$
  (Corollary~\ref{cor:sliding-boundaries2}), in the new instance the
  second task is still allocated to the $s$-player
  (Figure~\ref{fig:shift-quasiflipping}).

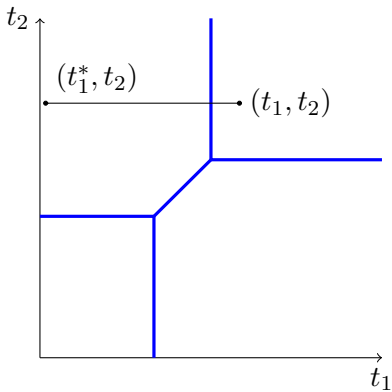
\begin{figure}
\centering
\begin{tikzpicture}[scale=0.75]

\draw[->] (0,0) -- (6,0) node[anchor=north] {$t_1$};
\draw[->] (0,0) -- (0,6) node[anchor=east] {$t_2$};
\draw[very thick, blue] (0,2.5) -- (2,2.5) -- (2,0) ;
\draw[very thick, blue] (2,2.5) -- (3,3.5) ;
\draw[very thick, blue] (3,6) -- (3,3.5) -- (6,3.5);

\filldraw 
  (3.5,4.5) circle (1pt) node [anchor=west] {$(t_1,t_2)$} --
  (0.1,4.5) circle (1pt) node [anchor={south west}] {$(t_1^*,t_2)$};
\end{tikzpicture}
\caption{When the cut is quasiflipping and neither task is given to
  the $t$-player in the allocation for $(t_1,t_2)$, then the second
  task is still not allocated to the $t$-player in the allocation for
  $(t_1*\approx 0,t_2)$.}
\label{fig:shift-quasiflipping}
\end{figure}

Otherwise (i.e., the $t_{-\{1,2\}}$ cut is quasi-bundling), by
Corollary~\ref{cor:sliding-boundaries}, function
$\psi_2(s_2,s_{-2},t_{-2})$ is non-decreasing in $s_1$. In this case,
we change the instance as follows (see
Figures~\ref{fig:shift-quasibundling} for an illustration of the first
two cases):
  \begin{enumerate}
  \item[a)] if $\gamma_1(s_{-1},t_{-1})\geq 0$, we set $s_1^{*}=0$ and $t_1^{*}=t_1$, 
  \item[b)] otherwise, if $\psi_1(s_1,s_{-1},t_{-1})>0$, we set
    $s_1^{*}=-\gamma_1(s_{-1},t_{-1})/\lambda_1(s_{-1},t_{-1})$ and
    $t_1^{*}\in (0,\delta_0]$, small enough to make task $1$ trivial,
  \item[c)] and if $\psi_1(s_1,s_{-1},t_{-1})=0$, we set $s_1^{*}=s_1$ 
    and $t_1^{*}\in (0,\delta_0]$, small enough to make task $1$ trivial.
  \end{enumerate}
  In the first two cases, we lower the $s_1$ value to $s_1^*$ until
  either $s_1^*$ becomes $0$ (case a), or
  $\psi_1(s_1^*,s_{-1},t_{-1})$ becomes $0$ and then we set
  $t_1=t_1*\approx 0$ (case b). The third case is when
  $\psi_1(s_1,s_{-1},t_{-1})$ is already $0$.

  In all cases, the first task becomes a trivial task with the new
  values. Furthermore, in all cases the new value of $s_1$ is not
  greater than the original value: $s_1^{*}\leq s_1$. This is clearly
  true in the first and third case. To see that this is true in the
  second case, observe that the boundary $\psi_1(s_1,s_{-1},t_{-1})$,
  which is a non-decreasing function on $s_1$, moved from a positive
  value to $\psi_1(s_1^{*},s_{-1},t_{-1})=0$.
  
  We now argue that \emph{the second task is still given to the
    $s$-player after the change}. The argument is based on
  Corollary~\ref{cor:sliding-boundaries}, which guarantees that the
  changes can only shift the boundary $\psi_2(s_2,s_{-2},t_{-2})$
  rectilinearly: in Figure~\ref{fig:shift-quasibundling}, the slanted
  part moves only horizontally.

  For case (a), by Corollary~\ref{cor:sliding-boundaries}, the
  boundary $\psi_2(s_2,s_{-2},t_{-2})$ did not increase and therefore
  the second task is still given to the $s$-player (left part of
  Figure~\ref{fig:shift-quasibundling}).  For the other two cases,
  this is not sufficient because $t_1$ changed and therefore $t_{-2}$
  changed as well. For case (b), the change shifts the slanted
  boundary (right part of Figure~\ref{fig:shift-quasibundling}). In
  its new position, it crosses the boundary of the positive orthant at
  $(0,t_2)$ which is dominated by the point $(t_1^*,t_2)$. Therefore,
  the $t$-player gets neither task, and shows that the $s$-player gets
  the second task. Case (c) is simpler and similar to the second one;
  the difference is that the boundary starts at the leftmost position
  and it does not move at all (in the right part of
  Figure~\ref{fig:shift-quasibundling}, the solid and dashed lines
  coincide and they may closer to the beginning of axes).

\begin{figure}
\centering
\begin{tikzpicture}

\draw[->] (0,0) -- (6,0) node[anchor=north] {$t_1$};
\draw[->] (0,0) -- (0,6) node[anchor=east] {$t_2$};
\draw[very thick, blue] (0,3.5) -- (2,3.5) -- (4,1.5) -- (6,1.5) node[anchor=south] {$\psi_2(t_1, s_1)$} ;
\draw[very thick, blue ] (4,0) -- (4, 1.5) -- (2, 3.5) -- (2, 6) node[anchor=west] {$\psi_1(t_2, s_2)$} ;

\filldraw 
  (3.5,3) circle (1pt) node [anchor=west] {$(t_1,t_2)$} 
  ;

\draw[very thick, blue, dashed] (0, 3.45) -- (1, 3.45) -- (3, 1.45) -- (6,1.45) node[anchor=north] {$\psi_2(t_1, s_1^*=0)$} -- (6, 1.45) ;
\draw[very thick, blue, dashed]  (3,0) -- (3, 1.45) ;
\draw[very thick, blue, dashed]  (1, 3.45) -- (1, 6) ;

\end{tikzpicture}
\begin{tikzpicture}

\draw[->] (0,0) -- (6,0) node[anchor=north] {$t_1$};
\draw[->] (0,0) -- (0,6) node[anchor=east] {$t_2$};
\draw[very thick, blue] (0,3.5) -- (2,3.5) -- (4,1.5) -- (6,1.5) node[anchor=south] {$\psi_2(t_1, s_1)$} ;
\draw[very thick, blue ] (4,0) -- (4, 1.5) -- (2, 3.5) -- (2, 6) node[anchor=west] {$\psi_1(t_2, s_2)$} ;

\filldraw 
  (3.5,3) circle (1pt) node [anchor=west] {$(t_1,t_2)$} --
  (0.1,3) circle (1pt) node [above right=-0.1] {$(t_1^*\approx 0,t_2)$}
  ;

\draw[very thick, blue, dashed] (0,3) -- (1.55, 1.45) -- (2.75, 1.45) node[anchor=north] {$\psi_2(t_1, s_1^*)$} -- (6, 1.45) ;
\draw[very thick, blue, dashed] (1.55, 1.45) -- (1.55, 0) ;
  
\end{tikzpicture}

\caption{Left side case (a), when $s_i*$ decreases to $0$: the
  boundary $\psi_2(t_1,s_1^*)$ is lower than the boundary
  $\psi_2(t_1,s_1^*)$. Both tasks are still allocated to the
  $s$-player. Right side case (b), when the $s_i$ decreases to
  $s_1^{*}=-\gamma_1(s_{-1},t_{-1})/\lambda_1(s_{-1},t_{-1})$ and
  $t_1$ changes to $t_1^*\approx 0$. The new $(t_1^*,t_2)$ remains in
  the region in which both tasks are allocated to the $s$-player.}
\label{fig:shift-quasibundling}
\end{figure}
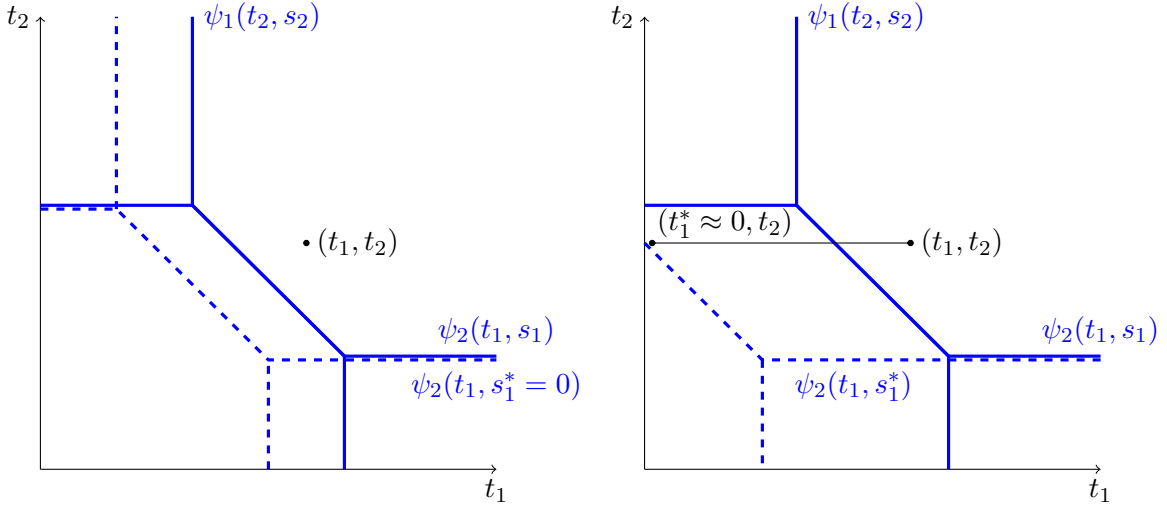

  The change of values of the first task did not change the allocation
  of the second task, but it may have changed the allocation of the
  remaining non-trivial tasks. But we can use weak monotonicity to
  further change the instance to obtain an $s$-inefficient instance. If
  some non-trivial tasks changed allocation and were given to the $t$
  player, we reduce their values to $0$. This will make them trivial
  tasks and, by weak monotonicity (Lemma~\ref{lemma:tool}), preserve
  the allocation of the first player for the other non-trivial
  tasks\footnote{Strictly speaking, to guarantee that the allocation
    remains the same for the non-trivial tasks, we need to increase
    slightly their $t$ value. This is possible without violating the
    constraint $s_i/t_i>\sqrt{n-1}$, which shows that the resulting
    instance is also $s$-inefficient.}.
  The resulting instance is $s$-inefficient and has fewer non-trivial
  tasks, a contradiction.
\end{proof}

The proof of the main result of this section follows directly from the
last two lemmas.

\begin{proof}[Proof of Theorem~\ref{thm:star}]
  By the last two lemmas, either a linear truthful algorithm has
  approximation ratio at least $\sqrt{n-1}$, or there exists an
  $s$-inefficient instance with one non-trivial task. The
  approximation ratio of such an instance is trivially at least
  $\sqrt{n-1}$, and the proof is complete.
\end{proof}

\bibliographystyle{plain} 

\newpage

\appendix

\part*{Appendix: All truthful mechanisms for 2 tasks and 2 machines}

\section{Additive valuations for 2 tasks and 2 machines}
\label{sec:add}

\subsection{Basic concepts and notation}

In this section we characterize WMON allocations for two tasks and two players (called $t$-player, and $s$-player) with additive valuations $t=(t_1, t_2)$, and $s=(s_1,s_2)$, so that both $s_1$ and $s_2$ are bounded by an arbitrarily large but fixed value $B.$

Let $(A,P)$ be  a truthful mechanism, where $A$ is the WMON allocation function, and $P$ denotes the payment function. For input $(t,s)$ the allocation is $A(t,s).$
Since we have only tasks 1 and 2, in $A(t,s)$ we can denote the allocation to one of the players as $\alpha_t,\alpha_s\in\{12,1,2,\emptyset\}.$ We often abuse notation and identify $A(t,s)$ by the allocation $\alpha_t$ of the $t$-player; e.g., we say that the allocation $A(t,s)$ is $12,$ meaning that $\alpha_t=12,$ and $\alpha_s=\emptyset.$

For given $s\in[0,B)\times [0,B)$ the allocation for the $t$-player as
\emph{function of his bids $(t_1,t_2)$} is denoted by $A[s],$ and
symmetrically $A[t]$ is an allocation function for the $s$-player.
For $\alpha_t\in\{12,1,2,\emptyset\},$ the allocation \emph{regions}
$R_{\alpha_t}(s)\subseteq \mathbb R_{\geq 0}^2$ of $A[s]$ are defined
to be the \emph{interior} (wrt. $\mathbb R_{\geq 0}^2$) of the set of
all $t$ values such that $A(t,s)=\alpha_t.$ (For the $s$-player we
denote the respective regions by $R_{\alpha_s}^s(t).$ ) We assume that
for every $s\in [0,B)\times [0,B)$ there exist $t$ values so that the
$t$-player receives no task, i.e., that $R_\emptyset(s)\neq \emptyset$
for every $s.$ Note that we do \emph{not} assume the same for the
$s$-player, because his bids are bounded. We remark that the
assumption $R_\emptyset(s)\neq \emptyset$ does not restrict the
\emph{types} of possible WMON mechanisms, but it simplifies the
characterization to a large extent.

It is known that in the case of  two tasks, the regions in a WMON allocation subdivide $R_{\geq 0}^2$
basically in three  possible forms, which will turn out to be  characteristic for the type of the whole allocation-function $A.$ (see Figure~\ref{fig:shapes}) The regions and their boundaries determine the \emph{critical values} for $t_1$ (in fact, as function of $t_2$ ) above which the $t$-player cannot get task 1, and symmetrically for task 2. These critical value functions are ultimately determined by the payment functions $P_\emptyset(s)=0, P_1(s),P_2(s),P_{12}(s)$ for the fixed $s.$

For any given $s,$ we assume w.l.o.g. \emph{virtual payments} for the $t$-player, which are the unique normalized and monotone (as set functions) payments for a given allocation $A[s]$ (see also Section~\ref{sec:nonadd}):

\begin{definition}\label{def:virtadd} Let $(A,\hat P)$  be a truthful mechanism, and  $s=(s_1, s_2)$ be fixed. The payments occuring in this definition are all defined for this given $s,$ so we omit $(s)$ from the notation.  Since $R_\emptyset\neq \emptyset,$ we can assume w.l.o.g. that the payment function $\hat P$ is normalized, that is, $\hat P_{\emptyset}=0.$  We define the \emph{virtual payments} as:
$$P_\emptyset=0,$$
$$P_1=\max\{\hat P_1,0\},\quad P_2=\max\{\hat P_2,0\},$$
$$P_{12}=\max\{\hat P_{12},P_1,P_2\}.$$

\end{definition}

The virtual payments are always well-defined, non-negative, and truthful payments to the $t$-player  for the given allocation $A[s].$

For fixed $s,$ the virtual payments determine uniquely the allocation and vice-versa. The value $P_1$ determines the position of the vertical boundary/critical value between $R_\emptyset$ and $R_1.$ Since $R_\emptyset$ is nonempty, this boundary always exists; if $R_1=\emptyset,$ then $P_1=0.$ Analogously, $P_2$ is the position of the horizontal boundary  between $R_\emptyset$ and $R_2;$  furthermore $P_{12}-P_2$ (resp. $P_{12}-P_1$) is the position of the vertical (horizontal) boundary between $R_{12}$ and $R_2$ (resp. $R_{12}$ and $R_1$) if these regions are nonempty. In the proof below, for given $s$ we also use the short notation
$$f'=P_1,\quad g'=P_2,$$ and if the respective boundary exists then also $$f=P_{12}-P_2,\quad g=P_{12}-P_1.$$

\begin{definition} For given $s,$ we call the allocation $A[s]$
\begin{itemize}
\item \emph{quasi-bundling}, if there are at least two points $t\neq t'$ on the boundary of $R_{12}$ and $R_{\emptyset}$($\Leftrightarrow P_1<P_{12}-P_2$)
\item \emph{quasi-flipping}, if there are at least two points $t\neq t'$ on the boundary of $R_{1}$ and $R_{2}$ ($\Leftrightarrow P_1>P_{12}-P_2$).

\item \emph{crossing} otherwise ( $\Leftrightarrow P_1=P_{12}-P_2$) (see also Figure~\ref{fig:shapes})
\end{itemize}
\end{definition}

\begin{figure}[t]
\centerline{\includegraphics[height=4.5cm]{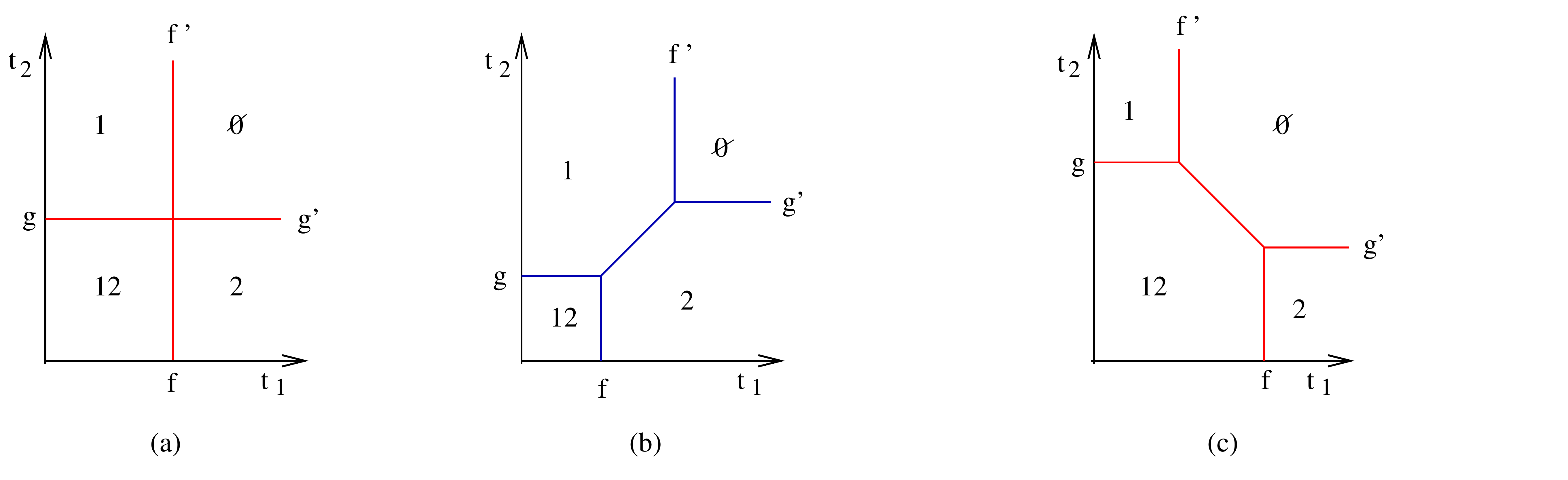}}
\caption{The allocations to a single player depending on his own 2-dimensional bid vector partition the bid-space according to one of these three shapes (where the interior of some region(s) different from $R_\emptyset$ may be empty). The allocation $A[s]$ is called (a) \emph{crossing}, (b) \emph{quasi-flipping}, or (c) \emph{quasi-bundling}, respectively. }
\label{fig:shapes}
\end{figure}

Next we introduce the four different types of WMON allocations that can occur in the above setting. After that we prove the characterization.\footnote{Observe that there exist mechanisms that adhere to more than one of the given types.}

\subsubsection{Relaxed task-independent allocations}

An allocation function $A$ is \emph{task-independent} if, for $i\in \{1,2\},$ the allocation of task $i$ depends only on the input values $s_i$ and $t_i.$  For the $t$-player, the critical-value of $t_1,$ i.e., the lowest value, above which $t_1$ does not get task $1,$ is determined by an \emph{arbitrary}, increasing function $\phi:[0,B)\rightarrow [0,\infty)$ of $s_1.$
Analogously we can define the critical-value function $\eta(s_2)$ as the lowest value, depending on $s_2,$ above which $t_2$ does not get task $2.$\footnote{The tie-breaking (i.e., when e.g., $t_1=\phi(s_1)$) may in all cases depend on the other variables ($s_2$ and $t_2$)
.} Geometrically, in a task-independent mechanism, the allocations $A[s]$ and $A[t]$ of both players are always crossing. 
In a \emph{relaxed task-independent} mechanism the latter property is fulfilled in all but countably many $s$ (resp. $t$) points, in which both $\phi$ and $\eta$ (resp. $\phi^{-1}$ and $\eta^{-1}$) have a jump discontinuity:

\begin{definition} An allocation  is a \emph{relaxed task-independent allocation} if there exist arbitrary, increasing critical-value functions for the $t$-player: $\phi(s_1)$ for task $1,$ and $\eta(s_2)$ for task $2,$ so that for every $s\in [0,B)\times [0,B)$

\begin{itemize}
\item[--] if $\phi()$ is continuous in the point $s_1$ OR $\eta()$ is continuous in the point $s_2$ then the allocation of the two tasks is independent and adheres to the critical values
$\phi(s_1)$ and $\eta(s_2),$ respectively; 
(note however that if, e.g.,  $\eta$ has a jump discontinuity in $s_2,$ then the critical value of for task $2$ can depend on $s_1,$ as long as it is between $\eta(s_2^-)$ and $\eta(s_2^+)$); 

\item[--] if $\phi()$ has a jump discontinuity in the point $s_1$ AND $\eta()$ has a jump discontinuity in $s_2,$ then the allocation $A[s]$  of the $t$-player can be an arbitrary  crossing, quasi-flipping, or quasi-bundling allocation, such that all critical values for task $1$ are at least $\phi(s_1^-)$ and at most $\phi(s_1^+),$ and similarly for task $2.$
\end{itemize}

Symmetric statements hold for the $s$-player with the critical-value-functions $\phi^{-1}(t_1)$ and $\eta^{-1}(t_2).$
In summary, every relaxed task-independent allocation is identical with a task-independent allocation on $t_i\in [0,\infty)\setminus T_i, \,s_i\in [0,B)\setminus S_i, $ where the $T_1, T_2, S_1, S_2$ are  countable sets.

\end{definition}

\subsubsection{1-dimensional mechanisms}

In a \emph{one-dimensional mechanism} at most two possible allocations are ever realized. The critical-value function between the two possible allocations  is an arbitrary increasing function of the respective input variables. 

Because of the assumption that $R_\emptyset(s)\neq \emptyset$ for every $s,$ one of the occuring allocations of the $t$-player must be the allocation $\emptyset.$
If the two occuring allocations (for the $t$-player) are $\emptyset$ and $12,$  we call the mechanism \emph{bundling mechanism}. In the other cases when $1,$ and $\emptyset;$ or when $2,$ and $\emptyset$ are the possible allocations of the $t$-player, the mechanism is identical with a degenerate task-independent mechanism, where $\phi\equiv 0$ or $\eta\equiv 0.$ Therefore we define only the case of a bundling mechanism in detail. We note that any bundling mechanism can be considered as a relaxed affine minimizer, with $\gamma_1=\infty$ and $\gamma_2=\infty,$ but it will be convenient to treat bundling mechanisms separately. 

\begin{definition} 
 In a \emph{bundling mechanism} only  the allocations $\emptyset$ and $12$ can occur. There is an arbitrary, increasing function $\xi:[0,B)\rightarrow [0,\infty)$ so that $\xi(s_1+s_2)$ is the lowest value such that if $t_1+t_2>\xi(s_1+s_2)$ then the $t$-player gets $\emptyset.$ 

If $\xi$ has a jump discontinuity in some point $s_1+s_2$ then the critical value may depend on the concrete $(s_1,s_2)$ with the given fixed sum, as long as it is between $\xi((s_1+s_2)^-)$ and $\xi((s_1+s_2)^+).$

\end{definition}

\subsubsection{Relaxed affine minimizers}

\begin{definition} An allocation $A$ is an \emph{affine minimizer}, if there exist positive constants  per player $\mu_t$ and $\mu_s,$ and constants $\gamma_{\emptyset},\gamma_{12},\gamma_{1},\gamma_{2}\in \mathbb R\cup\{-\infty,\infty\}$ per allocation (say, of the $t$-player), so that for every input $(t,s)$ the allocation $A(t,s)$ minimizes over $$\mu_t(t_1+t_2)+\gamma_{12},\qquad \mu_t\cdot t_1+\mu_s\cdot s_2+\gamma_{1},\qquad \mu_s\cdot s_1+\mu_t\cdot t_2+\gamma_{2},\qquad \mu_s(s_1+s_2)+\gamma_{\emptyset}.$$
 \end{definition}

In some affine minimizers, if $s_1+s_2$ and $t_1+t_2$ are both 'small', then the mechanism locally 'looks like' a bundling mechanism, with only the regions $R_{\emptyset}$ and $R_{12}$ for both players. If this is the case for \emph{all} $(s_1,s_2)$ and \emph{all} $(t_1,t_2)$ with  sums $s_1+s_2<D$ and $t_1+t_2<C$ for some given $C,D>0,$ then the mechanism becomes locally a bundling mechanism. This is formulated more precisely in the next definition:

\begin{definition} An allocation $A$ is a \emph{relaxed affine
    minimizer}, if there exist positive constants  per player $\mu_t$
  and $\mu_s,$ and  constants $\gamma_{\alpha}$ per allocation
  $\alpha$ (of the $t$-player),
furthermore an arbitrary increasing function
   $\xi:[0,\min(\gamma_{1},\gamma_{2})-\gamma_{\emptyset})\rightarrow
   [0,\infty)$ (if the interval
   $[0,\min(\gamma_{1},\gamma_{2})-\gamma_{\emptyset})$ is nonempty)
   with
   $\min(\gamma_{1},\gamma_{2})-\gamma_{12}=\xi(\min(\gamma_{1},\gamma_{2})-\gamma_{\emptyset})$,
   so that for every input $(t,s)$

\begin{itemize}
\item[$a)$] if $\mu_s\cdot(s_1+s_2)\geq \min(\gamma_{1},\gamma_{2})-\gamma_{\emptyset},$ the allocation $A(t,s)$ is that of an affine minimizer with the given constants 
\item[$b)$] if $\mu_s\cdot(s_1+s_2)\leq \min(\gamma_{1},\gamma_{2})-\gamma_{\emptyset},$ then  if $\mu_t\cdot(t_1+t_2)>\xi(\mu_s\cdot(s_1+s_2))$ then the allocation for the $t$-player is $\emptyset$ and if $\mu_t\cdot(t_1+t_2)\leq \xi(\mu_s\cdot(s_1+s_2))$ then it is $12.$
\end{itemize}

\end{definition}

The condition
$\min(\gamma_{1},\gamma_{2})-\gamma_{12}=\xi(\min(\gamma_{1},\gamma_{2})-\gamma_{\emptyset})$
is required so that the affine minimizer (a) is appropriately `glued'
to the bundling mechanism (b).

\subsubsection{Constant mechanisms}

In a \emph{constant mechanism} the allocation is independent of the bids of at least one of the players. This property can also be interpreted as being an affine minimizer with multiplicative constant $\mu=0.$ Since we are interested in non-constant critical value functions, it will be convenient to treat constant mechanisms separately from affine minimizers.

\subsection{The main result}

\begin{theorem}\label{theo:addchar} Every WMON allocation for two tasks and two additive players with bids $t\in (0,\infty)\times (0,\infty)$ and $s\in(0,B)\times (0,B)$, where both tasks are always allocated,  and $R_\emptyset(s)\neq \emptyset$ for every $s,$ is one of these four types:  (1) relaxed affine minimizer, (2) relaxed task-independent mechanism (3) one-dimensional mechanism, or (4) constant mechanism.
\end{theorem}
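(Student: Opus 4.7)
The plan is to parametrize each cut $A[s]$ by the virtual payments $P_1(s), P_2(s), P_{12}(s)$ of Definition~\ref{def:virtadd} and to argue that the global allocation is forced, up to the listed relaxations, by the joint behaviour of these three functions of $s$. For each fixed $s$, the shape of $A[s]$ (crossing, quasi-flipping, or quasi-bundling) is determined by the sign of $P_1(s) + P_2(s) - P_{12}(s)$; symmetrically, for each fixed $t$, the cut $A[t]$ of the $s$-player is determined by virtual payments $Q_1(t), Q_2(t), Q_{12}(t)$ and has one of the same three shapes. The core of the proof is to use WMON in both directions simultaneously to couple these two descriptions.

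First I would handle the case that $A[s]$ is crossing on an open set of $s$-values. There I would show that the critical values of the $t$-player are separable: $P_1(s) = \phi(s_1)$ and $P_2(s) = \eta(s_2)$ for some increasing $\phi, \eta$. The argument applies WMON between $(t,s)$ and $(t, s')$ where only $s_2$ changes, at a point $t$ straddling the vertical boundary $t_1 = P_1(s)$: changing $s_2$ without altering the allocation to the $s$-player (which follows from its own WMON structure) forbids the vertical boundary from shifting, so $P_1$ is independent of $s_2$. Monotonicity of $\phi, \eta$ then follows from WMON on the $s$-side. This yields the relaxed task-independent case, with the relaxation needed only at simultaneous jump discontinuities of $\phi$ and $\eta$, where the shape may briefly change without breaking monotonicity.

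Next I would handle quasi-bundling on an open set of $s$-values. There the boundary between $R_\emptyset$ and $R_{12}$ is diagonal, $t_1 + t_2 = P_{12}(s)$, and by applying WMON to the $s$-player at points $(t,s)$ on this diagonal I would derive that $P_{12}(s)$ depends only on the sum $s_1 + s_2$ and is affine in that quantity, with a universal slope that becomes the ratio $\mu_s/\mu_t$ in the affine minimizer. The rectilinear portions of the boundary then supply the offsets $\gamma_1, \gamma_2, \gamma_{12}, \gamma_\emptyset$; once $\mu_s(s_1+s_2) \geq \min(\gamma_1,\gamma_2) - \gamma_\emptyset$, all four regions $R_\emptyset, R_1, R_2, R_{12}$ are present and the structure is a true affine minimizer, while below this threshold only $R_\emptyset$ and $R_{12}$ survive, producing the bundling tail controlled by an arbitrary increasing $\xi$. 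A quasi-flipping shape persisting on a full open set of $s$ would force $R_\emptyset(s) = \emptyset$ on a neighbourhood, contradicting the standing assumption, so this case is either ruled out or handled dually.

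The final step is to glue the three local pictures into a single global classification. The main obstacle I anticipate is ruling out mixed behaviour in which $A[s]$ is crossing on one open $s$-region and quasi-bundling on another, except when the transition is exactly the one accommodated by a single relaxed affine minimizer. I expect to argue this by using the continuity of the virtual payments in $s$ (which follows from WMON on the $s$-player together with boundedness $s\in[0,B)\times[0,B)$): an abrupt change of shape would either violate monotonicity of $P_1, P_2, P_{12}$ in $s$, or pin down boundary values that are inconsistent with the payment functions on both sides. One-dimensional and constant mechanisms arise as the degenerate subcases where some of the four allocations are never realised throughout the entire parameter space, collapsing the affine minimizer picture to two or one effective allocations.
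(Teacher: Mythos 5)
Your high-level plan---classify each cut $A[s]$ by its shape (crossing, quasi-flipping, quasi-bundling), deduce structure of the virtual payments from WMON on the $s$-player, and glue---matches the skeleton of the paper's argument. However, there are two concrete errors that make the proposal incorrect as written.

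First, you dismiss the quasi-flipping case by asserting that quasi-flipping persisting on an open set of $s$ forces $R_\emptyset(s)=\emptyset$ there. This is false. Quasi-flipping means $P_1(s)+P_2(s)>P_{12}(s)$, i.e., $R_1$ and $R_2$ share a diagonal boundary; $R_\emptyset$ is the region $\{t_1>P_1,\;t_2>P_2\}$, which is nonempty whenever $P_1,P_2$ are finite (guaranteed by the standing assumption). In fact the quasi-flipping case is a legitimate and substantive branch: the paper's Lemma~\ref{lem:afmin1} and Lemma~\ref{lem:afmin3} show it produces an affine minimizer (the flipped version, with $\gamma_{12}$ possibly $\infty$). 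By discarding it you lose an entire family of mechanisms that must appear in the characterization, so the classification you would prove is strictly weaker than the theorem.

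Second, your gluing step relies on ``continuity of the virtual payments in $s$,'' which you say follows from WMON plus boundedness of $s$. This is not true: WMON gives monotonicity of $P_1$ in $s_1$, $P_2$ in $s_2$, etc., and monotone functions on a bounded interval can still have (countably many) jump discontinuities. It is precisely these jumps that force the ``relaxed'' variants---relaxed task-independent mechanisms and relaxed affine minimizers---into the statement. If the payments were actually continuous, the theorem would be the cleaner ``affine minimizer / task-independent / one-dimensional / constant'' classification with no relaxations, which is false for this domain. The paper handles this by proving independence-from-$s_2$ only at continuity points of $P_1(\cdot,s_2)$ (Lemma~\ref{lem:indep}) and carefully routing the argument through continuity points; you would need something of that form. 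Relatedly, your argument for separability in the crossing case (``changing $s_2$ without altering the allocation to the $s$-player'') is circular: the $s$-player's allocation can change when $s_2$ changes, and the work is in showing it does not change in a way that moves the $t_1=P_1$ boundary. Finally, the ``depends affinely on $s_1+s_2$'' claim in the quasi-bundling case needs an actual linearity mechanism (the paper's Claims~\ref{cla:derivative}--\ref{cla:derivative3} plus a dyadic-density limit argument); it does not fall out directly from WMON applied at diagonal boundary points.
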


\subsection{Proof of Theorem~\ref{theo:addchar}}

\begin{notation}
For the critical values (boundary positions) of the $t$-player for given $s\in [0,B)\times[0,B)$ we use the notation $f'(s)=P_1(s),$ $g'(s)=P_2(s)$ ($s$ is omitted from the notation if it is clear from the context).
\end{notation}

The critical values $f',g'\in [0,\infty)$  are always defined and finite, since $f'=\infty$ or $g'=\infty$ would mean $R_\emptyset=\emptyset.$

We prove next, that $f'$ is an increasing function of $s_1,$ furthermore it is continuous in almost every (i.e., up to at most countably many) point $s_1\in[0,B),$ and in every  $s_1$ where $f'$ is continuous, it is independent of $s_2.$  Symmetric statements hold for $g'.$

\begin{lemma}\label{lem:cont} $f'$ is monotone increasing in $s_1$ in the following strong sense: Let $s_2,s_2'$ be arbitrary. If $s_1< s_1'$  then $f'(s_1,s_2)\leq f'(s_1',s_2').$ \end{lemma}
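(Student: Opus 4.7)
\medskip

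\textbf{Proof plan.} Since $f'(s)=P_1(s)=\max\{\hat P_1(s),0\}\geq 0$ by Definition~\ref{def:virtadd}, the inequality is trivial whenever $f'(s_1,s_2)=0$. So assume $f'(s_1,s_2)>0$ and fix any $t_1\in(0,f'(s_1,s_2))$; it suffices to show $t_1\leq f'(s_1',s_2')$. The strategy is to pick a second coordinate $t_2$ so large that the vertical boundary at $t_1=f'$ is the decisive one for both bid vectors $s$ and $s'$, and then derive the required inequality by applying weak monotonicity to the $s$-player alone.

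Concretely, choose $t_2$ strictly larger than all of $P_{12}(s_1,s_2)$, $P_2(s_1,s_2)$, $P_{12}(s_1',s_2')$, $P_2(s_1',s_2')$; this is possible because each of these payments is finite, and the assumption $R_\emptyset(s)\neq\emptyset$ guarantees that the $R_\emptyset$ region extends arbitrarily far upward. With this choice, at the profile $b=((t_1,t_2),(s_1,s_2))$ the point $(t_1,t_2)$ lies in $R_1(s_1,s_2)$ (above every horizontal boundary and to the left of the vertical boundary $t_1=f'(s_1,s_2)$), so the $t$-player receives only task $1$ and hence $A_s(b)=\{2\}$.

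Now suppose for contradiction that $t_1>f'(s_1',s_2')$. Then at $b'=((t_1,t_2),(s_1',s_2'))$, for the same reason (large $t_2$), the point lies in $R_\emptyset(s_1',s_2')$, so the $t$-player gets nothing and $A_s(b')=\{1,2\}$. Weak monotonicity applied to the $s$-player changing his bid from $(s_1,s_2)$ to $(s_1',s_2')$ while the $t$-player's bid is held fixed gives
\[
s(A_s(b))-s(A_s(b'))\;\leq\;s'(A_s(b))-s'(A_s(b')),
\]
which by additivity becomes $s_2-(s_1+s_2)\leq s_2'-(s_1'+s_2')$, i.e.\ $-s_1\leq -s_1'$, contradicting $s_1<s_1'$. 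Hence $t_1\leq f'(s_1',s_2')$, and letting $t_1\uparrow f'(s_1,s_2)$ yields the desired inequality.

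The only delicate step is the geometric claim that $(t_1,t_2)$ lands in the asserted regions at both $b$ and $b'$ once $t_2$ is taken sufficiently large; this is immediate in the crossing and quasi-flipping cases (the relevant boundaries are axis-parallel at height $g'$ or $g$), and in the quasi-bundling case it suffices to pick $t_2>P_{12}$ so as to escape the diagonal boundary. After that, the argument is simply WMON on the single deviating player, and the $s_2,s_2'$ terms cancel, which is exactly what delivers the strong form of monotonicity claimed in the lemma (independent of whichever values $s_2,s_2'$ are compared).
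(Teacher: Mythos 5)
Your proof is correct and follows essentially the same approach as the paper's: pick a bid vector with $t_1$ just below $f'(s_1,s_2)$ and $t_2$ large enough so the $t$-player receives exactly task $1$ under $s$ but nothing under $s'$, then apply WMON to the $s$-player's deviation from $s$ to $s'$ to derive $s_1\geq s_1'$, a contradiction. The paper states this more tersely (it simply asserts the existence of $\hat t$ with the two membership properties), while you spell out the region geometry and the WMON cancellation, but the argument is the same.
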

\begin{proof} Let $s=(s_1,s_2),$ and $s'=(s_1',s_2'),$ and assume for contradiction that $f'(s)> f'(s').$ Then there exists a $\hat t$ (with high enough $\hat t_2$) so that $\hat t\in R_{1}(s)$ but $\hat t\in R_\emptyset(s').$ In turn, in the allocation $A[\hat t]$ the $s$-player gets both tasks if he bids $s',$ and he gets only task $2$ with bid $s,$ contradicting WMON for the $s$-player, because $s_1< s_1'$ (cf. Figure~\ref{fig:shapes}).
\end{proof}

Since $f'(.,s_2)$ is increasing as a function of $s_1,$ it must be continuous in all but at most countably many points $s_1.$ Furthermore, since $f'$ is increasing \emph{independently} of $s_2,$ it must be independent of $s_2$ in all $s_1$ where it is continuous, as we show next.

\begin{lemma}\label{lem:indep} If $f'$ is continuous as a function of $s_1$ in some point $(\bar s_1,\bar s_2),$ then it is independent of $s_2,$ i.e., then $f'(\bar s_1,s_2)=f'(\bar s_1, \bar s_2)$ for every $s_2.$ \end{lemma}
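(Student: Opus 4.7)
The plan is to exploit the \emph{strong} form of monotonicity already established in Lemma~\ref{lem:cont}: whenever $s_1 < s_1'$, the inequality $f'(s_1,s_2)\leq f'(s_1',s_2')$ holds \emph{regardless} of how the second coordinates $s_2,s_2'$ are chosen. This decoupling property is precisely what allows one to transfer continuity in $s_1$ (at a fixed $\bar s_2$) into constancy in $s_2$ (at a fixed $\bar s_1$). The overall argument will be a squeeze.

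Fix an arbitrary $s_2\in[0,B)$ and a small $\delta>0$ with $\bar s_1-\delta$ and $\bar s_1+\delta$ lying in $[0,B)$. Apply Lemma~\ref{lem:cont} twice:
\begin{itemize}
\item with the pair $\bar s_1-\delta < \bar s_1$ and second coordinates $\bar s_2$, $s_2$ respectively, to obtain
$$f'(\bar s_1-\delta,\bar s_2)\;\leq\; f'(\bar s_1,s_2);$$
\item with the pair $\bar s_1 < \bar s_1+\delta$ and second coordinates $s_2$, $\bar s_2$ respectively, to obtain
$$f'(\bar s_1,s_2)\;\leq\; f'(\bar s_1+\delta,\bar s_2).$$
\end{itemize}
Combining these two bounds yields the sandwich
$$f'(\bar s_1-\delta,\bar s_2)\;\leq\; f'(\bar s_1,s_2)\;\leq\; f'(\bar s_1+\delta,\bar s_2).$$

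Now I let $\delta\to 0^+$. By the hypothesis that $f'(\cdot,\bar s_2)$ is continuous at $\bar s_1$, both the left and right expressions converge to $f'(\bar s_1,\bar s_2)$. Since $f'(\bar s_1,s_2)$ does not depend on $\delta$, the squeeze principle forces $f'(\bar s_1,s_2)=f'(\bar s_1,\bar s_2)$, and since $s_2$ was arbitrary the lemma follows. For the degenerate boundary case $\bar s_1=0$, only the upper bound survives; however, the matching lower bound $f'(0,\bar s_2)\leq f'(0,s_2)$ follows symmetrically by sandwiching with some $0<\bar s_1'<\delta$ (right-continuity at $0$ is all that Lemma~\ref{lem:cont} and the continuity hypothesis require).

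I do not anticipate any significant obstacle: the entire content of the proof is the observation that the \emph{uniform} monotonicity of Lemma~\ref{lem:cont} with respect to the second coordinate, together with continuity in the first, is enough to pin down the value independently of $s_2$. The conceptual work was done in proving Lemma~\ref{lem:cont} via weak monotonicity of the $s$-player; the present lemma is essentially a one-line consequence packaged as a squeeze argument.
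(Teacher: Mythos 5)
Your proposal is correct and rests on exactly the same ingredients as the paper's proof: two applications of the strong monotonicity of Lemma~\ref{lem:cont} at $\bar s_1-\delta$ and $\bar s_1+\delta$, combined with continuity of $f'(\cdot,\bar s_2)$ at $\bar s_1$. The paper phrases this as a contradiction in one direction and invokes ``w.l.o.g.'' for the other, while you present a direct squeeze; these are the same argument in different packaging. One small caveat: your sketch for the boundary case $\bar s_1=0$ (obtaining the lower bound $f'(0,\bar s_2)\le f'(0,s_2)$) does not quite go through, since Lemma~\ref{lem:cont} gives $f'(0,\bar s_2)\le f'(\bar s_1',s_2)$ for $\bar s_1'>0$, and passing to the limit $\bar s_1'\to 0^+$ would require continuity of $f'(\cdot,s_2)$ at $0$, which is not part of the hypothesis; the paper sidesteps this by tacitly working with $\bar s_1$ in the interior.
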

\begin{proof} Let $f'(.,\bar s_2)$ be continuous as a function of $s_1$ in the point $\bar s_1.$ Assume for contradiction w.l.o.g., that $f'(\bar s_1,\bar s_2)>f'(\bar s_1,s_2)$ for some $s_2\neq \bar s_2.$ Since $f'(.,\bar s_2)$ is continuous in the point $\bar s_1,$ we can take a $\bar s_1-\delta,$ so that $f'(\bar s_1-\delta,\bar s_2)>f'(\bar s_1,s_2)$ also holds. However, this contradicts Lemma~\ref{lem:cont}, because $\bar s_1-\delta<\bar s_1.$
\end{proof}

We introduce similar notation for other  critical values of the $t$-player based on the virtual payments.  

\begin{notation} If $R_{12}(s)\neq \emptyset$ and $R_{2}\neq \emptyset,$ OR $P_{12}- P_2=0,$ then let $f(s)=P_{12}- P_2.$  Symmetrically, if $R_{12}(s)\neq \emptyset$ and $R_{1}\neq \emptyset,$ OR $P_{12}- P_1=0,$  then let $g(s)=P_{12}- P_1.$  
\end{notation}

As opposed to $f'$ and $g',$ the values $f$ and $g$ are not defined for every $s\in[0,B)\times[0,B);$ this is needed for the following lemmas to hold. Note however that the only case when \emph{neither} $f,$ \emph{nor} $g$ are defined is, when $f'=g'=0,$ and the allocation is quasi-bundling.
The proofs of  Lemmas~\ref{lem:contf} and \ref{lem:indepf} are somewhat more involved variants of those of Lemmas~\ref{lem:cont} and \ref{lem:indep}, and are deferred to the full version.

\begin{lemma}\label{lem:contf} $f$ is monotone increasing in $s_1$ in the following  sense: Let $s_1\leq s_1'$ and  $s_2< s_2'.$  If $f$  is defined for $s=(s_1,s_2),$ then $f$ is also defined for $s'=(s_1',s_2').$ Furthermore $f(s_1,s_2)\leq f(s_1',s_2''),$ whenever $s_1< s_1',$ and $f$ is defined in $(s_1',s_2'').$ Symmetric statements hold for $g.$  \end{lemma}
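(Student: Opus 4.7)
The plan is to establish both parts by adapting the test-point-plus-WMON strategy of Lemma~\ref{lem:cont}, now applied to the vertical boundary between $R_{12}$ and $R_2$ rather than between $R_1$ and $R_\emptyset$; the symmetric claims for $g$ will follow by interchanging the roles of the two tasks. The argument splits into the monotonicity assertion $f(s_1,s_2)\leq f(s_1',s_2'')$ and the assertion that $f$ remains defined at $s'=(s_1',s_2')$.

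For the monotonicity, I would assume $s_1<s_1'$ and, for contradiction, $f(s_1,s_2)>f(s_1',s_2'')$. I would then choose $\hat t=(\hat t_1,\hat t_2)$ with $\hat t_1$ strictly between $f(s_1',s_2'')$ and $f(s_1,s_2)$, and $\hat t_2$ small enough to lie below the horizontal boundary of $R_{12}$ at both $s=(s_1,s_2)$ and $(s_1',s_2'')$. Then $\hat t\in R_{12}(s)$ while $\hat t\in R_2(s_1',s_2'')$, so the $s$-player's allocation changes from $\emptyset$ to $\{1\}$. Additive WMON for the $s$-player yields $-s_1\leq -s_1'$, contradicting $s_1<s_1'$.

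For the existence part, assuming $f$ is defined at $s$ and $s_1\leq s_1'$, $s_2<s_2'$, I plan to show both $R_2(s')\neq\emptyset$ and either $R_{12}(s')\neq\emptyset$ or $P_{12}(s')=P_2(s')$. To see $R_2(s')\neq\emptyset$, I would pick $\hat t=(T,0)$ with $T$ larger than every relevant critical first coordinate at $s$ and $s'$; at $s$ we have $\hat t\in R_2(s)$ and the $s$-player gets $\{1\}$, and additive WMON on the $s$-player (using $s_2<s_2'$) excludes $A_s'=\{1,2\}$, while $T$ being large excludes $A_s'\in\{\emptyset,\{2\}\}$ because $\hat t$ cannot lie in $R_{12}(s')\cup R_1(s')\cup R_\emptyset(s')$ (using $\hat t_2=0<g'(s')$ from the symmetric monotonicity of $g'$). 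Hence $A_s'=\{1\}$, i.e., $\hat t\in R_2(s')$. For the $R_{12}$/degenerate alternative, I would pick $\hat t$ in the interior of $R_{12}(s)$ with both coordinates arbitrarily small; WMON with $s_2<s_2'$ forces $A_s'\in\{\emptyset,\{1\}\}$. If $A_s'=\emptyset$ then $\hat t\in R_{12}(s')$ and we are done; otherwise $\hat t\in R_2(s')$, and the $t$-player's utility comparison $P_2(s')-\hat t_2\geq P_{12}(s')-\hat t_1-\hat t_2$ rearranges to $P_{12}(s')-P_2(s')\leq \hat t_1$, forcing $P_{12}(s')=P_2(s')$ as $\hat t_1\to 0$ (using $P_{12}\geq P_2$ from the virtual payment definition).

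The main obstacle is the borderline case $s_1=s_1'$ of the existence argument, where WMON on the $s$-player alone does not forbid the $\emptyset\to\{1\}$ transition; one must combine the strictness of $s_2<s_2'$ with the utility analysis at arbitrarily small $\hat t$ inside $R_{12}(s)$ to land either in the nondegenerate case (with $R_{12}(s')$ nonempty) or in the admissible degenerate case $P_{12}(s')=P_2(s')$. A subsidiary case to address is the purely degenerate starting situation $P_{12}(s)=P_2(s)$ with $R_{12}(s)$ empty, where the interior construction fails and instead one propagates the equality to $s'$ using the symmetric-to-Lemma~\ref{lem:cont} monotonicity of $P_2$ together with the virtual-payment inequality $P_{12}\geq P_2$.
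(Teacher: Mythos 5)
The paper does not actually prove this lemma: it states only that the proofs of Lemmas~\ref{lem:contf} and~\ref{lem:indepf} are ``somewhat more involved variants'' of those of Lemmas~\ref{lem:cont} and~\ref{lem:indep} and are deferred to the full version. So your proposal can only be judged on its own terms, and its overall strategy --- a test point $\hat t$ straddling the $R_{12}/R_2$ boundary plus WMON for the $s$-player --- is exactly the intended adaptation. The non-degenerate cases are handled correctly: when $f$ is defined at both points via non-emptiness of $R_{12}$ and $R_2$, your choice of $\hat t$ does land in $R_{12}(s)$ and $R_2(s_1',s_2'')$, and the $\emptyset\to\{1\}$ transition for the $s$-player gives the desired $s_1'\le s_1$ contradiction; likewise your two test points for the existence part (a far-right $\hat t$ to get $R_2(s')\neq\emptyset$, a tiny $\hat t\in R_{12}(s)$ to get $R_{12}(s')\neq\emptyset$ or $P_{12}(s')=P_2(s')$) are sound.

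The genuine gap is in the degenerate branch of the definition of $f$, namely when $f$ is ``defined'' at $(s_1',s_2'')$ only because $P_{12}(s_1',s_2'')-P_2(s_1',s_2'')=0$. Your monotonicity step asserts $\hat t\in R_2(s_1',s_2'')$, which requires $\hat t_2<P_2(s_1',s_2'')$; if $P_{12}(s_1',s_2'')=P_2(s_1',s_2'')=0$ (hence also $P_1=0$), then $R_2(s_1',s_2'')$ is empty, $\hat t$ lands in $R_\emptyset(s_1',s_2'')$, and the WMON comparison ($\emptyset$ versus $\{1,2\}$ for the $s$-player) yields only $s_1'+s_2''\le s_1+s_2$ --- no contradiction. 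This is not a removable technicality: a relaxed affine minimizer whose bundling tail has $\xi\equiv 0$ realizes $f(s_1,s_2)>0$ together with $f(s_1',s_2'')=0$ for a suitable $s_1<s_1'$ and small $s_2''$, so with $s_2''$ completely unrestricted the inequality itself can fail. The statement must be read with $s_2''$ constrained (consistently with every later invocation in the paper, where both coordinates are non-decreasing); under $s_2''>s_2$ your argument is rescued, because $f(s)>0$ forces $R_2(s)\neq\emptyset$, hence $P_2(s)>0$, hence $P_2(s_1',s_2'')\ge P_2(s)>0$ by the $g'$-analogue of Lemma~\ref{lem:cont}, and the test point does lie in $R_2(s_1',s_2'')$. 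You should make this restriction and the appeal to $g'$-monotonicity explicit. Separately, your treatment of the degenerate \emph{starting} case of the existence part ($P_{12}(s)=P_2(s)$ with $R_{12}(s)$ or $R_2(s)$ empty) is only gestured at: monotonicity of $P_2$ does not by itself ``propagate the equality'' $P_{12}=P_2$ to $s'$, since $P_{12}$ and $P_2$ need not grow in lockstep; this case needs its own argument (e.g., showing that if $P_{12}(s')>P_2(s')$ then both $P_{12}(s')>P_1(s')$ and $P_2(s')>0$, so that clause one applies). Finally, replace $\hat t_2=0$ by a small positive value, since the domain is open.
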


\begin{lemma}\label{lem:indepf} If $f$ is continuous as a function of $s_1$ in some point $(\bar s_1,\bar s_2),$ then it is independent of $s_2,$ i.e., then $f'(\bar s_1,s_2)=f'(\bar s_1, \bar s_2)$ for every $s_2$ where $f$ is defined, in particular for every $s_2>\bar s_2.$ \end{lemma}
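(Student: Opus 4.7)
The plan is to replay the argument from the proof of Lemma~\ref{lem:indep}, replacing Lemma~\ref{lem:cont} with its analogue for $f$, namely Lemma~\ref{lem:contf}, while being careful about the points at which $f$ happens to be defined (since $f$, unlike $f'$, is only a partial function on $[0,B)\times[0,B)$).

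I would suppose for contradiction that some $s_2$ exists at which $f$ is defined but $f(\bar s_1, s_2) \neq f(\bar s_1, \bar s_2)$, and split on the direction of the strict inequality. If $f(\bar s_1, s_2) > f(\bar s_1, \bar s_2)$, continuity of $f(\cdot, \bar s_2)$ at $\bar s_1$ lets me pick $\delta > 0$ small enough that $f(\bar s_1+\delta, \bar s_2)$ is defined (which is free from continuity) and still strictly below $f(\bar s_1, s_2)$. The ``furthermore'' clause of Lemma~\ref{lem:contf}, applied with $s_1 = \bar s_1 < \bar s_1+\delta = s_1'$ and $s_2'' = \bar s_2$, then forces $f(\bar s_1, s_2) \leq f(\bar s_1+\delta, \bar s_2)$, which is the contradiction. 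Symmetrically, if $f(\bar s_1, s_2) < f(\bar s_1, \bar s_2)$ and $\bar s_1 > 0$, I perturb leftward to $\bar s_1 - \delta$, using that $f$ is defined at $(\bar s_1, s_2)$ by assumption to satisfy the ``defined at $(s_1', s_2'')$'' hypothesis of Lemma~\ref{lem:contf}; the resulting $f(\bar s_1-\delta, \bar s_2) \leq f(\bar s_1, s_2)$ contradicts continuity.

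I expect the main obstacle to be the corner case $\bar s_1 = 0$ combined with $f(0, s_2) < f(0, \bar s_2)$, where no downward perturbation of $s_1$ is available and a direct application of Lemma~\ref{lem:contf} with $0 = s_1 < \delta = s_1'$ only yields the inequality in the ``easy'' direction. I would handle this by a limiting argument: letting $\delta \to 0^+$, continuity of $f(\cdot, \bar s_2)$ at $0$ transfers the one-sided Lemma~\ref{lem:contf} bound into the desired equality on one side, and the other side is supplied by appealing to the identity $f = P_{12} - g'$ together with the already-established behaviour of $g'$ from Lemmas~\ref{lem:cont} and~\ref{lem:indep} (and the observation that the corresponding continuity of the virtual payments at $\bar s_1$ must be inherited from that of $f$). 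The technically delicate step is verifying that this transfer is legitimate in the corner case, i.e.\ that the identities among virtual payments remain valid when $\bar s_1 = 0$.
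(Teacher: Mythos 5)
The paper defers the proof of Lemma~\ref{lem:indepf} to the full version, describing it only as a ``somewhat more involved variant'' of the proof of Lemma~\ref{lem:indep}; so there is no in-paper argument to compare against literally. That said, your main strategy matches the authors' description exactly. The two principal cases are correct. When $f(\bar s_1,s_2)>f(\bar s_1,\bar s_2)$, continuity gives a $\delta>0$ with $f(\bar s_1+\delta,\bar s_2)$ defined and still below $f(\bar s_1,s_2)$, and the ``furthermore'' clause of Lemma~\ref{lem:contf} with $s_1=\bar s_1<\bar s_1+\delta=s_1'$ and $s_2''=\bar s_2$ yields $f(\bar s_1,s_2)\le f(\bar s_1+\delta,\bar s_2)$, a contradiction. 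When $f(\bar s_1,s_2)<f(\bar s_1,\bar s_2)$ and $\bar s_1>0$, perturbing to $\bar s_1-\delta$ and applying Lemma~\ref{lem:contf} with $s_1=\bar s_1-\delta<\bar s_1=s_1'$, $s_2''=s_2$ gives $f(\bar s_1-\delta,\bar s_2)\le f(\bar s_1,s_2)$, again contradicting continuity. This is exactly the replay of Lemma~\ref{lem:indep} with the correct substitution of $s_2''$; note also that the lemma's conclusion contains a typo (it writes $f'$ where it must mean $f$), which you implicitly and correctly correct.

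The corner case you flag ($\bar s_1=0$ with $f(0,s_2)<f(0,\bar s_2)$) is a genuine gap in a strictly literal reading of the lemma, and you correctly observe that rightward perturbation only yields $f(0,s_2)\le f(0,\bar s_2)$. However, your proposed patch is not convincing as stated: the identity $f=P_{12}-g'$ does not help without a handle on how $P_{12}$ varies in $s_2$, and Lemmas~\ref{lem:cont}/\ref{lem:indep} (or their $g'$ analogues) give monotonicity and $s_1$-independence for $g'$ but nothing about $P_{12}$, so the claimed ``other side'' inequality does not follow. It is worth noting, though, that this corner case is moot for the paper's purposes: every invocation of Lemmas~\ref{lem:contf} and \ref{lem:indepf} in CASE~A is at a point $\bar s\in(0,B)\times(0,B)$, and Theorem~\ref{theo:addchar} itself restricts to $s\in(0,B)\times(0,B)$, so $\bar s_1>0$ can safely be assumed. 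A cleaner fix than yours would simply be to state the lemma for $\bar s_1>0$ (matching the domain in Theorem~\ref{theo:addchar}), rather than attempt the $P_{12}$-based repair.
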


We will sometimes abuse notation and consider $f$ and $f'$ as univariate functions of $s_1.$ This is incorrect only for at most countably many $s_1$ where these functions have jump discontinuities. In these $s_1$ the above single-variate functions remain undefined, or can be identified with, e.g.,  $\lim_{x\rightarrow s_1^-} f(x).$ Analogously we will often treat $g'$ and $g$ as univariate functions.

\bigskip

\noindent\emph{Intuition.} CASE A. below treats all the cases when for some $s$ an allocation function of the $t$-player exists, where at least three (interiors of) regions $R$ are nonempty, the allocation is quasi-flipping or quasi-bundling, and one of the critical value functions $f', g', f, g$ is continuous in $s$ as univariate function. In this case the WMON allocation function over the whole additive domain with bounded $s,$ is a relaxed affine minimizer. The proof of this is based on three observations, roughly summarized as follows: 

First, that once an allocation of the $t$-player is quasi-bundling (quasi-flipping) in some $s$, where at least \emph{one} of $f,g,f',g'$ is continuous as univariate function, we can infinitesimally adjust $s$ in a carefully selected direction, so that the allocation remains quasi-bundling (quasi-flipping), and all of these functions are continuous (when defined) in the adjusted $s.$

Second, that if in any quasi-flipping or quasi-bundling allocation \emph{all} the critical value functions $f',g',f,g$ are continuous in some point $s,$ then they all must have the same derivative in this point. This will follow from the WMON property of the $s$-player. 

Third, that in such points $f'$ and $f$ (if defined) are independent of $s_2,$ and $g',$ and $g$ are independent of $s_1,$ which eventually implies that the above derivatives  must remain the same for \emph{every} $s,$ implying that each of these critical value functions is linear.

In CASE B. a quasi-flipping allocation exists, but $f'\equiv g'\equiv 0$ (for every $s$), and therefore the regions $R_1(s)=R_2(s)=\emptyset$ for every $s.$ In this case the allocation is a 1-dimensional monotone allocation over the sets of positive $s$ and $t,$ where the two tasks are bundled and always allocated to the same player.

Finally, in CASE C. we consider those allocations where the $t$-player has a crossing figure whenever any of $f',g'$ (or  $f,\,g$) is continuous in $s.$ In this case the mechanism is a relaxed task-independent mechanism.

We remark that if in CASE A. the common derivatives of $f,g,f',g'$ happen to be $0,$ or in CASES B. or C. the arbitrary increasing critical value functions turn out to be constant, then we obtain a constant mechanism. We do not discuss these cases separately in the rest of the proof.

\bigskip

We start the argument with three lemmas, stating that the mechanism must be a (relaxed) affine minimizer, if a single point $\hat s$ exists where \emph{all} of $f', g'$ and $f,g$ (if defined) are continuous as univariate functions (i.e., $f'$ and $f$ in the point $\hat s_1$ and $g'$ and $g$ in the point $\hat s_2$), and the allocation $A[\hat s]$ is quasi-flipping or quasi-bundling. 

\begin{lemma}\label{lem:afmin1} Assume that both  $f$ and $g$ are undefined or zero in every $s\in (0,B)\times (0,B).$ If an $\hat s\in(0,B)\times (0,B)$ exists so that the allocation $A[\hat s]$ of the $t$-player is quasi-flipping, $f'>0$ is continuous in the point $\hat s_1,$ and $g'>0$ is continuous in the point $\hat s_2,$ then the mechanism is an affine minimizer (with $\gamma_{12}=\infty$) for nonzero $t$ and $s.$  
\end{lemma}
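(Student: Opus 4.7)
The plan is to apply WMON for the $s$-player, invoked at a point $\hat t$ lying on the diagonal boundary between the $t$-player's regions $R_1(\hat s)$ and $R_2(\hat s)$, to force $f'$ and $g'$ to be affine with a common positive slope. This structure is exactly the critical-value form of an affine minimizer with $\gamma_{12}=\infty$ (which is consistent with the hypothesis that $f$ and $g$ are everywhere zero or undefined, i.e., the $t$-player is never assigned the bundle in the interior).

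First I would combine the continuity hypotheses with Lemma~\ref{lem:indep} to conclude that on some open neighborhood $U$ of $\hat s$ the values $f'(s_1,s_2)$ depend only on $s_1$ and $g'(s_1,s_2)$ depend only on $s_2$, with both strictly positive and continuous as univariate functions. Since $R_{12}(s)=\emptyset$ for every $s$ by the global hypothesis $f\equiv 0$ or undefined and $g\equiv 0$ or undefined, on $U$ the allocation $A[s]$ is quasi-flipping, with $R_\emptyset(s)$, $R_1(s)$, $R_2(s)$ cut out respectively by the vertical line $t_1=f'(s_1)$, the horizontal line $t_2=g'(s_2)$, and the diagonal $t_1-t_2=f'(s_1)-g'(s_2)$ (as derived from the virtual payments).

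Now the crux: fix $\hat t$ strictly inside the diagonal strip $\{t_1<f'(\hat s_1),\ t_2<g'(\hat s_2)\}$ and on the diagonal $\hat t_1-\hat t_2=f'(\hat s_1)-g'(\hat s_2)$, and consider $A[\hat t]$ as $s$ varies in $U$. Under this dualization, the $s$-player receives $\{12\}$ when $t$ receives $\emptyset$, $\{1\}$ when $t$ receives $\{2\}$, and $\{2\}$ when $t$ receives $\{1\}$; her $\emptyset$-region is empty because $R_{12}$ for the $t$-player is empty. Hence the $s$-player's allocation in $s$-space is itself quasi-flipping, and since the $s$-player is additive the WMON shape analysis (applied symmetrically to the $s$-side) forces the boundary between her $\{1\}$- and $\{2\}$-regions to be a straight line of slope $1$, i.e.\ $s_1-s_2=\mathrm{const}(\hat t)$. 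Comparing with the explicit boundary $f'(s_1)-g'(s_2)=\hat t_1-\hat t_2$ on the same locus, we conclude that $f'(s_1)-g'(s_2)$ depends only on $s_1-s_2$ throughout $U$. Taking one-sided derivatives at points where both $f'$ and $g'$ are differentiable (cocountably many, by monotonicity) yields $\partial f'(s_1)/\partial s_1=\partial g'(s_2)/\partial s_2$ for all such $s_1,s_2$, so both coincide with a common positive constant $\mu$. Thus $f'(s_1)=\mu s_1+c_1$ and $g'(s_2)=\mu s_2+c_2$ on $U$.

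Finally I would propagate this affine form globally. By Lemma~\ref{lem:cont}, $f'$ is monotone in $s_1$ and by Lemma~\ref{lem:indep} independent of $s_2$ at each of its (cocountably many) continuity points; applying the slope-matching argument of the previous step at every point where both $f'$ and $g'$ are positive shows that the slope $\mu$ is invariant, so $f'(s_1)=\max(0,\mu s_1+c_1)$ and $g'(s_2)=\max(0,\mu s_2+c_2)$ for all $s$ in the domain. Setting $\mu_t=1,\ \mu_s=\mu,\ \gamma_\emptyset=0,\ \gamma_1=-c_1,\ \gamma_2=-c_2,\ \gamma_{12}=\infty$ realizes the mechanism as the claimed affine minimizer. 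I expect the main obstacle to be the dualization in the second step: one must verify that the diagonal strip is nonempty (which is why $f'(\hat s_1),g'(\hat s_2)>0$ is essential), that the $s$-player's $\emptyset$-region truly vanishes throughout $U$ (ensured by the blanket hypothesis on $f$ and $g$), and that the derivatives in the identity $f'(s_1)-g'(s_2)=h(s_1-s_2)$ are taken at compatible continuity points of the two functions, so that the common constant $\mu$ is unambiguous.
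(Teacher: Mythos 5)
Your proof rests on the same core observation as the paper's: WMON for the $s$-player, applied to a bid $\hat t$ that lies on the $R_1$--$R_2$ boundary of $A[\hat s]$, forces a functional relation between $f'$ and $g'$. Your packaging is a reformulation of the paper's Claim~\ref{cla:derivative} — you phrase it as ``the $s$-player's $R_1^s$--$R_2^s$ boundary is a line of slope $1$, hence $f'(s_1)-g'(s_2)=h(s_1-s_2)$,'' while the paper states it discretely as $f'(s_1+\delta)-f'(s_1)=g'(s_2+\delta)-g'(s_2)$; these are the same assertion. Where the two arguments genuinely diverge is in the step from this relation to linearity: the paper runs a Cauchy-type scaling argument over dyadic $\delta$'s on a carefully chosen dense set of continuity points, whereas you appeal to derivatives of $f'$ and $g'$ and a Pexider-type functional equation. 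This is the more elegant route, but it is also the more fragile one, and as written it has gaps.

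First, ``cocountably many, by monotonicity'' is not what monotonicity gives for differentiability — it gives differentiability almost everywhere in the Lebesgue sense, which is weaker; and one-sided derivatives of an increasing function need not exist at every point either. This matters because after taking derivatives you still have to argue that a monotone function whose derivative equals a constant $\mu$ a.e.\ is actually affine, which is exactly the kind of subtlety the paper's dyadic argument is designed to avoid. Second, your boundary argument is vacuous precisely when $f'$ and $g'$ are locally constant near $\hat s$: in that regime $\hat t$ lies on the $R_1$--$R_2$ boundary simultaneously for a two-dimensional set of $s$, the $s$-player's $\{1\}$/$\{2\}$ boundary degenerates, and you get no constraint from the dualization. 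The paper's Claim~\ref{cla:derivative} handles this transparently ($0=0$ still forces $g'$ to be locally constant, propagating to $\mu=0$ and a constant mechanism), but your derivative bookkeeping does not, and the $\mu=0$ case needs to be split off explicitly, matching the paper's final remark. Third, the global propagation step (``applying the slope-matching argument... at every point... shows that the slope $\mu$ is invariant'') is asserted rather than argued; you need something like the paper's choice of $\bar s_2$ along which $g'$ is continuous at all rational translates to thread the local conclusions through the countably many discontinuities of $f'$ and $g'$. Also note that Lemma~\ref{lem:indep} is a pointwise statement about the single continuity point $\hat s_1$, not a statement on an open neighborhood $U$, so ``on $U$, $f'$ depends only on $s_1$'' overstates it; you should restrict to continuity points throughout, as the paper does. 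None of these gaps are fatal to your strategy, but in the present form the proof is incomplete compared to the paper's.
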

\begin{proof} The proof makes use of the following claim, which is the core observation implying the linearity of the payment functions whenever an allocation is non-crossing:

\begin{claim}\label{cla:derivative}
Assume that $f'>0$ is continuous as univariate function in the  points $s_1$ and $s_1+\delta,$ and $g'>0$ is continuous in the points $s_2,$ and $s_2+\delta,$ furthermore the allocation of the $t$-player is quasi-flipping in a $2\delta$-neighborhood of  $s=(s_1,s_2).$ Then $f'( s_1+\delta)-f'(s_1)=g'( s_2+\delta)-g'( s_2).$ 
\end{claim}
\begin{proof} (Figure~\ref{fig:linear} (a)) Assume for contradiction w.l.o.g. that $f'(s_1+\delta)-f'( s_1)>g'(s_2+\delta)-g'(s_2).$ Since $f'$ is continuous in $ s_1+\delta,$ it even holds that $f'(s_1+\delta')-f'(s_1)>g'(s_2+\delta)-g'(s_2)$ for some $\delta'<\delta.$

Let $s'=(s_1+\delta',s_2+\delta).$ Now there exists a point $\hat t$ so that $\hat t\in R_2(s)$ but $\hat t \in R_1(s').$ In turn, for this bid $\hat t$ of the $t$-player the $s$-player receives task $1$ if he bids $s,$ and he receives task $2$ if he bids $s'.$ This contradicts WMON for the $s$-player because $s_1'-s_1=\delta'<\delta=s_2'-s_2.$
\end{proof}

\begin{figure}[t]
\centerline{\includegraphics[height=4.5cm]{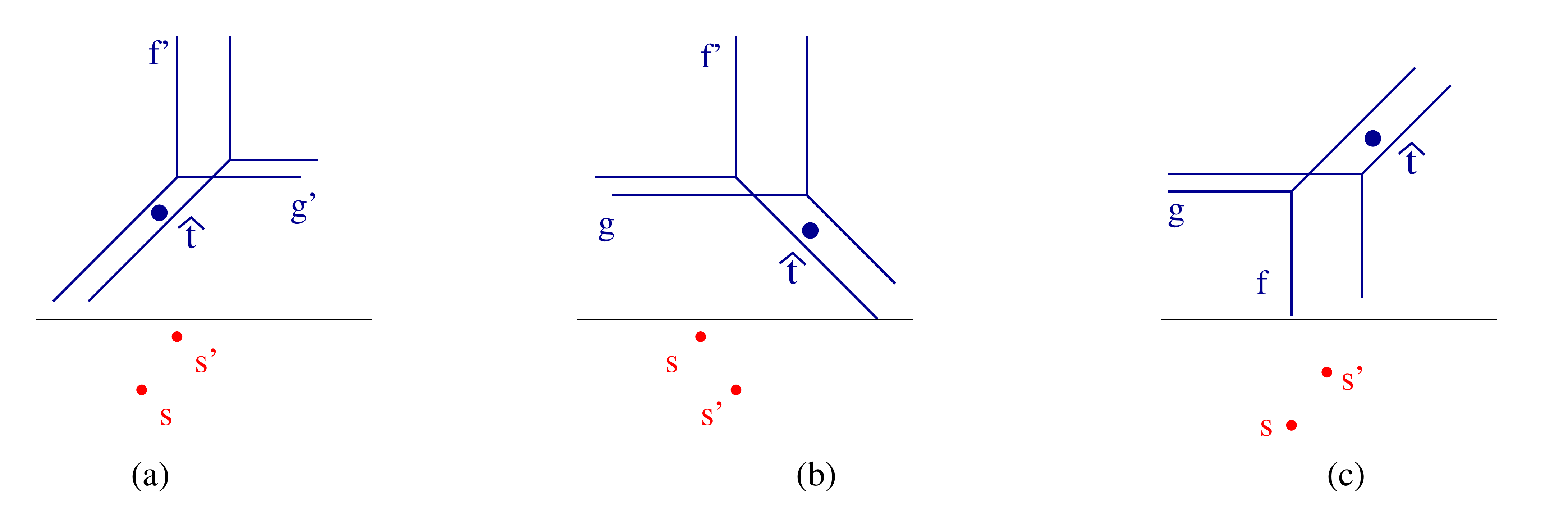}}
\caption{Illustrations to the proofs of Claims~\ref{cla:derivative}, \ref{cla:derivative2}, and \ref{cla:derivative3}}
\label{fig:linear}
\end{figure}

By assumption of the lemma, $f$ or $g$ are never (defined and) positive for any $s\in (0,B)\times (0,B).$ Therefore, the allocation is quasi-flipping for every $s\in(0,B)\times (0,B)$ for which $f'(s)>0$ and $g'(s)>0.$

Let us now fix $\bar s_1,$ and $\delta=1.$ Since $g'$ has a jump in at most countably many points, we can  pick a $\bar s_2$ so that $g'$ is continuous in $\bar s_2+q$ whenever  $q\in \mathbb Q$ and $(\bar s_2+q)\in(0,B).$ Restrict now $g'$ on all such $s_2=\bar s_2+q$ points. For the constant $\Delta=f'(\bar s_1+1)-f'(\bar s_1)$ Claim~\ref{cla:derivative} implies for the restricted $g'$ that $$
g'(s_2+1)=g'(s_2)+\Delta$$ for \emph{every}  $s_2$ in the restricted set. Further, by taking $\delta=1/2,$ we also obtain $
g'(s_2+1/2)=g'(s_2)+\Delta/2,$ and analogously $$
g'(s_2+1/2^r)=g'(s_2)+\Delta/2^r$$ over all $s_2$ in the restriction. By taking the limit $r\rightarrow \infty,$ we obtain that the restriction of $g'$ is linear. Then, using that $g'$ is increasing, and the set of $\bar s_2+q$ points is dense in $(0,B)$ we obtain that $g'$ must be truncated linear over \emph{all} $s_2>0.$  As a further consequence $g'$ is continuous, and independent of $s_1.$ (The short proof presented here does not exclude that  $g'$  jumps to $0$ in a certain $s_2.$ An argument as in the proof of Claim~\ref{cla:derivative} can exclude this.)

By analogous argument we obtain that $f'$ is independent of $s_2$ and linear over all $s_1>0$ for which $f'(s_1)>0.$ Moreover,  by Claim~\ref{cla:derivative} $f'$ has the same multiplicative constant $\mu$ as $g'.$

In summary, for $s>0,$ for the virtual payments of the $t$-player we obtain $P_1(s)=f'(s)=\max(\mu\cdot s_1+\gamma_1,0),$ and $P_2(s)=g'(s)=\max(\mu\cdot s_2+\gamma_2,0),$ and $P_{12}(s)=\max (P_1(s),P_2(s))$ follows from $f=0$ or $g=0.$ 

(We still need to exclude that the allocation is fully bundling for some $s;$ however this is easy to prove, because the region $R_{12}$ could only increase with increasing $s$ and could not disappear in $\bar s$ and in every higher $s.$) The described virtual payments define an affine minimizer over $t\in (0,\infty)\times (0,\infty)$ and $s\in (0,B)\times (0,B).$ If $\mu=0,$ then the mechanism is a constant mechanism (the allocation is independent of $s$). If $t_i=0$ or $s_i=0,$ then the allocation can be different from that of an affine minimizer.
\end{proof}

\begin{lemma}\label{lem:afmin2} Assume that $g'(s_2)=0$  for every $s_2\in [0,B).$ If an $\hat s\in(0,B)\times (0,B)$ exists so that the allocation $A[\hat s]$ of the $t$-player is quasi-bundling, $f'>0$ is continuous in the point $\hat s_1,$ and $g>0$ is continuous in the point $\hat s_2,$ then the mechanism is a relaxed affine minimizer (with $\gamma_2=\infty$).  

For the symmetric case, assume that $f'(s_1)=0$  for every $s_1\in [0,B).$ If an $\hat s\in(0,B)\times (0,B)$ exists so that the allocation $A[\hat s]$ of the $t$-player is quasi-bundling, $g'>0$ is continuous in the point $\hat s_2,$ and $f>0$ is continuous in the point $\hat s_1,$ then the mechanism is a relaxed affine minimizer (with $\gamma_1=\infty$).
\end{lemma}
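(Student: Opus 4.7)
The plan is to adapt the reasoning of Lemma~\ref{lem:afmin1} to the quasi-bundling setting; the central new ingredient is that the $s$-player's WMON, viewed in $s$-space for a fixed $t$, forces a three-region quasi-bundling-like partition whose slant boundary is a line of the form $s_1 + s_2 = \text{const}$. First I would extend the continuity of $f'$ and $g$ from $\hat{s}$ to a whole open neighborhood $U \subseteq (0,B) \times (0,B)$: since $f'$ has only countably many jump discontinuities in $s_1$ (Lemma~\ref{lem:cont}) and $g$ only countably many in $s_2$ (Lemma~\ref{lem:contf}), one can choose $U$ on which both remain continuous, positive, and the allocation stays quasi-bundling. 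By the symmetric analogs of Lemmas~\ref{lem:indep} and~\ref{lem:indepf}, on $U$ the function $f'$ depends only on $s_1$ and $g$ only on $s_2$.

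The crux is to show that $P_{12}(s) = f'(s_1) + g(s_2)$ depends only on $s_1 + s_2$ on $U$. Pick a $\hat{t}$ near the slant boundary between $R_{12}(\hat{s})$ and $R_\emptyset(\hat{s})$ so that, as $s$ ranges over $U$, the $s$-player at fixed $\hat{t}$ realises all three allocations $\emptyset, 2, 12$ (the allocation $1$ to the $s$-player is globally excluded because $g' \equiv 0$). Since the $s$-player is additive, the taxation principle applied to her payments $Q_\emptyset(\hat{t}), Q_2(\hat{t}), Q_{12}(\hat{t})$ shows that the boundary between $R^s_\emptyset(\hat{t})$ and $R^s_{12}(\hat{t})$ in $s$-space is the indifference line $s_1 + s_2 = Q_{12}(\hat{t}) - Q_\emptyset(\hat{t})$. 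But this same boundary coincides with $\{s \in U : P_{12}(s) = \hat{t}_1 + \hat{t}_2\}$, so every level set of $P_{12}$ over $U$ is a piece of an $s_1 + s_2 = \text{const}$ line; hence $P_{12}(s) = \phi(s_1 + s_2)$ on $U$ for some increasing function $\phi$. The identity $f'(s_1) + g(s_2) = \phi(s_1 + s_2)$ together with monotonicity of $f', g, \phi$ then gives, via a standard Pexider/Cauchy argument, $f'(s_1) = \mu s_1 + c_1$ and $g(s_2) = \mu s_2 + c_2$ with a common slope $\mu \geq 0$.

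Next I would extend the linearity to the full $s$-domain by covering $(0,B) \times (0,B)$ with such neighborhoods: any other $s^{*}$ at which $f'$ and $g$ are continuous and the allocation is quasi-bundling yields a local affine formula, and by Lemmas~\ref{lem:cont} and~\ref{lem:contf} all these local slopes must agree, so the formulas combine into $f'(s_1) = \max(0, \mu s_1 + c_1)$ and $g(s_2) = \max(0, \mu s_2 + c_2)$. Wherever a truncation becomes active (e.g.\ $\mu s_1 + c_1 \leq 0$), the region $R_1$ has empty interior and, since $g' \equiv 0$, the mechanism degenerates to a bundling mechanism on that portion of the domain; this residual behaviour is captured by an arbitrary increasing function $\xi$ of $\mu_s(s_1+s_2)$ exactly as in Definition~\ref{dfn:relaxed-affine-minimizers}. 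Setting $\mu_t = 1$, $\mu_s = \mu$, $\gamma_2 = \infty$, and choosing $\gamma_\emptyset, \gamma_1, \gamma_{12}$ to match $c_1, c_2$ and the identity $P_{12} = f' + g$, the pieces fit together into a relaxed affine minimizer with $\gamma_2 = \infty$; the symmetric case ($f' \equiv 0$) is handled by swapping the roles of $s_1 \leftrightarrow s_2$ and $f' \leftrightarrow g$. The main obstacle is the $P_{12}$-step: unlike Claim~\ref{cla:derivative}, direct WMON for the $s$-player between $\emptyset$ and $12$ only yields the trivial $s_1 + s_2 \leq s'_1 + s'_2$, so the $s_1 + s_2 = \text{const}$ shape of the slant must be extracted via the additive taxation principle (equivalently, by exploiting that all three $s$-player regions coexist over $U$), and care is needed to choose $\hat{t}$ so that both $R^s_\emptyset(\hat{t})$ and $R^s_{12}(\hat{t})$ have non-empty intersection with $U$.
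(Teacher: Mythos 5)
Your route is genuinely different from the paper's. The paper establishes the key linearity fact via Claim~\ref{cla:derivative2}: assuming unequal difference quotients, it constructs an explicit point $\hat t$ lying in $R_\emptyset(s)$ but in $R_{12}(s')$ for an $s'$ with $s'_1+s'_2<s_1+s_2$, contradicting WMON for the $s$-player. You instead invoke the taxation principle for the (additive) $s$-player directly: for a fixed $\hat t$, the boundary between $R^s_\emptyset(\hat t)$ and $R^s_{12}(\hat t)$ must be the indifference set $s_1+s_2 = Q_{12}(\hat t)-Q_\emptyset(\hat t)$, and matching this against the level set $\{s:P_{12}(s)=\hat t_1+\hat t_2\}$ gives $P_{12}$ a functional dependence on $s_1+s_2$ alone, after which a monotone-Pexider argument yields linearity of $f'$ and $g$ with a common slope. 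This is conceptually cleaner than the paper's pointwise $\delta,\delta'$ perturbation, and the way you treat the bundling tail (truncation of $f'$ at zero, the arbitrary $\xi$) and the symmetric case tracks the paper's conclusion.

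There is, however, a real gap at the very start. You claim one can choose an \emph{open} neighborhood $U\subseteq(0,B)\times(0,B)$ of $\hat s$ on which $f'$ and $g$ are both continuous, justified by the fact that each has only countably many jump discontinuities. That inference is false: a monotone function with countably many jumps can be discontinuous on a set dense in every interval (e.g. a jump function built over the rationals), so no open interval of continuity need exist. The paper sidesteps this by working on a \emph{dense} set of continuity points (e.g. restricting to $\bar s_2+q$, $q\in\mathbb Q$, in the proof of Lemma~\ref{lem:afmin1}) and then using monotonicity to extend. Your argument needs the same repair, because the decomposition $P_{12}(s)=f'(s_1)+g(s_2)$ with $f'$ independent of $s_2$ and $g$ independent of $s_1$ (via Lemmas~\ref{lem:indep} and~\ref{lem:indepf}) holds only at univariate continuity points. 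Also note that the claim that, for a suitable range of $\hat t$, the boundary between $R^s_\emptyset(\hat t)$ and $R^s_{12}(\hat t)$ sweeps through $U$ (so that the level sets you describe actually cover $U$) is asserted rather than shown, and requires an argument that the regions involved remain nonempty as $\hat t$ varies; the paper's verifications in Lemma~\ref{lem:afmin2} of the existence and definedness of $f,g,f',g'$ play exactly this role.
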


\begin{proof} Assume w.l.o.g. that the conditions of the first statement of the lemma hold. We use an analogous claim as in the previous lemma:  

\begin{claim}\label{cla:derivative2}
Assume that $f'>0$ is continuous as univariate function in the  points $s_1$ and $s_1+\delta,$ and $g>0$ is continuous in the points $s_2,$ and $s_2-\delta,$ furthermore the allocation of the $t$-player is quasi-bundling in a $2\delta$-neighborhood of  $s=(s_1,s_2).$ then $f'(s_1+\delta)-f'(s_1)=g(s_2)-g(s_2-\delta).$ 
\end{claim}
\begin{proof} (Figure~\ref{fig:linear} (b)) Assume for contradiction first that $f'( s_1+\delta)-f'(s_1)>g( s_2)-g( s_2-\delta).$ Since $f'$ is continuous in $ s_1+\delta,$ it even holds that $f'( s_1+\delta')-f'( s_1)>g( s_2)-g(s_2-\delta)$ for some $\delta'<\delta.$

Let $s'=(s_1+\delta',s_2-\delta).$ Now there exists a point $\hat t$ so that $\hat t\in R_{\emptyset}(s)$ but $\hat t \in R_{12}(s').$ In turn, for this bid $\hat t$ of the $t$-player the $s$-player receives both tasks $1,2$ if he bids $s,$ and he receives no task if he bids $s'.$ This contradicts WMON for the $s$-player because $s_1+s_2>s_1+s_2+\delta'-\delta=s_1'+s_2'.$

Assume now that $f'( s_1+\delta)-f'(s_1)<g( s_2)-g( s_2-\delta).$ Since $f'$ is continuous in $ s_1+\delta,$ it even holds that $f'( s_1+\delta')-f'( s_1)<g( s_2)-g(s_2-\delta)$ for some $\delta'>\delta.$

Let $s'=(s_1+\delta',s_2-\delta).$ Now there exists a point $\hat t$ so that $\hat t\in R_{12}(s)$ but $\hat t \in R_{\emptyset}(s').$ In turn, for this bid $\hat t$ of the $t$-player the $s$-player receives no task if he bids $s,$ and he receives both tasks if he bids $s'.$ This contradicts WMON for the $s$-player because $s_1+s_2<s_1+s_2+\delta'-\delta=s_1'+s_2'.$
\end{proof}

Since $g'\equiv 0,$ the allocation is quasi-bundling whenever $f'(s)>0$ and $g(s)>0.$ By fixing $\bar s_1$ we fix an $f'>0,$ because $f'$ is continuous in $\bar s_1$ and therefore independent of $s_2.$

Like in the proof of Lemma~\ref{lem:afmin1}, we can show that $g(s_2)=P_{12}(s_2)-P_1$ is a (truncated) linear function over $s_2\in (0,B).$ In turn, by fixing $\bar s_2,$ and thus fixing a $g>0,$ we obtain that $f'=P_1(s_1)$ is truncated linear over $s_1\in (0,B)$ with the same derivative as $g.$

Assume finally, that $f'(s_1)=0$ iff $s_1\leq D$, for some $D>0,$ and $g(s_2)= \mu s_2+C$ for an $C>0.$ Then for $(s_1+s_2)\in [0,D)$ the $P_{12}(s_1+s_2)$ is an arbitrary increasing function $P_{12}:[0,D)\rightarrow [0,C),$ because there exists no $s>0$ such that $s_1+s_2<D$ but $f'(s_1)>0$ would hold, and $g$ would be defined (and similarly for $t_1+t_2<C$) so $P_{12}$ needs not be linear over this interval. These $s$ and $t$ values having sum of coordinates below the given constants $D$ and $C,$ respectively, are the only exceptions where the mechanism is a 1-dimensional bundling mechanism and not an affine minimizer.
\end{proof}

\begin{lemma}\label{lem:afmin3} Assume that an $\hat s\in(0,B)\times (0,B)$ exists so that  $f'>0$ and $f>0$ are defined and both are continuous in the point $\hat s_1;$ and $g'>0$ and $g>0$ are defined and both are  continuous in the point $\hat s_2.$ If  the allocation $A[\hat s]$ of the $t$-player is quasi-flipping then the mechanism is an affine minimizer;  if $A[\hat s]$ is quasi-bundling, then the mechanism is a relaxed affine minimizer.  
\end{lemma}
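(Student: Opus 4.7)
The plan is to combine and extend the ideas of Lemmas~\ref{lem:afmin1} and \ref{lem:afmin2}, exploiting the algebraic identity
\[
f(s)-f'(s) \;=\; (P_{12}(s)-P_2(s))-P_1(s) \;=\; (P_{12}(s)-P_1(s))-P_2(s) \;=\; g(s)-g'(s),
\]
which holds at every $s$ where all four critical-value functions are defined. The strategy is: first work in a small neighborhood of $\hat s$ where monotonicity together with continuity at $\hat s$ guarantees that the quasi-flipping or quasi-bundling character and the continuity of $f', g', f, g$ are preserved; then use the respective derivative claims to pin down common slopes; finally propagate the affine structure to the whole domain by a dense-grid argument as in the earlier lemmas.

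For the quasi-flipping case, I would apply Claim~\ref{cla:derivative} verbatim in a neighborhood of $\hat s$ to obtain $f'(s_1+\delta)-f'(s_1)=g'(s_2+\delta)-g'(s_2)$, so $f'$ and $g'$ are truncated affine in $s_1$ and $s_2$ respectively, with a common slope $\mu>0$. By Lemma~\ref{lem:indep} and its symmetric counterpart, $f'$ depends only on $s_1$ and $g'$ only on $s_2$; analogously Lemma~\ref{lem:indepf} (and its symmetric version) gives $f$ depends only on $s_1$ and $g$ only on $s_2$. Plugging into the identity above, the quantity $f(s_1)-f'(s_1)$ (a function of $s_1$ only) equals $g(s_2)-g'(s_2)$ (a function of $s_2$ only), so both sides must be constant. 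Hence $f$ and $g$ are also truncated affine with slope $\mu$. The virtual payments take the form $P_1=\mu s_1+\gamma_1$, $P_2=\mu s_2+\gamma_2$, $P_{12}=\mu(s_1+s_2)+\gamma_{12}$, so $A$ is an affine minimizer; a propagation argument along a dense continuity grid, analogous to Lemma~\ref{lem:afmin1}, then extends this structure to all of $(0,B)^2$.

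For the quasi-bundling case, Claim~\ref{cla:derivative2} yields $f'(s_1+\delta)-f'(s_1)=g(s_2)-g(s_2-\delta)$, showing that $f'$ and $g$ share a common slope $\mu_1$, and its mirror image (swap tasks $1$ and $2$) gives $g'$ and $f$ with a common slope $\mu_2$. The identity $f-f'=g-g'$ then reads $(\mu_2-\mu_1)s_1+C_1=(\mu_1-\mu_2)s_2+C_2$ on the relevant domain, forcing $\mu_1=\mu_2=:\mu$. Propagation gives the affine minimizer structure wherever $f',g'>0$, and for $s$ with $\mu(s_1+s_2)$ below the threshold $\min(\gamma_1,\gamma_2)-\gamma_\emptyset$ only the allocations $\emptyset$ and $12$ remain, producing the bundling tail governed by an arbitrary monotone $\xi$ exactly as in Definition~\ref{dfn:relaxed-affine-minimizers}.

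I expect the main obstacle to be the propagation step, where one must extend the local affine identities to the entire domain using only the continuity-almost-everywhere and monotonicity of the boundary functions, while simultaneously controlling the truncations (where some $P_i$ would turn negative) and verifying that the allocation type stays consistent along the propagation path. A secondary subtlety is the smooth matching of the affine minimizer with its bundling tail at the curves $\mu s_1+\gamma_1=0$ or $\mu s_2+\gamma_2=0$, which is precisely the boundary condition $\min(\gamma_1,\gamma_2)-\gamma_{12}=\xi(\min(\gamma_1,\gamma_2)-\gamma_\emptyset)$ appearing in the definition of a relaxed affine minimizer.
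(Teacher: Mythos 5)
Your proof is correct and follows the paper's overall structure: restrict to a small neighborhood of $\hat s$ where the allocation type and the univariate continuity of the boundary functions persist, extract a common slope via the derivative claims, and propagate to the whole domain by the dyadic $\delta/2^r$ argument of Lemma~\ref{lem:afmin1}. Where you diverge is in how the slopes of the four boundary functions are tied together. The paper handles the quasi-flipping case by invoking \emph{both} Claim~\ref{cla:derivative} (for $f',g'$) and Claim~\ref{cla:derivative3} (for $f,g$), extracting a shared slope by fixing one coordinate at a time; in the quasi-bundling case it applies Claim~\ref{cla:derivative2} and its mirror and asserts, somewhat tersely, that all four are linear with the same constant. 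You instead exploit the (trivially valid) identity $f-f'=P_{12}-P_1-P_2=g-g'$: once Lemmas~\ref{lem:indep} and \ref{lem:indepf} and the derivative claims make two of the four functions affine and of one variable each, the identity separates variables and forces $f-f'$ and $g-g'$ to be a common constant, transferring the slope to the remaining two functions for free. This eliminates Claim~\ref{cla:derivative3} from the quasi-flipping argument and makes the matching $\mu_1=\mu_2$ in the quasi-bundling case explicit rather than implicit. The two places you flag as delicate --- the propagation step (almost-everywhere continuity plus monotonicity plus persistence of the allocation type, which follows because the sign of $f-f'$ is constant) and the gluing of the affine part to the bundling tail at the threshold, where $P_{12}$ is only pinned down up to an arbitrary increasing $\xi$ --- are exactly where the paper spends the corresponding effort, so your route closes them in the same way.
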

\begin{proof}
Assume first that the allocation $A[\bar s]$ is quasi-flipping, i.e., $0<f(\bar s)<f'(\bar s)$ and $0<g(\bar s)<g'(\bar s).$ A claim analogous to Claim~\ref{cla:derivative} says that $f$ and $g$ must have equal difference quotients:

\begin{claim}\label{cla:derivative3}
Assume that $f>0$ is continuous as univariate function in the  points $s_1$ and $s_1+\delta,$ and $g>0$ is continuous in the points $s_2,$ and $s_2+\delta,$ furthermore the allocation of the $t$-player is quasi-flipping in a $2\delta$-neighborhood of  $s=(s_1,s_2).$ Then $f( s_1+\delta)-f(s_1)=g( s_2+\delta)-g( s_2).$ 
\end{claim}

Let us now fix $s_1=\bar s_1.$ Since $f$ and $f'$ are independent of $s_2$ for this fixed $\bar s_1,$ Claims~\ref{cla:derivative} and \ref{cla:derivative3} can be applied, (as in the proof of Lemma~\ref{lem:afmin1}), to prove that $g$ and $g'$ are truncated linear functions with the same multiplicative constant $\mu$ that is at the same time the derivative of $f$ and $f'$ in the point $\bar s_1.$ (For the applicability of the claims notice that $A[s]$ remains quasi-flipping in some neighborhood of $\bar s,$ since $f<f'$ are continuous in $\bar s_1,$ and in continuous points even independent of $s_2,$ so $f<f'$ must hold in a neighborhood.) Now, fixing $\bar s_2$ and switching the roles of $f$ and $g$ we obtain that $f$ and $f'$ are also truncated linear functions  with multiplicative constant $\mu.$ 

The allocation remains quasi-flipping (because $f<f'$) as long as $f'>0$ and $g'>0.$ At least one of $f$ or $g$ is defined for every $s\in[0,B)\times [0,B).$ Therefore, at least three of $f,g,f',g'$ are always uniquely defined (by the respective truncated linear functions), and they determine the allocation of the $t$-player (up to tie-breaking), which determines the allocation of the $s$-player. The mechanism must be an affine minimizer.

Second, assume that $A[\bar s]$ is quasi-bundling. We can use Claim~\ref{cla:derivative2} to prove that $f,g,f'$ and $g'$ are truncated linear with the same multiplicative constant $\mu$ on their whole domain, and $A[s]$ remains quasi-bundling as long as $f'>0$ or $g'>0$ holds. This determines the allocation as an affine minimizer if $f'>0$ or $g'>0,$ because then at least three of $f,g,f'$ or $g'$ are defined. This is not the case if $f'=g'=0,$ i.e. when the allocation is fully bundling. $P_{12}(x)$ is determined only if an $s=(s_1,s_2)$ with $x=s_1+s_2$ exists so that $f'(s_1)>0$ or $g'(s_2)>0.$ If (for $x<D$ for a given constant $D>0$) no such $s$ exists, then $P_{12}(x)$ is some arbitrary increasing function $P_{12}:[0,D)\rightarrow[0,C)$ and the mechanism becomes 1-dimensional. $C$ is here a constant playing the same role for the $t$ as $D$ for $s.$ All in all, the mechanism is a relaxed affine minimizer.  
\end{proof}

\noindent {\bf CASE A.   $\exists \bar s=(\bar s_1,\bar s_2)\in (0,B)\times (0,B)$ so that $f'(\bar s)>0,$ $f'$ as a function of $s_1$ is continuous in the point $\bar s_1,$ and the allocation of the $t$-player is quasi-flipping or quasi-bundling, (OR the analogous case holds for $g'$);}

\begin{lemma} In CASE A. the conditions of at least one of Lemmas~\ref{lem:afmin1}, \ref{lem:afmin2} or \ref{lem:afmin3} hold, and so the mechanism is a (relaxed) affine minimizer over positive valuations $s$ and $t.$
\end{lemma}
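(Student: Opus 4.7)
By the stated symmetry between $f'$ and $g'$, it suffices to treat the case where $f'(\bar s)>0$ and $f'$ is continuous at $\bar s_1$ as a univariate function; the argument after swapping the roles of the two players is identical. The goal is to produce a point $\hat s$ that satisfies the hypotheses of one of Lemmas~\ref{lem:afmin1}, \ref{lem:afmin2}, or \ref{lem:afmin3}, so that the conclusion (relaxed affine minimizer structure) can be read off from that lemma.

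The first step is a \emph{continuity reduction}. The univariate functions $f'(\cdot)$, $g'(\cdot)$, $f(\cdot)$, $g(\cdot)$ are monotone (Lemmas~\ref{lem:cont} and \ref{lem:contf}), so each has at most countably many jump discontinuities. Hence I can perturb $\bar s_2$ to some $\hat s_2$ arbitrarily close to it so that $g'$, and $g$ if defined, are continuous at $\hat s_2$; and, if needed, perturb $\bar s_1$ to a nearby $\hat s_1$ at which $f$ is continuous as well. By Lemmas~\ref{lem:indep} and \ref{lem:indepf}, at these continuity points the functions are independent of the other coordinate, so the perturbation changes $f',g',f,g$ by an arbitrarily small amount; in particular $f'(\hat s)>0$ is preserved, and by monotonicity the strict inequalities $f'>f$ (in the quasi-flipping case) or $f>f'$ (in the quasi-bundling case) persist, so $A[\hat s]$ has the same non-crossing shape as $A[\bar s]$.

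The second step is a case split on the shape of $A[\bar s]$. In the \emph{quasi-flipping} case, $R_1(\bar s)$ and $R_2(\bar s)$ are both nonempty, so $g'(\hat s)>0$ automatically. If globally $f$ and $g$ are both zero or undefined on $(0,B)\times(0,B)$, then Lemma~\ref{lem:afmin1} applies directly at $\hat s$. Otherwise $f$ or $g$ is positive at some point $s^\star$, and by the monotonicity of $f$ and $g$ (Lemma~\ref{lem:contf}) I can enlarge $\hat s$ coordinatewise so that $f,g$ become defined and strictly positive while the quasi-flipping shape and the required continuity are preserved; Lemma~\ref{lem:afmin3} then applies. In the \emph{quasi-bundling} case, $f(\bar s),g(\bar s)$ are automatically defined and strictly positive (they strictly exceed $f',g'\geq 0$). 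If $g'\equiv 0$ on $(0,B)\times(0,B)$, the first part of Lemma~\ref{lem:afmin2} applies at $\hat s$; the symmetric situation $f'\equiv 0$ uses the second part. Otherwise both $f'$ and $g'$ are positive somewhere, and I again use monotonicity to slide $\hat s$ into a region where all four of $f,g,f',g'$ are simultaneously positive and continuous at their respective coordinates, so Lemma~\ref{lem:afmin3} applies.

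The main obstacle is verifying that the perturbation from $\bar s$ to a nearby continuity point $\hat s$, together with any further coordinatewise enlargement used to make $f,g$ positive, does not move the allocation out of the chosen non-crossing class. This is exactly what the combination of monotonicity (Lemmas~\ref{lem:cont}, \ref{lem:contf}) and the independence-at-continuity-points statements (Lemmas~\ref{lem:indep}, \ref{lem:indepf}) buys us: strict inequalities between the critical-value functions are open conditions preserved under sufficiently small monotone perturbations, and the countable-exception nature of the discontinuities leaves enough room to realize any finite list of continuity demands simultaneously.
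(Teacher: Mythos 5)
Your overall architecture coincides with the paper's: split on whether $A[\bar s]$ is quasi-flipping or quasi-bundling, subdivide according to whether $f,g$ are positive somewhere, and funnel each branch into Lemma~\ref{lem:afmin1}, \ref{lem:afmin2}, or \ref{lem:afmin3}. However, there is a genuine gap at the step you describe as ``enlarge $\hat s$ coordinatewise so that $f,g$ become defined and strictly positive while the quasi-flipping shape and the required continuity are preserved.'' Your justification---that strict inequalities between critical-value functions are open conditions preserved under \emph{sufficiently small} monotone perturbations---covers only the infinitesimal move to nearby continuity points. The move needed to make $f$ or $g$ strictly positive is to a point $s'$ that may be \emph{far} from $\hat s$ (you must go coordinatewise above some $s^\star$ where $f$ or $g$ first becomes positive), and nothing in monotonicity alone prevents the allocation from switching from quasi-flipping to crossing or quasi-bundling along the way. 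This is exactly where the paper invests its effort: in its Case 1.1.2.2 it assumes for contradiction that $A[s']$ is crossing or quasi-bundling and exhibits a specific bid $\hat t$ for which the $s$-player's allocation violates WMON; its Cases 1.2 and 2.2.1 contain analogous ad hoc WMON contradictions. Without such an argument your reduction to Lemma~\ref{lem:afmin3} is unsupported.

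Two further points. First, in the quasi-bundling case you assert that $f(\bar s)$ and $g(\bar s)$ are ``automatically defined and strictly positive.'' This is false when $g'(\bar s)=0$: the function $f=P_{12}-P_2$ is defined only when $R_{12}\neq\emptyset$ \emph{and} $R_2\neq\emptyset$ (or $P_{12}=P_2$), and $g'=P_2=0$ permits $R_2=\emptyset$ with $P_{12}>0$, leaving $f$ undefined. This is precisely the distinction between the paper's Cases 2.1 and 2.2, and it matters because it is what routes you to Lemma~\ref{lem:afmin2} rather than Lemma~\ref{lem:afmin3}. Second, even the small perturbation to continuity points needs more care than you give it: when $\bar s_1$ is decreased to restore continuity of $f$, one must check that $f$ \emph{remains defined} at the perturbed point (the paper does this with another explicit WMON argument involving a point $\hat t=(f'(\bar s)-\epsilon,\,g'(\bar s)-2\epsilon)$). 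As written, your proof is a correct skeleton of the paper's case analysis with the load-bearing WMON arguments omitted.
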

\begin{proof}

\item {\bf Case 1.  in $\bar s$ the allocation of the $t$-player is quasi-flipping}

Since the allocation is quasi-flipping, it follows that $f'>0$ and $g'>0.$ Moreover at least one of $f$ or $g$ is defined  in $\bar s,$ in which case $f<f'$ and $g<g',$ respectively.

\item {\bf Case 1.1     $\,\,f$ is defined for $\bar s$ in the allocation of the $t$-player}

We show first that we can assume w.l.o.g. that all of $f',f$ and $g'$ are continuous as univariate functions in the points $\bar s_1,$ and $\bar s_2,$ respectively, otherwise we could slightly decrease $\bar s_1,$ and after that, slightly increase $\bar s_2,$ so that the continuity  of all these critical value functions holds, and the allocation is still quasi-flipping: 

If $f'$ and $f$ are  both continuous in $\bar s_1,$ then set $\hat s_1=\bar s_1.$ If  $f'$ and $f$ are not \emph{both} continuous in $\bar s_1,$ then we can reduce $\bar s_1$ to $\hat s_1=\bar s_1-\delta$ for an arbitrarily  small value $\delta$ so that $f'$ remains nearly the same and continuous, and  $f$ is continuous (if still defined) in $\hat s_1.$  Since $f'$ does not change much, and $f$ cannot increase, the allocation is still quasi-flipping in $(\hat s_1,\bar s_2).$

We need to show that $f$ is still defined in $(\hat s_1,\bar s_2).$ For this purpose, consider a point $\hat t=(f'(\bar s)-\epsilon, g'(\bar s)-2\epsilon).$ Since the allocation is quasi-flipping in $\bar s,$ it holds that $\hat t\in R_{2}(\bar s),$ but if  $f'(\hat s_1)>\bar s_1-\epsilon,$ then it would get into  $R_{1}$ or into $R_{12},$  unless the allocation is still quasi-flipping with a well-defined $f.$ Therefore, for this $\hat t,$ $\bar s\in R^s_1,$ but $(\hat s_1,\bar s_2)\in R^s_2\cup R^s_{\emptyset}$ would hold, contradicting WMON. 

Now, given that $f'$ and $f$ are continuous in $\bar s_1,$ we can set $\hat s_2\geq \bar s_2,$ so that  $g'$ is continuous, and if $f>0$ then $g$ is (defined and) continuous in $\hat s_2;$   $f$ and $f'$ do  not change, since in $\bar s_1$ both are independent of $s_2.$ Therefore, the allocation in $\hat s$ is quasi-flipping.

\item {\bf Case 1.1.1     $\,\,f(\hat s)>0$}

By Lemma~\ref{lem:afmin3}, in this case the mechanism is an affine minimizer.

\item {\bf Case 1.1.2     $\,\,f(\hat s)=0$}

\item {\bf Case 1.1.2.1  $\,$ both $f$ and $g$ are either zero or undefined for every $(s_1,s_2)$  }

By Lemma~\ref{lem:afmin1}, in this case the mechanism is an affine minimizer.

\item {\bf Case 1.1.2.2     $\,\,\exists s\,$ so that  $f(s)>0$ or $g(s)>0$}

We show that even a  point $s'$ exists where the allocation is quasi-flipping, and each  of $f,f',g,g'$ are positive and continuous in $s'$ in as univariate functions. Then the allocation is an affine minimizer by Lemma~\ref{lem:afmin3}. 

Recall that $f', $ and $f=0$ are continuous in the point $\hat s_1,$ and $g'$ is continuous  in the point $\hat s_2,$ moreover the allocation $A[\hat s]$ is quasi-flipping.
Since by assumption an $s=(s_1,s_2)$ exists so that  $f(s)>0$ or $g(s)>0,$ this must be the case for every $s'$ with  $s'_1\geq s_1$ and  $s'_2\geq s_2,$ by Lemma~\ref{lem:contf}. Therefore we can assume w.l.o.g. that $s_1'> \hat s_1$ and $s_2'>\hat s_2.$
But then, by the same lemma $f$ is defined in $s'$ (since $f$ is defined in $\hat s$), and it cannot be that $f(s')=0,$ because then $g$ would be undefined (or zero) in $s'$ as well. So $f(s')>0$ holds. Since $f(s'),f'(s')$ and $g'(s')$ are all positive, $g(s')$ must be defined and $g(s')>0$ must hold, too. Finally, we assume w.l.o.g. that $f'$ and $f$ are continuous in $s_1'$ and  $g'$ and $g$ are continuous in the point $s_2',$ because the (univariate) continuity holds in almost every $s_1'$ and $s_2'.$

We need to show that the allocation is quasi-flipping in $s'.$ We first consider the allocation in the point $(\hat s_1,s_2').$ Since for $\hat s_1$ the $f$ and $f'$ are independent of $s_2,$ they do not change as $\hat s_2$ gets changed to $s_2'$, and the allocation remains quasi flipping. 
Now let us increase $\hat s_1$ to $s_1'.$ We prove  that  $A[s']$ is still quasi-flipping, so that Lemma~\ref{lem:afmin3} can be applied with $s'$ playing the role of $\hat s.$

Assume for contradiction that $A[s']$ is crossing or quasi-bundling. Since in $s'$ each of $f,f',g,g'$ are defined and continuous, we can decrease $s_2'$ to $s_2'-\delta,$ by an arbitrarily small $\delta,$ the $f$ and $f'$ do not change, and the allocation is still crossing or quasi-bundling. Now there is a point $\hat t=(0,g'(s_2')-\epsilon)$ so that $\hat t\in R_1(\hat s_1,s_2')$ but $\hat t\in R_{12}(s_1',s_2'-\delta).$ This contradicts WMON for the $s$-player for this $\hat t,$ because for $(\hat s_1,s_2')$ he gets task 2, but with bid $(s_1',s_2'-\delta)$ he gets no task.

\item {\bf Case 1.2    $\,\,g$ is defined for $\bar s$ in the allocation of the $t$-player}

There exists a  $s_2^*> \bar s_2$ so that in this point $g'$ and $g$ are both continuous as univariate functions. Note that $f'(\bar s_1,s_2^*)=f'(\bar s)$ by Lemma~\ref{lem:indep}. We claim that in the point $(\bar s_1,s_2^*)$ the allocation of the $t$-player is quasi-flipping, so that  we can use the same argument as in CASE 1.1.1, with $g$ and $g'$ playing the roles of $f$ and $f',$ respectively. This will prove that the mechanism is an affine minimizer.

Assume for contradiction that for $(\bar s_1,s_2^*)$ the allocation is crossing or quasi bundling. Notice that  in this case $g(s_2^*)\geq g'(s_2^*)\geq g'(\bar s_2)>0.$ The first inequality holds since the allocation became crossing or quasi-flipping; the second holds by the monotonicity of $g'.$

Finally, we reduce $\bar s_1$ to $s_1^*=\bar s_1-\delta,$ so that $f'(s_1^*)\approx f'(\bar s_1).$ Let $s^*=(s_1^*,s_2^*).$ Since $g$ and $g'$ are independent of $s_1,$ and $f'(s_1^*)\approx f'(\bar s_1),$ the allocation of the $t$-player is nearly the same for $s^*$ as for $(\bar s_1,s_2^*),$ and in particular $g(s^*)\geq g'(\bar s)$ still holds.

Now, using $g(s^*)\geq g'(\bar s),$  and $f'(s^*)\approx f'(\bar s),$ it can be shown that now there exists a $t$ point so that $t\in R_2(\bar s)$ but $t\in R_{12}(s^{*}).$ Consequently, for this $t$ point, $\bar s\in R^s_{1}(t)$ and $s^{*}\in R^s_{\emptyset}(t),$ contradicting monotonicity, since  $s_1^*=\bar s_1-\delta.$

\item {\bf Case 2.  in $\bar s$ the allocation $A[\bar s]$ of the $t$-player is quasi-bundling } 

Since $f'>0,$ and the allocation is quasi-bundling, $g(\bar s)$ is defined and positive.

\item {\bf Case 2.1    $\,g'(\bar s)>0$}

Since $g'>0,$ and the allocation is quasi-bundling, $f(\bar s)$ is also defined, and $f'<f.$ If $f$ is continuous in the point $\bar s_1$ then set $\hat s_1=\bar s_1.$ Otherwise set $\hat s_1=\bar s_1+\delta$ for some small $\delta>0,$ so that both $f$  and $f'$ are continuous in the point $\hat s_1,$ and therefore independent of $s_2.$ If $\delta$ is small enough then the allocation remains quasi-bundling, because $f$ is increasing in $s_1$ and $f'$ is continuous in the point $\bar s_1.$ Finally, if $g'$ or $g$ are both continuous in the point $\bar s_2,$ then set $\hat s_2=\bar s_2.$ Otherwise set $\hat s_2=\bar s_2+\delta$ so that $g'$ and $g$ are continuous (and positive) in the point $\hat s_2.$ Since $f'<f$ still holds, the allocation is quasi-bundling, and by Lemma~\ref{lem:afmin3} it is a relaxed affine minimizer.

\item {\bf Case 2.2    $\,g'(\bar s)=0$}

\item {\bf Case 2.2.1.  $\,\exists s_2$ so that $g'(s_2)>0$}

Our goal is to increase $\bar s_1$ and $\bar s_2$  so that $g'$ becomes positive, and CASE 1.2.1  applies.

We pick an $\hat t=(f'(\bar s_1)+\epsilon, 0)\in R_{12}(\bar s).$ This is possible, since $A[\bar s]$ is quasi-bundling. It holds then the $s$-player gets no task with bids $\bar s$ and $\hat t.$
We increase $\bar s_1$ to $s_1^*$ so that $f'(s_1^*)<f'(\bar s_1)+\epsilon,$ and  $f'$ is continuous as univariate function in $s_1^*.$ This is possible since $f'$ is continuous in $\bar s_1.$ We also increase $\bar s_2$ to $s_2^*,$ so that $g'$ is strictly positive in the point $s_2^*.$ We claim, that in $s^*=(s_1^*,s_2^*)$ the allocation $A[s^*]$ of the $t$-player is quasi-bundling, and therefore with $s^*$ in the role of $\bar s,$ the proof can be completed as in CASE 1.2.1. 

For $\hat t$ the $s$-player gets no task with bid $\bar s$ (see above), thus by WMON he also gets no task with bid  $s^*,$ since $\bar s<s^*.$ In turn, the $t$-player gets both tasks with bids $s^*$ and $\hat t.$
Given that $f'(s_1^*)<f'(\bar s_1)+\epsilon=\hat t_1,$ this proves that the allocation $A[s^*]$ of the $t$-player is quasi-bundling.

\item {\bf Case 2.2.2.  $\,g'(s_2)=0$ for all $s_2\in [0,B)$}

If $g$ is continuous in the point $\bar s_2,$ then we can apply Lemma~\ref{lem:afmin2} with $\hat s=\bar s.$

If $g$ has a jump in $\bar s_2,$ then let $\hat s_1=\bar s_1,$ and $\hat s_2=\bar s_2+\delta$ so that $g$ is continuous in the point $\hat s_2.$   Then $g$ is defined in $\hat s,$ and  $g(\hat s_2)>g(\bar s_2)>0$ by Lemma~\ref{lem:contf}. Since $g'(\hat s_2)=0$ by the case assumption, $g(\hat s)>g'(\hat s),$ so the allocation is quasi-bundling, and again we can apply Lemma~\ref{lem:afmin2} with $\hat s=\bar s.$

\end{proof}

\noindent{\bf CASE B.  CASE A. does not hold, but $\exists \bar s=(\bar s_1,\bar s_2)\in (0,B)\times (0,B)$ so that $f'(\bar s)=0,$ $f'$ as a function of $s_1$ is continuous in the point $\bar s_1,$ and the allocation of the $t$-player is quasi-bundling, (OR the analogous case holds for $g'$);} 

\begin{claim} In CASE B $f'(s)\equiv 0$ and $g'(s)\equiv 0$ for every $s\in [0,B)\times [0,B).$
\end{claim}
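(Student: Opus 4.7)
The plan is to argue by contradiction for each of $f'\equiv 0$ and $g'\equiv 0$ separately, exploiting that in CASE~B any $s$ at which $f'$ (resp.\ $g'$) is both positive and continuous in $s_1$ (resp.\ $s_2$) must have a \emph{crossing} allocation, for otherwise the $f'$- (resp.\ $g'$-)branch of CASE~A would be triggered. The contradiction is produced by choosing a test point $\hat t$ inside the quasi-bundling region $R_{12}(\bar s)$ whose ownership under a slightly perturbed $s$-profile flips in a direction that violates WMON for the $s$-player.

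For $f'\equiv 0$, I would suppose $f'(s^\dagger)>0$ at some $s^\dagger$. By Lemma~\ref{lem:cont} (monotonicity of $f'$ in $s_1$), continuity of $f'$ at $\bar s_1$ with $f'(\bar s_1,\cdot)=0$, and continuity at almost every $s_1$, we can pick $s_1^{**}>\bar s_1$ arbitrarily close to $\bar s_1$ at which $f'$ is continuous and $c:=f'(s_1^{**},\cdot)$ is arbitrarily small; by Lemma~\ref{lem:indep} this value is independent of $s_2$. Since CASE~A fails and $f'(s_1^{**},\cdot)>0$ is attained continuously, the allocation at $s'':=(s_1^{**},\bar s_2)$ must be crossing, so $f(s'')=f'(s'')=c$. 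The quasi-bundling region $R_{12}(\bar s)$ has positive area and a diagonal boundary reaching $t_1$ values bounded away from zero, so for $c$ small enough we can pick $\hat t\in R_{12}(\bar s)$ with $\hat t_1>c$. At bid $\bar s$ the $s$-player receives $\emptyset$ against $\hat t$; at $s''$, since $\hat t_1>f'(s'')$, the crossing geometry places $\hat t$ in $R_2(s'')\cup R_\emptyset(s'')$, so the $s$-player receives either $\{1\}$ or $\{1,2\}$—in particular, task~$1$ in both subcases. Writing out WMON for the $s$-player between bids $\bar s$ and $s''$ (same $s_2=\bar s_2$, higher $s_1$) against $\hat t$ yields $s_1^{**}\leq\bar s_1$ in either subcase, a contradiction.

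For $g'\equiv 0$, once $f'\equiv 0$ is established we have $R_1(s)=\emptyset$ at every $s$. The quasi-bundling hypothesis at $\bar s$ together with the failure of CASE~A first forces $g'(\bar s)=0$: if $g'(\bar s)>0$ held at a continuity point of $g'$ in $s_2$, CASE~A's $g'$-branch would trigger directly, while the remaining subcase (where $g'$ jumps at $\bar s_2$) is ruled out by perturbing $\bar s_2$ to a nearby continuity point of $g'$ and checking that the quasi-bundling structure persists there. One then repeats the Step~1 argument with $g'$ in place of $f'$: assuming $g'(s^\ddagger)>0$, pick $s_2^{**}>\bar s_2$ close to $\bar s_2$ at which $g'$ is continuous and $g'(\cdot,s_2^{**})$ is small; at $s''':=(\bar s_1,s_2^{**})$ the allocation is crossing with $f(s''')=f'(s''')=0$, so $R_{12}(s''')=\emptyset$ and only $R_2(s''')\cup R_\emptyset(s''')$ is nonempty. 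Choosing $\hat t\in R_{12}(\bar s)$ with small $\hat t_1>0$ and $\hat t_2$ close to the top of the diagonal $P_{12}(\bar s)$ places $\hat t\in R_\emptyset(s''')$; the $s$-player's allocation then changes from $\emptyset$ at $\bar s$ to $\{1,2\}$ at $s'''$, and WMON (same $s_1$, higher $s_2$) forces $s_2^{**}\leq\bar s_2$, a contradiction.

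The main obstacle is the perturbation step at the beginning of Step~2: CASE~B's data provides regularity of $f'$ at $\bar s_1$ but says nothing about $g'$ at $\bar s_2$. Reducing to a reference point at which $g'(\bar s)=0$ and $g'$ is continuous at $\bar s_2$—without destroying the quasi-bundling structure that supplies the test point $\hat t$—is the delicate ingredient. Step~1 is by comparison immediate because the regularity hypotheses it needs are already built into the CASE~B assumption.
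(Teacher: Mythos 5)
Your Step~1 is where the real gap lies, not Step~2. You claim that from continuity of $f'$ at $\bar s_1$ with $f'(\bar s_1,\cdot)=0$ you can pick $s_1^{**}>\bar s_1$ at which $f'$ is continuous and $c:=f'(s_1^{**},\cdot)$ is simultaneously \emph{positive} and \emph{arbitrarily small}. This does not follow. Continuity of $f'$ at $\bar s_1$ (together with monotonicity) only tells you that $f'(s_1,\cdot)\to 0$ as $s_1\to\bar s_1^+$; it is entirely consistent with $f'(s_1,\cdot)=0$ on a whole interval $[\bar s_1,a]$ and then a jump at $a$ to some value $M\ge f(\bar s)$. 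In that scenario every $s_1^{**}$ with $f'(s_1^{**},\cdot)>0$ satisfies $f'(s_1^{**},\cdot)\ge M\ge f(\bar s)$, so there is no $\hat t\in R_{12}(\bar s)$ with $\hat t_1>c$ (recall the $t_1$-extent of the quasi-bundling region $R_{12}(\bar s)$ is exactly $f(\bar s)$, since $f'(\bar s)=0$ and $g(\bar s)=P_{12}(\bar s)$). Your WMON contradiction then simply cannot be set up. Nothing in the CASE~B data rules out this jump --- indeed ruling it out is precisely (part of) what the claim asserts, so invoking it here would be circular.

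The paper sidesteps this by doing $g'\equiv 0$ \emph{first}: it varies $s_2$ to a point $s_2^*>\bar s_2$ where $g'$ is positive and continuous, and only nudges $s_1$ by a tiny $\delta$, so that $f'(\bar s_1+\delta,\cdot)$ stays small \emph{by the CASE~B continuity hypothesis at} $\bar s_1$ --- the one piece of regularity CASE~B actually hands you. The test point $\hat t=(\epsilon,\epsilon)$ with $f'(\bar s_1+\delta)<\epsilon<g'(s_2^*)$ then lands in $R_2(\bar s_1+\delta,s_2^*)$ and the $s$-player's transition $\emptyset\to\{1\}$ violates WMON because both coordinates of $s$ weakly increased. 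Crucially, the magnitude of $g'(s_2^*)$ is irrelevant here. Only once $g'\equiv 0$ is in hand (and hence trivially continuous at every $s_2$) can one run the symmetric argument to deduce $f'\equiv 0$, now free to push $s_1$ far because the needed regularity of $g'$ is automatic. Your ordering --- $f'$ first --- attacks the side for which CASE~B gives you no useful control, which is exactly why your Step~1 breaks and why the ``delicate ingredient'' you flag for Step~2 is in fact absorbed for free once the correct order is used.
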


\begin{proof} In $\bar s$ the $f'$ is independent of $s_2,$ whereas $g'$ is increasing in $s_2.$ Assume first for contradiction that $g'>0$ for some $s_2,$ and take an $s_2^*>\bar s_2$ point where $g'$ is continuous, and strictly positive.

Since CASE A does not hold (for $g'$), in $(\bar s_1, s_2^*)$ the allocation of the $t$-player must be crossing, while $f'=0$ still holds.

For $\bar s$ the allocation was bundling, so there is a point $t=(\epsilon, \epsilon)$ that is in $R_{12}(\bar s)$ (because of the bundling alloc.) and in $R_2(\bar s_1+\delta,s_2^*)$ (because of the crossing alloc with $g'>0,$ and because $f'$ is continuous and zero in $\bar s_1,$ it can be continuous and arbitrarily small or zero in some $\bar s_1+\delta$).
For this $t$ in the allocation of the $s$-player $\bar s\in R^s_{\emptyset}$ but $(\bar s_1+\delta,s_2^*)\in R^s_1,$ contradicting WMON. We obtained that $g'\equiv 0.$ By symmetric argument, using the existence of a bundling allocation in $\bar s,$ and continuous $g'$ there, we obtain $f'\equiv 0.$
\end{proof}

\begin{corollary}
Over $t\in (0,\infty)\times (0,\infty)$ and $s\in [0,B)\times [0,B)$ the mechanism is 1-dimensional mechanism bundling the two tasks; for $t_1=0$ and/or $t_2=0$ the allocation of the $s$-player can be non-bundling.
\end{corollary}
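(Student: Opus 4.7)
The plan is to derive the bundling structure of the mechanism by combining the hypothesis $f' \equiv g' \equiv 0$ with the two-player complementarity of allocations. First, since $f'(s)=0$ for every $s$, the critical value for the $t$-player's task $1$ is identically zero, so the interior of $R_1(s)$ inside $\mathbb{R}_{>0}^2$ is empty; the symmetric statement gives $R_2(s) = \emptyset$. Hence, for every $t$ with $t_1, t_2 > 0$, the $t$-player's allocation lies in $\{\emptyset, 12\}$, and because both tasks are always allocated, the $s$-player gets the complement, which is again in $\{12, \emptyset\}$. For fixed $s$, truthfulness together with additivity reduces the $t$-player's choice to a one-dimensional bundling one: using the virtual payment $P_{12}(s)$ from Definition~\ref{def:virtadd}, he takes $12$ iff $t_1 + t_2 \leq P_{12}(s)$, so the boundary between $R_{12}(s)$ and $R_{\emptyset}(s)$ is the antidiagonal $t_1 + t_2 = P_{12}(s)$. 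Running the identical argument for the $s$-player at fixed $t$ produces a threshold $Q_{12}(t)$ such that the $s$-player's allocation depends on his bid only through $s_1 + s_2$, namely he gets $12$ iff $s_1 + s_2 \leq Q_{12}(t)$.

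The principal obstacle---and the one step that is not formal---is proving that $P_{12}(s)$ itself depends only on $s_1 + s_2$. I would argue this by contradiction: assume $s$ and $s'$ satisfy $s_1 + s_2 = s_1' + s_2'$ but $P_{12}(s) < P_{12}(s')$, and pick $t$ with $t_1, t_2 > 0$ and $P_{12}(s) < t_1 + t_2 < P_{12}(s')$. At $(s,t)$ the $t$-player is above his bundling threshold and receives $\emptyset$, so the $s$-player receives $12$; at $(s',t)$ the situation reverses and the $s$-player receives $\emptyset$. This contradicts the previous paragraph's observation that, for this fixed $t$, the $s$-player's allocation is a function of $s_1 + s_2$ alone, which is the same value at $s$ and $s'$. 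Thus $P_{12}(s) = \xi(s_1 + s_2)$ for a well-defined $\xi : [0, 2B) \to [0,\infty)$.

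To close, I would verify that $\xi$ is increasing by a short WMON argument on the $s$-player: if $\sigma < \sigma'$ but $\xi(\sigma) > \xi(\sigma')$, pick $\tau$ strictly between these values and any $t$ with $t_1 + t_2 = \tau$, $t_1, t_2 > 0$; then the $s$-player's allocation jumps from $\emptyset$ at a bid of sum $\sigma$ to $12$ at a bid of sum $\sigma'$, while WMON for the $s$-player requires $-\sigma \leq -\sigma'$, i.e.\ $\sigma \geq \sigma'$, a contradiction. At jump-discontinuity points of $\xi$, tie-breaking may let the critical value depend on $(s_1,s_2)$ within the jump interval, exactly matching the definition of a bundling mechanism. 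Finally, for boundary inputs with $t_1 = 0$ or $t_2 = 0$ the $t$-player is indifferent between some of the four allocations (e.g.\ receiving task $1$ at no cost), so tie-breaking may hand the $s$-player a non-bundling allocation such as $\{1\}$ or $\{2\}$, accounting for the second sentence of the corollary. The hardest step is the symmetry argument of the second paragraph, where complementarity of allocations is what transfers the $s$-player's one-dimensional threshold structure back to the $t$-player's payment.
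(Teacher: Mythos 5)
Your overall strategy and conclusion are right, and most ingredients are present, but the central contradiction in your second paragraph has a genuine gap. You take $s,s'$ with $s_1+s_2 = s_1'+s_2'=:\sigma$ and $P_{12}(s) < P_{12}(s')$, choose $\tau = t_1+t_2$ strictly between them, and conclude the $s$-player gets $12$ at $s$ and $\emptyset$ at $s'$. You then claim this contradicts ``the $s$-player's allocation is a function of $s_1+s_2$ alone.'' But that claimed observation is slightly too strong: at the boundary $s_1+s_2 = P_{12}^s(t)$ the allocation is determined by tie-breaking, and WMON for the $s$-player alone permits different tie-breaks at $s$ and $s'$ when $\sigma = \sigma'$ (the WMON inequality reduces to $-\sigma \le -\sigma'$, which holds with equality). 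Your derivation itself shows that both $s$ and $s'$ must lie exactly on that boundary, i.e.\ $\sigma = P_{12}^s(t)$; so no contradiction is obtained directly. This is precisely the kind of equality-of-sums case that the paper's definition of a bundling mechanism is careful to permit (the critical value may depend on the concrete $(s_1,s_2)$ at jump discontinuities of $\xi$).

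The fix is to reorder your argument to mirror Lemmas~\ref{lem:cont} and~\ref{lem:indep}. First prove \emph{strong} monotonicity of $P_{12}$ in the sum: if $s_1+s_2 < s_1'+s_2'$ but $P_{12}(s) > P_{12}(s')$, your third-paragraph argument (pick $\tau$ strictly between $P_{12}(s')$ and $P_{12}(s)$, and use WMON for the $s$-player) gives $-\sigma \le -\sigma'$, i.e.\ $\sigma \ge \sigma'$, a genuine contradiction. This establishes that $P_{12}$ is a monotone function of $s_1+s_2$ in the strong sense, hence continuous as a univariate function of the sum outside a countable set. Then, exactly as in Lemma~\ref{lem:indep}, at any continuity point $\sigma$ the value $P_{12}(s)$ cannot depend on the particular $(s_1,s_2)$ with $s_1+s_2=\sigma$: if it did, you could perturb the sum slightly (using continuity) to create a strict-inequality violation of the strong monotonicity just proved. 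This yields $P_{12}(s)=\xi(s_1+s_2)$ for an increasing $\xi$, with the allowed flexibility precisely at the countably many jump points of $\xi$ --- matching the paper's definition of a bundling mechanism. The remainder of your write-up (empty $R_1,R_2$ regions from $f'\equiv g'\equiv 0$, the antidiagonal boundary, and the remark about $t_1=0$ or $t_2=0$) is correct.
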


\noindent{\bf CASE C.  for every $s\in (0,B)\times (0,B)$ point where either $f'$ or $g'$ is continuous (as univariate functions), the allocation of the $t$-player is crossing}

 It is easy to show that in every point $s$ where $f$ or $g$ as univariate function is continuous in the point $s_1,$ resp $s_2,$ the allocation must be crossing (otherwise we cound change $s$ by a small value so that both $f$ and $f'$ (or $g$ and $g'$) would be continuous and the allocation would be still non-crossing). Thus, the allocation can be quasi-flipping or quasi-bundling, only in points where \emph{each} of $f,g$ (when defined), and $f'$ and $g'$ have a jump discontinuity as univariate functions.
 In this case the allocation function corresponds to a relaxed task-independent mechanism, or 1-dimensional mechanism such  that the $t$-player never gets task 1 or he never gets task 2.

\section{Mechanisms for 2 tasks and 2 machines with submodular valuations}
\label{sec:nonadd}

In this section we relax the assumption of additivity of the valuation of the $t$-player in different ways, and show in  Theorem~\ref{theo:nonaddchar} that for all these extensions of the additive domain the only remaining truthful mechanisms are relaxed affine minimizers (or constant mechanisms), and 1-dimensional mechanisms. This result holds for (i) arbitrary (normalized, monotone) valuations, (ii) submodular or subadditive valuations\footnote{for two tasks subadditive and submodular valuations are the same}, (iii) $\epsilon$-additive valuations (see $V_\epsilon$ below), and (iv) for valuations that are submodular \emph{and} $\epsilon$-additive. In essence, we characterize mechanisms for two tasks and two players in case (i), where the $t$-player has arbitrary monotone valuation $(t_1,t_2,t_{12}),$ and the $s$-player has additive valuation $(s_1,s_2, s_1+s_2)$ with $s_1$ and $s_2$ both bounded by an arbitrarily large fixed value $B.$  We use the characterization for two additive players from Section~\ref{sec:add}, and show that relaxed task-independent mechanisms are not extendable onto the nonadditive domain, unless they are task-independent affine minimizers. Along the proof we argue that our lemmas carry over straightforward to  cases (ii), (iii) and (iv) as well. In particular, task-independent mechanisms cannot be extended  even to an arbitrarily 'narrow' superset $V_\epsilon$ of the (2-dimensional) additive domain in 3-dimensions.

\subsection{Basic concepts and notation}

For simplicity of presentation we consider  any bid of either of the players as a vector of three entries. We summarize the notation for the considered domains of valuations (cost functions) as follows:

\begin{definition} $V\subset \mathbb R_{\geq 0}^3$ is the 3D set of all monotone valuations:
$$V=\{(x_1,x_2,x_{12})\in \mathbb R_{\geq 0}^3\,|\,x_{12}\geq x_1\geq 0,\,x_{12}\geq x_2 \geq 0\};$$
$V_+\subset V $ is the \emph{additive plane}: 
$$V_+=\{(x_1,x_2,x_{12})\in \mathbb R_{\geq 0}^3\,|\, x_1, x_2 \geq 0, x_{12}=x_1+x_2\};$$
$V_{+,B}\subset V_+$ is the additive plane, restricted to $[0,B)\times [0,B)$ for some arbitrarily large fixed $B$  
$$V_{+,B}=\{(x_1,x_2,x_{12})\in \mathbb R_{\geq 0}^3\,\,|\,\,  0\leq x_1,x_2< B, x_{12}=x_1+x_2\};$$
$V_{\epsilon}\subset V$ is the $\epsilon$-neighborhood\footnote{We are quite free to use any reasonable definition of such a neighborhood, e.g., we could have chosen  $x_{12}\in((1-\epsilon)(x_1+x_2), (1+\epsilon)(x_1+x_2)),$ or even the intersection of the latter with the current (additive) neighborhood. The characterization is not very sensitive  to this definition.} of the additive plane for some arbitrary fixed $\epsilon>0:$  
$$V_{\epsilon}=\{(x_1,x_2,x_{12})\in V\,\,|\,\,  x_{12}\in (x_1+x_2-\epsilon, x_1+x_2+\epsilon)\};$$
$V_\mathit{submod}\subset V$ is the 3D set of all submodular valuations; for two tasks this is equivalent with subadditivity:
$$V_\mathit{submod}=\{(x_1,x_2,x_{12})\in V\,|\,\, x_{12}\leq x_1+x_2\}.$$
\end{definition}

Let $V^*\in\{V,V_{\epsilon},V_\mathit{submod},V_\mathit{submod}\cap V_{\epsilon}\},$ and let $(A,\hat P)$ denote the allocation and payment functions of a truthful mechanism on $V^*\times V_{+,B}$. For the 
valuations $t=(t_1, t_2, t_{12})$ and $s=(s_1, s_2, s_{12})$ taken from some given domains $t\in V^*$ and $s\in V_{+,B}$, the allocation is  $A(t,s)=(\alpha_t,\alpha_s),$
where $\alpha_t,\alpha_s\in\{12, 1, 2, \emptyset\}.$ Since both tasks are always allocated, $\alpha_t$ determines $\alpha_s=\bar \alpha_t,$ and for simplicity we often identify $A(t,s)$ with the allocation $\alpha_t$ to the $t$-player.  

For given $s\in V_{+,B},$ and $\alpha\in\{12, 1, 2, \emptyset\},$ the \emph{interior} (w.r.t. $V^*$) of the set of all $t\in V^*$ where the $t$-player is allocated $\alpha$ is denoted by $R_{\alpha}(s)\subset V^*.$   We omit $(s)$ from the notation if $s$ is clear from the context. $R_\alpha(s)$ is an open set in $V^*.$ Similarly, for given $t\in V^*,$  the \emph{interior} (w.r.t. $V_{+,B}$) of the  set of all $s\in V_{+,B}$ where the $s$-player is allocated $\alpha$ is denoted by $R^s_{\alpha}(t)\subset V_{+,B}$.

We will assume regarding allocations of the $t$-player that $R_\emptyset(s)$ is nonempty for every $s.$ This assumption is without loss of generality for allocations $A$ with finite approximation ratio of the makespan.
Since $R_\emptyset\neq \emptyset,$ we can assume w.l.o.g. that the payment function $\hat P$ is normalized, that is, $\hat P_{\emptyset}(s)=0$ for every $s.$

The payments to the $t$-player for taking the tasks $12, 1,$ and $2$ respectively, depending on the bid $s$ are denoted by 
$\hat P_{12}(s),\,\, \hat P_{1}(s),\,\,\hat P_{2}(s).$
We omit the argument $s$ if it is clear from the context.    
The payments to the $s$-player depending on the bid $t$ are denoted by $\hat P^s_{12}(t),\,\,\hat P^s_{1}(t),\,\,\hat P^s_{2}(t).$

Observe that a truthful mechanism $(A,\hat P)$ over $V^*\times V_{+,B}$ is a truthful mechanism when restricted to $V_+\times V_{+,B}.$  We denote this restricted mechanism by $(A|_{V_+\times V_{+,B}},\hat P|_{V_+\times V_{+,B}}).$

We introduce the types of WMON allocations that can occur on $V^*\times V_{+,B}.$

\subsubsection{Relaxed affine minimizers}

\begin{definition} An allocation $A$ is an \emph{affine minimizer}, if there exist positive constants  per player $\mu_t$ and $\mu_s,$ and constants $\gamma_{\alpha} \in \mathbb R\cup\{-\infty,\infty\}$ per allocation (say, here $\alpha=\alpha_t$), so that for every input $(t,s)$ the allocation $A(t,s)$ minimizes over $$\mu_t\cdot t_{12}+\gamma_{12},\qquad \mu_t\cdot t_1+\mu_s\cdot s_2+\gamma_{1},\qquad \mu_s\cdot s_1+\mu_t\cdot t_2+\gamma_{2},\qquad \mu_s(s_1+s_2)+\gamma_{\emptyset}.$$
 \end{definition}

 \begin{definition}
   An allocation $A$ is a \emph{relaxed affine minimizer}, if there
   exist positive constants per player $\mu_t$ and $\mu_s,$ and
   constants $\gamma_{\alpha}$ per allocation $\alpha$ (of the
   $t$-player), furthermore an arbitrary increasing function
   $\xi:[0,\min(\gamma_{1},\gamma_{2})-\gamma_{\emptyset})\rightarrow
   [0,\infty)$ (if the interval
   $[0,\min(\gamma_{1},\gamma_{2})-\gamma_{\emptyset})$ is nonempty)
   with
   $\min(\gamma_{1},\gamma_{2})-\gamma_{12}=\xi(\min(\gamma_{1},\gamma_{2})-\gamma_{\emptyset})$,
   so that for every input $(t,s)$

\begin{itemize}
\item[$a)$] if $\mu_s\cdot(s_1+s_2)\geq \min(\gamma_{1},\gamma_{2})-\gamma_{\emptyset},$ the allocation $A(t,s)$ is that of an affine minimizer with the given constants 
\item[$b)$] if $\mu_s\cdot(s_1+s_2)\leq \min(\gamma_{1},\gamma_{2})-\gamma_{\emptyset},$ then if $\mu_t\cdot t_{12}>\xi(\mu_s(s_1+s_2))$ then the allocation for the $t$-player is $\emptyset$ and if $\mu_t\cdot t_{12}< \xi(\mu_s\cdot(s_1+s_2))$ then it is $12.$
\end{itemize}

\end{definition}

\subsubsection{Constant mechanisms}

In a \emph{constant mechanism} the allocation is independent of the bids of at least one of the players. This property can also be interpreted as being an affine minimizer with multiplicative constant $\mu=0$ for this player. 

\subsubsection{1-dimensional mechanisms}

In a \emph{1-dimensional mechanism} only at most two possible allocations can occur. 
If the two occuring allocations (for the $t$-player) are $\emptyset$ and $12,$  we call the mechanism \emph{bundling mechanism}. 

\begin{definition} 
 In a \emph{bundling mechanism} only  the allocations $\emptyset$ and $12$ can occur. There is an arbitrary, increasing function $\xi:[0,B)\rightarrow [0,\infty)$ so that if $t_{12}>\xi(s_1+s_2)$ then the $t$-player gets $\emptyset,$ and if $t_{12}<\xi(s_1+s_2)$ then the $t$-player gets $12.$ 

If $\xi$ has a jump discontinuity in some point $s_1+s_2$ then the critical value for the $t$-player to get  the tasks may depend on the concrete $(s_1,s_2)$ with the given fixed sum, as long as it is between $\xi((s_1+s_2)^-)$ and $\xi((s_1+s_2)^+).$
\end{definition}

In the other cases the only occuring allocations are $\emptyset$ and $1$ or $\emptyset$ and $2,$ respectively.  The mechanism is in this case identical with a degenerate task-independent mechanism (see also Section~\ref{sec:add}) where task $2$ (resp. task $1$) is always received by the $s$-player. Such mechanisms can also be  extended to $V\times V_{+B}.$  Notice however, that all 1-dimensional mechanisms have  arbitrarily high approximation ratio if we set the fixed bound $B$ high enough.

\subsection{The main result}

\begin{theorem}\label{theo:nonaddchar} Every WMON allocation for two
  tasks and two players with bids $t\in V_{\epsilon}$ and $s\in
  V_{+,B}$, where both tasks are always allocated, and
  $R_\emptyset(s)\neq \emptyset$ for every $s,$ is one of these three
  types: (1) relaxed affine minimizer, (2) one-dimensional mechanism,
  or (3) constant mechanism.  
  The same characterization holds for WMON allocations over $V^*\times
  V_{+,B},$ for
  $V^*\in\{V,V_\mathit{submod},V_\epsilon,V_\mathit{submod}\cap
  V_{\epsilon}\}.$
\end{theorem}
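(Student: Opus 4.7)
The plan is to reduce to Theorem~\ref{theo:addchar} by restricting any WMON allocation $A$ on $V^*\times V_{+,B}$ to the additive slice $V_+\times V_{+,B}$. Weak monotonicity is inherited by this restriction, so Theorem~\ref{theo:addchar} classifies the restriction as one of four types: relaxed affine minimizer, relaxed task-independent, one-dimensional, or constant mechanism. Because the payment of the $t$-player depends only on $s$ (and the chosen allocation), the payment functions $P_\alpha(\cdot)$ are fully determined by the $V_+$ restriction; consequently, by the Saks--Yu theorem (WMON equals truthfulness on convex domains, and each $V^*$ in the list is convex), the allocation on $V^*$ is forced to be the utility-maximizing one $A(t,s)=\arg\min_\alpha\bigl(t_\alpha-P_\alpha(s)\bigr)$. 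So the question becomes: which of the four restriction types can coexist with a WMON extension on $V^*$?

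Three of the four types extend canonically. A relaxed affine minimizer on $V_+$ is defined by a formula that already uses the bundle-coordinate $t_{12}$ directly (rather than $t_1+t_2$), so the identical formula on $V^*$ defines an affine minimizer whose WMON is standard. A bundling mechanism depends on $t$ only through $t_{12}$, which is an intrinsic coordinate of $V^*$; the degenerate one-dimensional mechanisms depend only on $t_1$ or only on $t_2$. Constant mechanisms depend on at least one player not at all. In each of these cases, verifying WMON on $V^*$ reduces to an essentially one-line computation.

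The main work is excluding the relaxed task-independent case with non-affine critical value functions $\phi(s_1)$ and $\eta(s_2)$. The inherited payments are additively separated, $P_{\{1\}}(s)=\phi(s_1)$, $P_{\{2\}}(s)=\eta(s_2)$, $P_{\{1,2\}}(s)=\phi(s_1)+\eta(s_2)$, which is the signature of task independence. I would pick any $t\in V^*\setminus V_+$; such $t$ is plentiful in each of the four domains under consideration since $V_+$ is a lower-dimensional face of $V$ and of $V_{\mathit{submod}}$ (and is strictly inside $V_\epsilon$ for any ambient $\epsilon>0$). For such $t$, the $\arg\min$ rule makes the boundary between the regions $\{s:A(t,s)=\{1,2\}\}$ and $\{s:A(t,s)=\emptyset\}$ in the $(s_1,s_2)$-plane coincide with the level curve $\phi(s_1)+\eta(s_2)=t_{12}$. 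But applying Theorem~\ref{theo:addchar} now to the additive $s$-player (with $t$ fixed) forces every such boundary of a WMON allocation to be piecewise linear with slopes in $\{0,\pm 1\}$: the rectilinear-shift reasoning of Claims~\ref{cla:derivative}, \ref{cla:derivative2}, and \ref{cla:derivative3} applies verbatim. A non-linear $\phi$ or $\eta$ produces a curved level set, violating this structure and hence violating the $s$-player's WMON. The forced conclusion is that $\phi$ and $\eta$ are truncated linear functions with a common slope $\mu_s$, which makes the mechanism a task-independent affine minimizer (with $\gamma_{12}=\gamma_{\{1\}}+\gamma_{\{2\}}-\gamma_\emptyset$) and therefore a special case of a relaxed affine minimizer. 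Degenerate subcases where $\phi\equiv 0$ or $\eta\equiv 0$ allow only two allocations for the $t$-player, giving a one-dimensional mechanism.

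The principal obstacle is making this level-set argument work uniformly for $V^*\in\{V,V_{\mathit{submod}},V_\epsilon,V_{\mathit{submod}}\cap V_\epsilon\}$: the looser the domain, the easier it is to place a witness $t$ with $t_{12}$ far from $t_1+t_2$, so the level curve $\phi(s_1)+\eta(s_2)=t_{12}$ varies over a macroscopic portion of the $(s_1,s_2)$-plane; the $\epsilon$-additive domain $V_\epsilon$ is the most delicate because $t_{12}$ is confined to a band of width $2\epsilon$, but for any fixed $\epsilon>0$ the argument still goes through after scaling the witness points within the allowed slack. A secondary technical nuisance is the countable set of $s$ where $\phi$ or $\eta$ has a jump discontinuity (these are the ``relaxed'' points where the $V_+$ mechanism may deviate from pure task independence); this is handled by choosing the base point $\bar s$ in the full-measure set of continuity, just as in the proof of Theorem~\ref{theo:addchar}. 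Stitching these pieces together yields the stated trichotomy for all four domains $V^*$.
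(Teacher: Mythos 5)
Your overall strategy matches the paper's: restrict to the additive slice, invoke Theorem~\ref{theo:addchar}, then argue uniqueness of the lift back to $V^*$ and exclude relaxed task-independent mechanisms. Within that frame, however, you take two genuinely different turns.

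For the uniqueness of the lift, you appeal to Saks--Yu (WMON implies truthfulness on convex domains, and all four choices of $V^*$ are convex) combined with payment-uniqueness, which is a legitimate shortcut. The paper instead builds this from scratch via ``virtual payments'' (Definition~\ref{def:virt}) and two lemmas (Claims~\ref{cla:truthful} and~\ref{cla:unique}) culminating in Lemma~\ref{lem:same}. The reason the paper spends so much effort here is that Claim~\ref{cla:unique} is doing more than bare payment-uniqueness: it shows the virtual payments are the \emph{only} normalized \emph{monotone} payments, and that they are recoverable from the $V_+$-restriction alone even when some regions $R_\alpha$ are empty. Your one-sentence invocation glosses over the cases $R_1=\emptyset$ or $R_{12}=\emptyset$, which the paper handles explicitly; you should at least flag that payment-uniqueness over $V_+$ forces the $V^*$-payments, which is what really drives Lemma~\ref{lem:same}.

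Your route for excluding the task-independent type is the more interesting divergence, and it is a genuinely different (and arguably more geometric) argument. You fix a strictly non-additive $t\in V^*$, observe that the boundary $R^s_\emptyset(t)\,|\,R^s_{12}(t)$ in the $(s_1,s_2)$-plane is forced by WMON of the $s$-player to lie on the $45^\circ$ line $s_1+s_2=P^s_{12}(t)$, and note that this boundary must coincide with an arc of the level set $\phi(s_1)+\eta(s_2)=t_{12}$ (whose arc has positive length precisely because $t_{12}\neq t_1+t_2$). Forcing a level set of $\phi+\eta$ to be an exact $45^\circ$ segment yields $\phi'(s_1)=\eta'(s_2)$ along the arc; sliding $t$ and overlapping arcs propagates constancy of the common slope. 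The paper instead computes $P^s_1(t)=\phi^{-1}(t_{12}-t_2)$, $P^s_2(t)=\eta^{-1}(t_{12}-t_1)$, and derives two expressions for $P^s_{12}(t)$ whose equality gives the functional equation
\begin{align*}
\phi^{-1}(t_1)+\eta^{-1}(t_{12}-t_1)=\eta^{-1}(t_2)+\phi^{-1}(t_{12}-t_2),
\end{align*}
which it then solves by perturbation. The two routes are equivalent in content --- the $45^\circ$ line is exactly $s_1+s_2=P^s_{12}$, and both arguments ultimately rely on the quasi-bundling structure of the $s$-player's cut for a strictly submodular $t$ --- but yours is shorter and avoids writing the functional equation explicitly.

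Two cautionary notes on your exclusion argument. First, your appeal to Claims~\ref{cla:derivative}--\ref{cla:derivative3} and to Theorem~\ref{theo:addchar} is a misattribution: what you actually need is the single-player fact that WMON boundaries between $R^s_\emptyset$ and $R^s_{12}$ for a two-task additive bidder are $45^\circ$ lines (Figure~\ref{fig:shapes}), not those claims (which concern how the $t$-player's critical values vary with $s$). Second, the ``scaling within the allowed slack'' step for $V_\epsilon$ deserves more care: the level-set arc only witnesses linearity of $\phi$ on an interval of $\phi$-image-length $t_1+t_2-t_{12}<\epsilon$, so one must slide $t_1$ and patch overlapping arcs, and also must avoid the countably many jump points of $\phi$ and $\eta$ (the paper does this through its ``valid $t$'' machinery and a rational-grid translation argument). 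You acknowledge both points but do not carry them out; spelling out the overlapping-arcs propagation and the choice of continuity points is where most of the remaining work lies.

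Finally, you do not explicitly exclude the degenerate task-independent mechanisms with $\phi\equiv C>0$ (constant but positive). The paper gives a separate argument to rule these out; your framework should handle them with a similar observation (the $s$-player's cut for certain $t$ would need both a nonempty $R^s_1$ and $R^s_2$ while having only horizontal boundaries), but it should be stated.
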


\begin{remark} $(a)$ We note that for $t\in V$ or $t\in V_\mathit{submod},$ Theorem~\ref{theo:nonaddchar} has a shorter, direct proof not using the characterization for additive players of Section~\ref{sec:add}. However, to the best of our knowledge this direct proof does not carry over to the almost-additive domain $V_\epsilon$ for the $t$-player.

\noindent $(b)$ Arguments analogous to the following proof show that the same characterization holds, e.g., for player valuations $(t,s)$  in $ V\times V,$ or $ V_\mathit{submod}\times V_\mathit{submod},$ or $ V\times V_+.$ Since this is not the central topic of this paper, and similar characterizations have been known, we do not consider these cases here: our primary goal is to show that \emph{even} if the $s$-player remains additive and \emph{even} bounded, and the $t$-player is \emph{almost} additive, the 'disturbing' task-independent allocations disappear. 

\end{remark}

\subsection{Proof of Theorem~\ref{theo:nonaddchar}}
\subsubsection{Virtual payments}

Let $(A,\hat P)$  be a truthful mechanism for $(t,s)$ over $V^*\times V_{+,B},$ for $V^*\in\{V,V_\mathit{submod},V_{\epsilon},V_\mathit{submod}\cap V_{\epsilon}\}.$  We will use \emph{virtual payments} for the $t$-player defined as follows.

\begin{definition}\label{def:virt} For every truthful mechanism $(A,\hat P)$ over $(t,s)\in V^*\times V_{+,B},$ or over $(t,s)\in V_+\times V_{+,B},$ we define the \emph{virtual payments} iteratively. Let  $s$ be arbitrary fixed valuation from $V_{+,B}.$ The payments of the $t$-player occuring in this definition are all defined for this given $s.$  

$$P_\emptyset=0,$$
$$P_1=\max\{\hat P_1,0\},\quad P_2=\max\{\hat P_2,0\},$$
$$P_{12}=\max\{\hat P_{12},P_1,P_2\}.$$

\end{definition}

\begin{definition} We call a payment function $P$ \emph{monotone}, if for every $s,$ and sets of tasks $\beta\subset \alpha,$ it holds that  $P_\beta(s)\leq P_\alpha(s).$
\end{definition}

\begin{claim}\label{cla:truthful} The virtual payments are normalized, monotone and  $(A,P)$ is a truthful mechanism.
\end{claim}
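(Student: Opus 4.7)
The plan is to split the proof into three parts. \emph{Normalization} is immediate because $P_\emptyset := 0$. \emph{Monotonicity} under set inclusion is almost as short: $P_1, P_2 \geq 0 = P_\emptyset$ by the $\max$ with $0$, and $P_{12} \geq \max(P_1, P_2)$ by construction, so the only nontrivial inclusions $\emptyset \subset 1, 2 \subset 12$ are all directly covered by Definition~\ref{def:virt}. The substantive content is \emph{truthfulness} of $(A,P)$. Since we leave the allocation $A$ and the $s$-player's payments untouched, the $s$-player's dominant-strategy property is inherited from $(A,\hat P)$, so I only need to verify the $t$-player's side.

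Fix $s$ and any $t \in V^*$ with $A(t,s) = \alpha$; the plan is to verify $P_\alpha - t(\alpha) \geq P_\beta - t(\beta)$ for every $\beta \in \{\emptyset, 1, 2, 12\}$ by case analysis on $\alpha$. The easiest is $\alpha = \emptyset$: truthfulness of $(A,\hat P)$ together with $\hat P_\emptyset = 0$ gives $t(\beta) \geq \hat P_\beta$, and combining with $t(\beta) \geq 0$ and the monotonicity of $t$ (needed to absorb $P_{12} = \max(\hat P_{12}, P_1, P_2)$) yields $t(\beta) \geq P_\beta$ for every $\beta$. For $\alpha \in \{1,2\}$ the comparison with $\emptyset$ in the original mechanism forces $\hat P_\alpha \geq t(\alpha) \geq 0$, so $P_\alpha = \hat P_\alpha$; I then check the three remaining $\beta$'s by splitting on which term attains the $\max$ defining $P_\beta$. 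For $\alpha = 12$, the original inequalities against $\emptyset, 1, 2$ combined with monotonicity and non-negativity of $t$ give $\hat P_{12} \geq \max(\hat P_1, \hat P_2, 0) \geq \max(P_1, P_2)$, so $P_{12} = \hat P_{12}$; the comparisons with $\beta = 1, 2, \emptyset$ then reduce to the original ones whenever $P_\beta = \hat P_\beta$, and to $\hat P_{12} \geq t(\{12\})$ (itself just the original $\alpha = 12$ vs.\ $\beta = \emptyset$ inequality) when $P_\beta = 0 > \hat P_\beta$.

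The main technical obstacle sits in the mixed subcases where the virtual payment strictly exceeds the original, i.e.\ where the $\max$ defining $P_\beta$ is attained by something other than $\hat P_\beta$. The paradigm case is $\alpha = 1$, $\beta = 12$ with $P_{12} = P_2 > \hat P_{12}$, where the desired inequality $t(\{12\}) - t(\{1\}) \geq P_2 - P_1$ does not follow from original truthfulness alone. I plan to address this by a two-step reduction: first establish $t(\{2\}) - t(\{1\}) \geq P_2 - P_1$ (itself requiring a subcase split on whether $P_2 = \hat P_2$ or $P_2 = 0 > \hat P_2$, the latter closed using $\hat P_1 \geq t(\{1\}) \geq 0$), then apply the monotonicity bound $t(\{12\}) \geq t(\{2\})$. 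Every other replacement subcase is handled in the same spirit: the extra slack introduced by the $\max$ is absorbed either by non-negativity or by monotonicity of $t$. Since these are the only properties of $t$ used, the argument carries over uniformly across $V^* \in \{V, V_\mathit{submod}, V_\epsilon, V_\mathit{submod} \cap V_\epsilon\}$; neither subadditivity nor $\epsilon$-additivity of $t$ is invoked.
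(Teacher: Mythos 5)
Your proof is correct, but it takes a genuinely different route from the paper's. The paper first establishes a key intermediate fact: $P_\alpha(s) = \hat P_\alpha(s)$ whenever the region $R_\alpha(s)$ is nonempty (shown by contradiction: if $t\in R_\alpha$ but $P_\alpha \ne \hat P_\alpha$, then some other allocation would be weakly more profitable and $t$ could not lie in the open interior of $R_\alpha$). It then proves $P_\alpha - t_\alpha \ge P_\beta - t_\beta$ by a dichotomy on whether $R_\beta$ is empty: when $R_\beta\neq\emptyset$ both payments coincide with $\hat P$ and the inequality is inherited; when $R_\beta=\emptyset$ the virtual payment $P_\beta$ collapses to a simple canonical value ($P_1=0$ if $R_1=\emptyset$; $P_{12}=\max(P_1,P_2)$ if $R_{12}=\emptyset$), from which the inequality follows using monotonicity of $t$. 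You instead fix $\alpha = A(t,s)$, peel off the easy consequences of $\hat P$-truthfulness against $\beta=\emptyset$ (which pins $P_\alpha=\hat P_\alpha$ in each case), and then split on which term attains the $\max$ defining $P_\beta$, absorbing the slack either by $t(\beta)\ge 0$ or by set-monotonicity of $t$; the mixed subcase $\alpha=1$, $P_{12}=P_2$ is correctly handled by the two-step reduction through $t(\{2\}) - t(\{1\}) \ge P_2 - P_1$ and then $t(\{12\})\ge t(\{2\})$. Both arguments are sound and use the same two structural facts about $t$; what the paper's version buys is the reusable statement ``$P_\alpha=\hat P_\alpha$ on nonempty regions'' (cited again implicitly in Claim~\ref{cla:unique} and Lemma~\ref{lem:same}), at the cost of having to invoke the geometry of the regions, whereas yours is a self-contained chase through the $\max$'s. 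One small caveat worth noting: your reduction tacitly assumes that $t$ lies in a region where the $\hat P$-truthfulness inequalities for $\alpha=A(t,s)$ all hold, which is true for any $t$ (boundary or not), so this is fine; but the paper's formulation instead anchors the argument at a closure point of a nonempty region, a device that sidesteps tie-breaking concerns more explicitly.
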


\begin{proof} The virtual payment fuction $P$ is obviously normalized and monotone. We show that $P$ is truthful for the $t$-player for arbitrary $s.$ We claim first that $P_\alpha(s)=\hat P_\alpha(s),$ whenever $R_\alpha(s)$ is nonempty. We show this for $\alpha=12,$ the cases $\alpha=1$ and $\alpha=2$ are similar. Assume for contradiction that $t\in R_\alpha(s),$ but  $P_{12}\neq \hat P_{12}.$ In this case either $P_{12}=\hat P_1,$ or $P_{12}=\hat P_2,$ or $P_{12}=0=\hat P_\emptyset,$ so the allocation $\alpha=1,$ or $\alpha=2,$ or $\alpha=\emptyset,$ respectively, is at least as profitable as the more costly allocation $\alpha=12,$ so $t$ must be in the closure of some other region and cannot be in the open interior $R_{12},$ a contradiction. 

In order to prove the truthfulness of $P,$ we need to prove that the given allocation $A(t)$ maximizes the profit $P-t$ for every $t\in V^*$ (resp. for every $t \in V_+$). For every fixed $t$ it holds that $t\in \overline R_\alpha$ for some $\alpha$ with $R_\alpha\neq \emptyset.$ We show next that for this $\alpha,$ $P_\alpha-t_{\alpha}\geq P_\beta-t_\beta$ for all $\beta\in\{12,1,2,\emptyset\}.$

If $R_\beta\neq \emptyset,$ then, by the first paragraph, $P_\alpha=\hat P_\alpha,$ and $P_\beta=\hat P_\beta,$ so the inequality holds, because the $\hat P$ are truthful payments. 
Assume now, that $R_\beta= \emptyset.$ Then $\beta\in\{1,2,12\},$ by the assumption $R_\emptyset\neq\emptyset.$ 
If $\beta=1,$ then $R_1=\emptyset$ and we claim that $P_1=0$ must hold. Otherwise $\hat P_1>0,$ and the points  $t=(t_1<\hat P_1, t_2=L, t_{12}=t_1+L)$ (for large enough $L$) are in $ R_1,$ (by truthfulness of $\hat P$) contradicting $R_1=\emptyset.$ Now, it holds that $P_\alpha-t_\alpha\geq P_\emptyset =0\geq 0-t_1=P_1-t_1.$ Here the first inequality follows from the statement (that $P_\alpha=\hat P_\alpha$ and $P_\emptyset=\hat P_\emptyset$) about nonempty regions, and the truthfulness of $\hat P.$ For $\beta =2$ the proof is analogous. 

Assume finally that $\beta=12.$ We claim that then $P_{12}=P_1$ or $P_{12}=P_2.$ Otherwise $P_{12}=\hat P_{12}>\max \{\hat P_1,\hat P_2,0\},$ and by truthfulness of $\hat P,$ for small enough $\epsilon>0$ the $t=(\epsilon,\epsilon, 2\epsilon)$ must be a point in $R_{12},$ contradicting $R_{12}=\emptyset.$ Assuming now w.l.o.g. $P_{12}=P_1,$ it holds that $P_\alpha-t_\alpha\geq P_1-t_1=P_{12}-t_1\geq P_{12}-t_{12},$ where the first inequality was shown above, and the last inequality holds by monotonicity of the valuations. This concludes the proof.
\end{proof}

In fact, for given WMON allocation function $A(t)$ as function of the bid $t$ of the $t$-player  
(in particular for the allocation function $A[s]$ for arbitrary fixed $s$), 
the virtual payments are the unique payments which are truthful, normalized and monotone, as we show next.

\begin{claim}\label{cla:unique} For every WMON allocation function $A(t)$ of the $t$-player (i.e., for fixed $s$) over $t\in V^*$ or over $t\in V_+,$  $A(t)$ uniquely determines the virtual payments as the only truthful, normalized and monotone  payments. Conversely, the virtual payments  determine the allocation $A(t)$ up to tie-breaking.
\end{claim}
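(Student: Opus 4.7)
I would split the claim into two directions: (a) any truthful, normalized, monotone payment scheme $\tilde P$ for $A$ coincides with the virtual payments $P$; (b) once $P$ is fixed, the allocation $A$ is determined on the interior of each region as the maximizer of $P_\alpha - t_\alpha$, with freedom only on boundaries. Claim~\ref{cla:truthful} already supplies existence for (a), so only uniqueness is at stake. Throughout I fix $s$ and omit it from the notation.

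For (a), I would argue pointwise in $\alpha$. Consider first $\alpha \in \{1,2\}$. If $R_\alpha \neq \emptyset$, WMON guarantees that the boundary of $R_\alpha$ with $R_\emptyset$ is nonempty (letting $t_\alpha$ grow while keeping the other coordinates small enough that only $R_\alpha$ and $R_\emptyset$ are in contention); at this boundary, truthfulness combined with $\tilde P_\emptyset = 0$ forces $\tilde P_\alpha$ to equal the critical $t_\alpha$ value, and the same equation applied to the original payments in Definition~\ref{def:virt} gives $\hat P_\alpha = \tilde P_\alpha > 0$, hence $\tilde P_\alpha = P_\alpha$. If $R_\alpha = \emptyset$, monotonicity gives $\tilde P_\alpha \geq \tilde P_\emptyset = 0$; while if $\tilde P_\alpha > 0$, any $t$ with $t_\alpha \in (0,\tilde P_\alpha)$ and the remaining coordinates large enough (and still admissible in $V^*$) would strictly prefer $\alpha$ over $\emptyset$, contradicting $R_\alpha = \emptyset$. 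Hence $\tilde P_\alpha = 0 = P_\alpha$. The case $\alpha = 12$ is analogous but uses the already-determined $\tilde P_1, \tilde P_2$: if $R_{12} \neq \emptyset$, its boundary with whichever of $R_1, R_2, R_\emptyset$ it abuts pins down $\tilde P_{12}$ and matches the definition $P_{12} = \max\{\hat P_{12}, P_1, P_2\}$; if $R_{12} = \emptyset$, monotonicity gives $\tilde P_{12} \geq \max\{\tilde P_1, \tilde P_2\}$, and a witness-bid argument rules out strict inequality.

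For direction (b), by Claim~\ref{cla:truthful} the pair $(A, P)$ is a truthful mechanism, so for every $t$ the allocation $A(t)$ lies among the maximizers of $P_\alpha - t_\alpha$. Any other allocation $A'$ sharing this property must agree with $A$ wherever the maximizer is unique. The maximizer is unique precisely on the open interiors of the regions $R_\alpha$ (the boundaries are exactly the sets where two distinct utilities coincide), so $A'$ can differ from $A$ only on that lower-dimensional boundary set; this is exactly the tie-breaking freedom claimed.

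The main obstacle, which arises both in the $R_\alpha = \emptyset$ subcases and in the extension to the nonadditive domains $V^*$, is constructing witness bids that remain inside $V^*$ and certify the required strict preference. For $V^* = V_{\mathit{submod}} \cap V_\epsilon$, for instance, the witness refuting $\tilde P_{12} > \max\{\tilde P_1, \tilde P_2\}$ when $R_{12} = \emptyset$ needs $t_1, t_2$ slightly below $\tilde P_{12} - \max\{\tilde P_1,\tilde P_2\}$ and $t_{12}$ close to $t_1 + t_2$; the $\epsilon$-additive and submodular constraints still leave an open interval for $t_{12}$, so the construction goes through uniformly across all four choices of $V^*$ listed in the statement. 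The corresponding witness in the $V^* = V_+$ case is even simpler because $t_{12} = t_1+t_2$ is forced, so the same choice of $t_1, t_2$ suffices.
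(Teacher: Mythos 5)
Your decomposition into (a) uniqueness of the normalized, monotone, truthful payments for a given $A$, and (b) determination of $A$ up to ties once the payments are fixed, is exactly the structure the paper uses, as is the idea of pinning down each payment by a witness bid and by monotonicity. So the overall route is the same. However, two of your witness constructions are off, and one of them would actually fail.

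First, a wording slip: for $R_\alpha\neq\emptyset$ with $\alpha\in\{1,2\}$ you say you keep ``the other coordinates small enough that only $R_\alpha$ and $R_\emptyset$ are in contention,'' but this is backwards. If, say, $\alpha=1$ and $t_2$ is small, then the allocations $2$ and $12$ become cheap and remain in contention. You want the other single-task cost large (the paper takes $t_2=L$ with $L$ large and $t_{12}=t_1+L$), so that the $t$-player will never take task $2$ and the sweep in $t_1$ only crosses the $R_1\mid R_\emptyset$ boundary. You do use ``large enough'' in the $R_\alpha=\emptyset$ subcase, so this looks like an internal inconsistency rather than a misunderstanding, but as stated the $R_\alpha\neq\emptyset$ argument does not isolate the right two regions.

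Second, and more seriously, the witness you propose for refuting $\tilde P_{12}>\max\{\tilde P_1,\tilde P_2\}$ when $R_{12}=\emptyset$ --- taking $t_1,t_2$ slightly below $c:=\tilde P_{12}-\max\{\tilde P_1,\tilde P_2\}$ and $t_{12}\approx t_1+t_2$ --- does not certify that $12$ is strictly preferred. It does beat allocations $1$ and $2$, but against $\emptyset$ the profit from $12$ is $\tilde P_{12}-t_{12}\approx \tilde P_{12}-2c=2\max\{\tilde P_1,\tilde P_2\}-\tilde P_{12}$, which is negative whenever $\tilde P_{12}>2\max\{\tilde P_1,\tilde P_2\}$ (for instance $\tilde P_1=\tilde P_2=0$, $\tilde P_{12}=10$, your witness gives profit about $-10$ for $12$ versus $0$ for $\emptyset$). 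The correct witness is the one the paper uses: $t=(\epsilon,\epsilon,2\epsilon)$ with $\epsilon$ small. Then $\tilde P_{12}-2\epsilon$ exceeds $\tilde P_1-\epsilon$, $\tilde P_2-\epsilon$ and $0$ for $\epsilon$ small enough, and the point lies in $V_+\subset V^*$ for every $V^*$ considered, so no separate admissibility check is needed. With these two fixes your argument coincides with the paper's proof.
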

\begin{proof}  For every $V^*$ it is well-known that the payments $P_{12}, P_1, P_2, P_\emptyset$ uniquely determine the truthful allocation up to tie-breaking, since the allocation maximizes $P_\alpha- t_{\alpha}$ over $\alpha\in\{12, 1, 2, \emptyset\}.$ 

We prove the uniqueness of monotone, normalized, truthful payments for a given allocations function $A(t)$.  By definition, $P_\emptyset=0$ must hold. We claim that, if $R_1=\emptyset,$ then $P_1=0$ must be the case. Assume the contrary, then by monotonicity, $P_1>0$ must hold. However, if $P_1>0,$ then the points  $t=(t_1<P_1, t_2=L, t_{12}=t_1+L)\in V_+\subset V^*$ (for large enough $L$) are in $ R_1,$ (by truthfulness) contradicting $R_1=\emptyset.$  If $R_1\neq \emptyset,$ then $P_1=\sup\{t_1\,|\,(t_1,L,t_1+L)\in R_1\}$ is uniquely determined by $A(t)$. (Note that the supremum cannot be infinite, otherwise $R_\emptyset=\emptyset$ would hold.) By symmetric argument, also $P_2$ is uniquely determined by $A(t).$

Consider now $P_{12}.$ We claim that, if $R_{12}=\emptyset,$ then $P_{12}=\max\{P_1,P_2,0\}$ must be the case (these are unique by the paragraph above). Assume the contrary, that $P_{12}>\max\{P_1,P_2,0\}.$ Then, for every small enough $\epsilon>0$ the $t=(\epsilon,\epsilon, 2\epsilon)\in V_+\subset V^*$ must be a point in $R_{12},$ contradicting $R_{12}=\emptyset.$ Finally, if $R_{12}\neq \emptyset,$ then $(0,0,0)\in R_{12}$ and $R_{12}$ must have boundary points $t$ with at least one of $R_{\emptyset},$  or $R_{1}$ or $R_2,$ given that $R_\emptyset\neq \emptyset.$ If $R_{12}$ has a boundary with (say) $R_1,$ then $P_{12}-P_1=\sup\{ t_2 \,|\,(0, t_2, t_2)\in R_{12}   \},$ because this point (i.e., with the supremum $t_2$ value) must be on the boundary with $R_1.$  If $R_{12}$ has no common boundary with $R_1$ or $R_2,$  then $P_{12}=\sup\{ t_2 \,|\,(0, t_2, t_2)\in R_{12}   \},$ because this point  must be on the boundary with $R_\emptyset.$   
\end{proof}

\subsubsection{One-to-one correspondence of allocation and virtual payments}

The virtual payments determine the allocation so that critical values for a bid getting/losing a task (for fixed other bids) are always non-negative.\footnote{In geometric terms, axis-parallel boundaries always have position at least $0.$} Exploiting this, it simplifies the discussion that we can assume that the closure of each of the 4 allocation regions of the $t$-player is always nonempty (over $V^*$ or $V_+):\,\,$ if for some $t\in \mathbb R_{\geq 0}^3$ an allocation (say $12$) does not contradict truthfulness for the virtual payments, then we assume that $t$ is in the closure of the respective allocation region (say $t\in \overline R_{12}). $ For example, w.l.o.g. $(0,0,0)\in \overline R_{12},$  and $(0,L,L)\in \overline R_{1}$ for large enough  $L$  always hold, because $P_{12}$ is the largest payment, and $P_1$ is nonnegative. \footnote{Mind that the closure of a region $\overline R_\alpha$ defined this way is \emph{not} the topological closure of the respective interior $R_\alpha,$ e.g., if $R_\alpha$ is empty.} In the proofs we always pick bids $t$ from the \emph{interior} of some allocation region (here the allocation is uniquely determined), so the above assumption is without loss of generality.

Claim~\ref{cla:unique} has a crucial, and maybe surprising implication concerning a one-to-one relation between  WMON allocations $A(t)$ over $t\in V^*$ and their respective restrictions to the additive plane $A|_{V_+}.$ 

\begin{lemma}\label{lem:same} For every WMON allocation function $A(t)$ over $t\in V^*$ (in particular for the allocation function $A[s]$ for arbitrary fixed $s$), and its restriction to the additive plane $A_+:=A|_{V_+}$ hold that $A$ uniquely determines $A_+,$ and conversely, up to tie-breaking $A_+$ uniquely determines $A$ over $V^*.$ Moreover the virtual payments $P_{12}, P_1, P_2, P_\emptyset$ are the same for $A$ as for $A_+.$
\end{lemma}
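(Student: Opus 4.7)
The direction $A \Rightarrow A_+$ is immediate: $A_+$ is literally the restriction $A|_{V_+}$. The content of the lemma is (i) that $A$ and $A_+$ share the \emph{same} virtual payments, and (ii) that $A_+$ suffices to recover $A$ on $V^*$ up to tie-breaking. The plan is to prove (i) directly by a uniqueness argument and then deduce (ii) from truthfulness.

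\emph{Step 1: Virtual payments coincide.} Let $P=(P_\emptyset,P_1,P_2,P_{12})$ denote the virtual payments for $A$ over $V^*$ (Definition~\ref{def:virt}). By Claim~\ref{cla:truthful}, $(A,P)$ is truthful, normalized and monotone on $V^*\times V_{+,B}$. Restricting the valuation domain to $V_+\subset V^*$ preserves each of these three properties, so the pair $(A_+,P|_{V_+})=(A_+,P)$ is truthful, normalized and monotone on $V_+\times V_{+,B}$. By the uniqueness part of Claim~\ref{cla:unique} (applied now to $A_+$ on $V_+$), the only normalized, monotone, truthful payments for $A_+$ are its own virtual payments $P^+$, so $P^+=P$.

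\emph{Step 2: $A_+$ determines $A$ up to tie-breaking.} Given $A_+$, its virtual payments $P^+$ are computed from $A_+$ alone via the explicit formulas in Claim~\ref{cla:unique} (suprema over points of the form $(t_1,L,t_1+L)$ and $(0,t_2,t_2)$, all of which lie in $V_+$). By Step~1 these values are precisely the virtual payments $P$ of $A$. Now, since $(A,P)$ is truthful on $V^*\times V_{+,B}$, for every $t\in V^*$ the allocation $A(t)$ must belong to $\arg\max_{\alpha\in\{12,1,2,\emptyset\}}(P_\alpha-t_\alpha)$. On the interior of any allocation region this maximizer is unique, so $A$ is fully determined by $P$ on the interiors; the only remaining freedom is the tie-breaking on the (lower-dimensional) boundary sets where two or more of the expressions $P_\alpha-t_\alpha$ coincide. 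Hence $A_+$ determines $A$ up to tie-breaking.

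\emph{Anticipated obstacle.} The only subtlety is the verification that the ``formulas'' extracting $P$ from $A$ and those extracting $P^+$ from $A_+$ really query the same pieces of information. This boils down to the observation that the probing points used in the proof of Claim~\ref{cla:unique} — namely $(t_1,L,t_1+L)$, $(L,t_2,t_2+L)$, $(0,0,0)$, $(0,t_2,t_2)$ — all lie on the additive plane $V_+$. Thanks to the convention (made just before Lemma~\ref{lem:same}) that each allocation whose closure is nonempty is realized on a suitable axis-like set in $V_+$, these suprema see the same points whether we look at $A$ or at $A_+$, and so the two virtual-payment computations yield identical numbers. Once this is checked, Steps~1 and~2 are essentially bookkeeping.
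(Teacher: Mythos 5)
Your proposal is correct and takes essentially the same route as the paper: both prove that the virtual payments of $A$ (on $V^*$) are, by Claim~\ref{cla:truthful}, a valid normalized, monotone, truthful payment scheme for $A_+$ on $V_+$, hence by the uniqueness in Claim~\ref{cla:unique} they coincide with the virtual payments of $A_+$; and then both recover $A$ from $P^+=P$ up to tie-breaking via Claim~\ref{cla:unique} applied on $V^*$. The only cosmetic difference is that the paper phrases the second step as a contradiction (two putative extensions $A'$, $A''$ sharing the same $A_+$ must have equal virtual payments, hence agree up to ties), whereas you reconstruct $A$ directly from $P^+$; the ``anticipated obstacle'' paragraph is harmless but superfluous, since the uniqueness argument already renders the comparison of probing formulas unnecessary.
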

\begin{proof} The first statement, that $A$ determines $A|_{V_+},$ is obvious. Since the payments for $A$ are normalized, monotone, and truthful for $A_+,$ by uniqueness these are the only possible  virtual payments for $A_+.$ 

We show how the other implication (that $A_+$ determines $A$) also follows from the uniqueness of the normalized and monotone payments.  Suppose that there exist two different WMON allocation functions over $V^*,$ $A'(t)$ and $A''(t),$ such that $A_+(t)=A'|_{V_+}(t)=A''|_{V_+}(t)$ for every $t\in V_+.$ 
By Claim~\ref{cla:unique} there exist unique virtual payments $P'=(P'_{12},P'_{1},P'_{2},P'_{\emptyset},)$ and $P^2=(P''_{12},P''_{1},P''_{2},P''_{\emptyset},),$ i.e., truthful, normalized and monotone payments for  $A'$ and $A'',$ respectively. Obviously, both are truthful, normalized and monotone for the allocation function $A_+(t),$ when restricted to $V_+.$ Since $A_+$ has unique truthful, normalized and monotone payments, by Claim~\ref{cla:unique}, it must be the case that $P'_{12}=P''_{12},\,P'_{1}=P''_{1},\,P'_{2}=P''_{2},\,$ and $P'_{\emptyset}=P''_{\emptyset}.$ This, in turn, implies by the same claim applied on $V^*$ that $A'\equiv A''$ up to tie-breaking, which concludes the proof. 
\end{proof}

\subsubsection{Illustration} We provide a geometric illustration to  Lemma~\ref{lem:same}. 
For fixed $s,$ the allocation function to the $t$-player $A(t_1,t_2,t_{12})$ has a very simple geometric representation in $\mathbb R_{\geq 0}^3.$ 

Consider the (closures of the) four allocation regions of the $t$-player in $V$: $\, \overline R_{12},\overline R_{1},\overline R_{2},\overline R_{\emptyset}.$ By truthfulness, these regions are uniquely determined by the three payments as follows: 

$$\overline R_{\emptyset}=\{ t\in V \,| \, t_{1}\geq P_1, t_2\geq P_2, t_{12}\geq P_{12}\},$$ 
$$\overline R_{1}=\{ t\in V \,| \, t_{1}\leq P_1, t_1\leq t_{12}+P_1-P_{12}, t_{1}\leq t_2+P_1-P_{2}\},$$
$$\overline R_{2}=\{ t\in V \,| \, t_{2}\leq P_2, t_2\leq t_{12}+P_2-P_{12}, t_{2}\leq t_1+P_2-P_{1}\},$$
$$\overline R_{12}=\{ t\in V \,| \, t_{12}\leq P_{12}, t_{12}\leq t_{1}+P_{12}-P_{1}, t_{12}\leq t_2+P_{12}-P_{2}\}.$$
The boundaries between pairs of these regions, if exist, must be \emph{subsets} of the following planes, respectively:
$$\{ t\in V \,| \, t_1= P_1\}\qquad\hspace{5cm}{(R_1|R_{\emptyset})}$$
$$\{ t\in V \,| \, t_2= P_2\}\qquad\hspace{5cm}{(R_2|R_{\emptyset})}$$
$$\{ t\in V \,| \, t_{12}= P_{12}\}\qquad\hspace{5cm}{(R_{12}|R_{\emptyset})}$$
$$\{ t\in V \,| \, t_{12}-t_1= P_{12}- P_1\}\hspace{5cm}(R_{12}|R_1)$$
$$\{ t\in V \,| \, t_{12}-t_2= P_{12}-P_2\}\hspace{5cm}(R_{12}|R_2)$$
$$\{ t\in V \,| \, t_{1}-t_2= P_{1}- P_2\}\hspace{5cm}(R_{2}|R_1).$$
All these planes have the common intersection point $(P_1,P_2,P_{12}).$ By monotonicity of the virtual payments, this common point is in the domain  $V$ of monotone valuations, and it is a boundary point of all four regions in $V.$ Moreover these boundaries always exist in $V$ (not only their planes). Their positions determine the allocation and the payments and vice-versa. The four regions in $V\subset \mathbb R_{\geq 0}^3,$ of a general WMON allocation  have thus a rather simple and visible structure.

Let us now consider the additive plane $V_+;$ this plane always intersects the first two boundaries ($R_1|R_{\emptyset}$ and $R_2|R_{\emptyset}$), so that the allocation on the additive plane uniquely determines $P_1$ and $P_2.$ Finally, always at least one of the third, fourth and fifth boundaries intersects (as boundary) the additive plane $V_+,$ which determines  $P_{12}.$ 

In particular, as can be seen from the proof of Lemma~\ref{lem:same}, the interior of a region $R_\alpha$ of the $t$-player is empty in $V_+$ if and only if it is empty in $V$ (and consequently, if and only if it is empty in $V^*,$ since $V_+\subset V^*$). This may seem surprising, since it could in principle have been the case that the additive plane $V_+$ does not intersect some nonempty region $R_\alpha\subset V.$

\subsubsection{Extending the additive characterization to $V^*\times V_{+,B}$}

In what follows, we characterize all truthful mechanisms for two tasks and two bidders with  
valuations on the domains  $V\times V_{+,B},$   $V_\mathit{submod}\times V_{+,B},$  $V_\epsilon\times V_{+,B},$ and $(V_\epsilon\cap V_\mathit{submod})\times V_{+,B}$ based on the characterization on $V_+\times V_{+,B}.$ According to Theorem~\ref{theo:addchar}, every truthful mechanism on $V_+\times V_{+,B}$ is one of the following four types: (1) relaxed affine minimizer, (2) relaxed task-independent mechanism,  (3) 1-dimensional mechanism, (4) constant mechanism.

Let $A$ be some WMON  allocation  on $V^*\times V_{+,B}$ for $V^*\in\{V,V_\mathit{submod},V_{\epsilon},V_\mathit{submod}\cap V_{\epsilon}\}.$  (these are the allocations to be characterized). Its restriction $A|_{V_+\times V_{+,B}}$ to additive valuations of the $t$-player is a WMON allocation for additive valuations, and therefore is of one of the types (1) to (4) above. Let $A|_{V_+\times V_{+,B}}$ be for instance a (relaxed) affine minimizer. Then, for every $s\in V_{+,B}$ the allocation and payments for the $t$-player on $t\in V_+$ are those of this affine minimizer. By uniqueness (Lemma~\ref{lem:same}), they extend to $t\in V^*$ uniquely to be the allocation and payments of the same affine minimizer. This, in turn, means that for every fixed $t\in V^*$ the $s$-player on $s\in V_{+,B}$ has allocation and payments according to this affine minimizer. Altogether the mechanism on $V^*\times V_{+,B}$ is the same affine minimizer, with the same parameters, and same virtual payments as on $V_+\times V_{+,B}.$ 
The argument is analogous for 1-dimensional mechanisms and for constant mechanisms. On the other hand, for relaxed task independent mechanisms we show next that these are not extendable to $V^*\times V_{+,B},$ (in particular, not even to $V_\epsilon\times V_{+,B}$(!)), except for those that are at the same time affine minimizers or 1-dimensional mechanisms.


\bigskip

\subsubsection{  $(A,P)|_{V_+\times V_{+,B}}$ is a (relaxed) task-independent mechanism}

A relaxed task-independent mechanism allocates the two tasks by independent 1-dimensional WMON allocations, up to $(s_1,s_2)$ (resp. $(t_1,t_2)$) points where the critical value functions $\phi()$ and $\eta()$ of both of these mechanisms have a jump discontinuity. Since there are at most countably many $s_1, s_2, t_1,$ and $t_2$ where these increasing functions of the 1-dimensional mechanisms can have a jump, it will be easy to handle these $s$ and $t$ values  in the following discussion. 
We assume that the inverse of an increasing function $\phi$ is constant over intervals where the original function has a jump, and conversely, the inverse $\phi^{-1}$ has a jump discontinuity in $c,$ and $\phi^{-1}(c)$ is  defined in an arbitrary monotone way, when the original function is constant $c$ over some interval.

Assume that on $V_+\times V_{+,B}$ the payments to the $t$-player, according to the task-independent mechanism are 
$$P_1(s)=\phi(s_1),\quad P_2(s)=\eta(s_2),\quad P_{12}(s)=\phi(s_1)+\eta(s_2),$$
where $\phi$ and $\eta$ are arbitrary non-negative increasing functions over $s_1,s_2\in [0,B).$ 
We can assume that the values of $\phi$ and $\eta$ are finite, 
because  $R_\emptyset=\emptyset$ would hold for  the $t$-player in every point $(s_1,s_2)$ where $\phi(s_1)=\infty,$ or $\eta(s_2)=\infty.$

By Lemma~\ref{lem:same}, the same payments to the $t$-player  hold over $t\in V^*.$

In what follows, we consider the allocation and payments to the $s$-player on $V^*\times V_{+,B}$ for some carefully chosen $t\in V_\epsilon\cap V_\mathit{submod}\subset V^*$ values, and will conclude that the mechanism is truthful only if $\Phi\equiv 0$ or $\eta\equiv 0$ (implying a 1-dimensional mechanism), or if  $\phi$ and $\eta$ are both linear by the same multiplicative constant, which corresponds to a task-independent affine minimizer.\footnote{A task-independent affine minimizer minimizes among $\,t_{12}, \,\,\, t_1+\mu s_2+\gamma_1, \,\,\, \mu s_1+t_2+\gamma_2,\,\,\, \mu (s_1+s_2)+\gamma_1+\gamma_2$ for some given constants $\mu,$ $\,\gamma_1,\,\gamma_2$.}

We use the notation $\phi(B)=\lim_{s_1\rightarrow B^-} \phi(s_1)$ and  $\eta(B)=\lim_{s_2\rightarrow B^-} \eta(s_2),$
furthermore $\phi(0)=\lim_{s_1\rightarrow 0^+} \phi(s_1)$ and  $\eta(0)=\lim_{s_2\rightarrow 0^+} \eta(s_2).$ 

The case when $\phi(0)=\phi(B),$ \emph{and} $\eta(0)=\eta(B),$ yields a constant mechanism. The cases when $\phi(0)=\phi(B)=0$ or $\eta(0)=\eta(B)=0$ correspond to 1-dimensional mechanisms. The case when $\phi(0)=\phi(B)=C>0$ (or $\eta(0)=\eta(B)=C>0$) on $V_+\times V_{+,B}$ is a task-independent mechanism that allocates one of the tasks by a constant mechanism, independently of the bid $s.$ We sketch a proof that the latter mechanisms do \emph{not} extend to $V^*\times V_{+,B}$ when $C>0.$ Assume the contrary that $\phi\equiv C>0.$ Then for the payments $$P_\emptyset=0,\quad P_1\equiv C, \quad P_2=\eta(s_2),\quad P_{12}=\eta(s_2)+C$$ must hold, even on $V^*\times V_{+,B}.$ This implies that even for every $t\in V^*$ in the extension, the allocation is independent of $s_1.$ That is, for every $t\in V^*$ the allocation to the $s$-player either consists of a single region, or has a single horizontal boundary. However, the above payments to the $t$-player imply that  if $\phi(s_1)\equiv C>0$ and $\eta(s_2)$ takes on at least two different values, then there exists a $t\in V^*$ so that for this fixed $t,$ by increasing $s_2,$ the allocation to the $s$-player gets from $R_2^s(t)$ into $R_1^s(t)$ (and this happens even for every $s_1$). However, the $s$-player cannot have a nonempty  $R_2^s(t)$ and $R_1^s(t)$ regions at the same time, if it only has horizontal boundaries, a contradiction. 

In the rest of this subsection we assume $\phi(0)<\phi(B)$ and $\eta(0)<\eta(B).$

\begin{definition} We will say that 
\begin{itemize}
\item[-] $t_1>0$ is \emph{valid}, if $t_1 \in (\phi(0),\phi(B)),$ and $\phi^{-1}$ is well-defined (continuous) in the point $t_1;$
\item[-] $t_2>0$ is \emph{valid}, if $t_2 \in (\eta(0),\eta(B)),$ and $\eta^{-1}$ is well-defined (continuous) in the point $t_2;$ 
\item[-] $t=(t_1,t_2,t_{12})\in V$ is \emph{valid}, if each of $t_1,$ $t_1'=t_{12}-t_2,$ $\,t_2,$ and $t_2'=t_{12}-t_1$ are valid.
\end{itemize}
\end{definition}

Observe that almost all\footnote{meaning: 'except for a countable subset'} $t_1\in (\phi(0),\phi(B))$ and almost all  $t_2\in (\eta(0),\eta(B))$ are valid, because the increasing functions $\phi^{-1}$ and $\eta^{-1}$ have a jump-discontinuity in at most countably many points. Moreover, if $t_1$ and $t_2$ are valid, then the point $(t_1,t_2,t_1+t_2)\in V_+$ is valid, and for all $\epsilon'<\epsilon,$ the points $(t_1,t_2,t_1+t_2-\epsilon')$ are in $V_\epsilon\cap V_\mathit{submod},$ and they are valid for almost all $\epsilon'.$ It follows that the set of all valid $t$ points is dense in $V_\epsilon\cap V_\mathit{submod},$ at least if we restrict attention  to $t_1 \in (\phi(0)+\epsilon',\phi(B)-\epsilon').$  Our goal is to show that the restriction of $\phi^{-1}$ to the valid points of $(\phi(0)+\epsilon',\phi(B)-\epsilon')$ is identical to a linear function, and  analogously for $t_2$ and $\eta^{-1};$ moreover $\phi^{-1}$ and $\eta^{-1}$ have the same slope. Thus, since a restriction of each of $\phi$ and $\eta$ to a dense subset is linear, and they are increasing, in fact, they cannot have jump discontinuities at all, and are (allover) linear functions with the same slope. Finally, by taking $\epsilon'\rightarrow 0,$ we obtain linearity over $(\phi(0),\phi(B)),$ and $(\eta(0),\eta(B))$ respectively. 
 This will prove that $A|_{V_+\times V_{+,B}}$ is a task-independent affine minimizer, and  $A$ is necessarily the same affine minimizer on $V^*\times V_{+,B}.$

First we argue that for all valid $t$ the regions $R^s_\emptyset(t)\subset V_+$ and $R^s_1(t)\subset V_+$ are nonempty, in particular there exist $s\in V_{+,B}$ on the boundary of these two regions. Since $R^s_\emptyset(t)$ is nonempty, for valid $t$ the virtual payments $P_\alpha^s(t)$ of the $s$-player are well-defined. Further, an $s$ on the boundary will help us to obtain a formula for $P_1^s(t).$

Consider an arbitrary fixed valid $t\in V.$ Since $t_2< \eta(B),$ there exist $s\in V_{+,B}$ so that $t_2<\eta(s_2),$ or equivalently $\eta^{-1}(t_2)< s_2.$ For every such $s$ and the fixed $t,$ the $t$-player receives task $2.$ Moreover, since $\eta^{-1}$ is continuous in $t_2,$ the same holds (there exist proper $s_2$ values globally) for \emph{all} points in some neighborhood of $t.$ 
Now, for every such $s,$ the $s$ is on the boundary of $R^s_{\emptyset}$ and $R^s_1,$ iff our fixed point $t$ is on the boundary of $R_{12}(s)$ and $R_2(s).$\footnote{In fact, the boundary of the $t$-player is allowed to have a jump in $s_1=\phi^{-1}(t_1).$} On the other hand, $s$ is on the boundary of $R^s_{\emptyset}$ and $R^s_1,$ iff $s_1=P^s_1$ (here we use that $s$ can not get task $2$), and $t$ is on the boundary of $R_{12}(s)$ and $R_2(s)$ iff $t_{12}-t_2=P_{12}(s)-P_2(s)=\phi(s_1).$ Observe that such an $s_1$ does exist, given that $t_{12}-t_2$ is valid, i.e. is within the range of valid $t_1$ values and is a continuity point of  $\phi^{-1}.$ In summary, for the fixed valid $t,$ and the found $s$ it must hold that $$P^s_1(t)=s_1=\phi^{-1}(t_{12}-t_2),$$ which, for any valid $t$ yields the formula for $P^s_1(t)$ independently of $s.$ Analogously, we obtain for valid $t$ that $$P^s_2(t)=\eta^{-1}(t_{12}-t_1).$$

Next we determine $P_{12}^s$ in two different ways for every valid $t.$ First of all we claim that if $t$ is valid then $R_{12}^s(t)$ is nonempty.
Indeed, by the validity of $t,$ there exist $s$ such that $s_1<\phi^{-1}(t_1),$ and $s_2<\eta^{-1}(t_2),$ and $s_2<\eta^{-1}(t_{12}-t_1).$ For all such  $s$ it holds that $t_1>\phi(s_1)=P_1(s),$ and $t_2>\eta(s_2)=P_2(s),$ and $t_{12}>\eta(s_2)+t_1>\eta(s_2)+\phi(s_1)=P_{12}(s).$ Therefore $t\in R_\emptyset(s).$ This implies that the (open) set of all such $s$ must belong to $R_{12}(t).$ 

Second, we claim that there exists an $s'$ on the boundary of $R_{12}^s(t)$ and $R_2^s(t),$ and another $s''$ on the boundary of $R_{12}^s(t)$ and $R_1^s(t).$
Indeed, let $s'$ be such that $s'_1=\phi^{-1}(t_1),$ and $s'_2<\eta^{-1}(t_2),$ and $s'_2<\eta^{-1}(t_{12}-t_1).$ Then, $t_2>\eta(s'_2)=P_2(s'),$ and $t_{12}>\eta(s'_2)+t_1=\eta(s'_2)+\phi(s'_1)=P_{12}(s'),$ so the $t$-player does not get task $2$ even with a bid in a small enough neighborhood of $t,$ and because of $t_1=\phi(s'_1)$ it must be on the boundary of $R_1(s')$ and $R_\emptyset(s'),$ so $s'$ is on the boundary of $R^s_2(t)$ and $R^s_{12}(t)$ (here we use that $s'_1$ is uniquely defined as $\phi^{-1}(t_1)$). The proof is analogous for $s''.$

By the position of $s',$
$$P_{12}^s-P_2^s=s'_1=\phi^{-1}(t_1);$$
the second equality follows from the fact that for $s'$ the $t$ must be on the boundary of $R_\emptyset$ and $R_1.$
Plugging in $P^s_2$ we obtain (using also that $\phi^{-1}$ and $\eta^{-1}$ are well-defined in $t$)
$$P_{12}^s=\phi^{-1}(t_1)+\eta^{-1}(t_{12}-t_1).$$
On the other hand, using $s''$ we obtain by analogous argument that
$$P_{12}^s=\eta^{-1}(t_2)+\phi^{-1}(t_{12}-t_2).$$
This yields that $$\phi^{-1}(t_1)+\eta^{-1}(t_{12}-t_1)=\eta^{-1}(t_2)+\phi^{-1}(t_{12}-t_2)$$
for \emph{every} valid $t.$

Let $\epsilon'<\epsilon.$ Assume for a moment that \emph{all} $t\in V_\epsilon\cap V_\mathit{submod}$ with $t_1 \in (\phi(0)+\epsilon',\phi(B)-\epsilon')$ and $t_2 \in (\eta(0)+\epsilon',\eta(B)-\epsilon')$ are valid, and consider such a $t.$ Increasing now only $t_1$ by an arbitrary positive $\delta<\epsilon',$ the right side of the last equality does not change. This is possible only if $\phi^{-1}(t_1)$  and $\eta^{-1}(t_{12}-t_1)$ increases, resp. decreases by the same value $\Delta:=\phi^{-1}(t_1+\delta)-\phi^{-1}(t_1)=\eta^{-1}(t_{12}-t_1)-\eta^{-1}(t_{12}-t_1-\delta).$  

Now fix $t_1, t_2$ and $\delta$ (thereby fixing $\Delta$),  and consider \emph{every} $t_{12}\in (t_1+t_2-\epsilon',t_1+t_2).$ For every such $t_{12}$ holds by the same argument  $\eta^{-1}(t_{12}-t_1)-\eta^{-1}(t_{12}-t_1-\delta)=\Delta$ with the fixed $\Delta.$ The last statement can be rewritten as $\eta^{-1}(x)-\eta^{-1}(x-\delta)=\Delta\,$  for every $\,x\in (t_2-\epsilon',t_2).$ Now taking $\delta/2^r$ instead of $\delta$ and letting $r\rightarrow \infty,$ by the monotonicity of $\eta^{-1}$ this implies that $\eta^{-1}$ must be \emph{linear} on the interval $(t_2-\epsilon',t_2).$  (The proof of this  is analogous to the proof  of Lemma~\ref{lem:afmin1}.)

As a final step, change $t_2$ to an arbitrary other valid $t_2'.$ Here again, by the same argument $\eta^{-1}$ is linear over $(t_2'-\epsilon', t_2').$
Since such overlapping open intervals cover the interval $(\eta(0)+\epsilon',\eta(B)-\epsilon'),$ the function $\eta^{-1}$ must be a single linear function over $(\eta(0)+\epsilon',\eta(B)-\epsilon').$
By taking the limit $\epsilon'\rightarrow 0$ we obtain that $\phi^{-1}$ is linear over $(\eta(0),\eta(B)).$
By symmetric argument, $\phi^{-1}$ is linear, too, and $\phi^{-1}$ and $\eta^{-1}$ have the same multiplicative constant $\Delta/\delta$ as follows from $\Delta=\phi^{-1}(t_1+\delta)-\phi^{-1}(t_1)=\eta^{-1}(t_{12}-t_1)-\eta^{-1}(t_{12}-t_1-\delta).$   

Now, how do we go around non-valid points $t\in V_\epsilon\cap V_\mathit{submod}?$ By counting arguments it can be shown that the grid of rational $(t_1,t_2)$ pairs can be translated by a real vector, so that all translated pairs are valid. After this, a real $\delta$ can be found so that every corresponding $(t_1,t_2, t_1+t_2-\delta\cdot k/2^r)$ is valid, too. Using only these numbers we obtain that $\phi$ and $\eta$ are linear on the translated rational grid, so given that they are monotone increasing, they must be linear functions allover.
This concludes the proof that the allocation function is a task-independent affine minimizer on $V^*\times V_{+,B}.$

\subsubsection{Mechanisms with high approximation ratio}

We proved above that any truthful mechanism $(A,P)$ on $V^*\times V_{+,B}$ is (1) relaxed affine minimizer, (2) a one-dimensional mechanism, or  (3) a constant mechanism. It remains to prove that in cases (2), and (3), the approximation ratio is high. This will conclude the proof of Theorem~\ref{theo:nonaddchar}.

\begin{lemma} If $(A,P)$ on $V^*\times V_{+,B}$ is a one-dimensional
  mechanism, or a constant mechanism, there exists an instance for
  which the makespan of $A$ is at least $\sqrt{B}$ and the approximation
  ratio is at least $\sqrt{B}$.
\end{lemma}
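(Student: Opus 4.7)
The plan is to go through the mechanism types---one-dimensional (bundling and degenerate) and constant---and exhibit in each case a pair $(t,s)$ realizing makespan at least $\sqrt{B}$ together with an optimum much smaller than $\sqrt{B}$.

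For a bundling mechanism with threshold $\xi$, the key feature is that both tasks are always assigned to a single player. I will take $t=(0,\sqrt{B},\sqrt{B})$, which satisfies $t_{12}=t_1+t_2$ and hence lies in every $V^*$ under consideration, and $s=(B-\epsilon,0)$. Regardless of whether $\sqrt{B}>\xi(B-\epsilon)$ or not, the mechanism makespan is either $t_{12}=\sqrt{B}$ (if both tasks go to the $t$-player) or $s_1+s_2=B-\epsilon$ (if they go to the $s$-player); both are at least $\sqrt{B}$ for $B\geq 4$. The optimum, however, is $0$: assign task $1$ to the $t$-player with cost $t_1=0$ and task $2$ to the $s$-player with cost $s_2=0$. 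So the ratio is infinite. The same recipe handles the degenerate one-dimensional mechanisms where some task is always allocated to the $s$-player: assign that task an $s$-value of $B-\epsilon$ and the corresponding $t$-value $0$, yielding makespan $\geq B-\epsilon$ and optimum $0$.

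For a constant mechanism whose allocation is independent of the $t$-player, the standing assumption $R_\emptyset(s)\neq\emptyset$ forces $\alpha_t(s)=\emptyset$ for every $s$; then $s=(B-\epsilon,B-\epsilon)$ together with $t=(0,0,0)$ gives makespan $2(B-\epsilon)$ against optimum $0$. The substantive case is a constant mechanism with $\mu_s=0$, which is governed by fixed virtual payments $P_1,P_2,P_{12}$ satisfying $P_{12}\geq \max(P_1,P_2)$. I will split on the size of $P_{12}$:
\begin{itemize}
\item If $P_{12}\leq \sqrt{B}$, I will pick $t\in R_\emptyset$ with smallest possible $t_{12}$, setting $t_1=\max(P_1,P_{12}/2)+\delta$, $t_2=\max(P_2,P_{12}/2)+\delta$, and $t_{12}=t_1+t_2\leq 2\sqrt{B}+O(\delta)$; this lies in the additive plane and hence in every $V^*$. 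With $s=(B-\epsilon,B-\epsilon)$, the mechanism awards both tasks to the $s$-player, yielding makespan $\approx 2B$, while the allocation ``both to the $t$-player'' is feasible and costs $t_{12}\leq 2\sqrt{B}+O(\delta)$. The approximation ratio is therefore at least $\sqrt{B}$ after taking $\delta,\epsilon$ sufficiently small.
\item If $P_{12}>\sqrt{B}$, I will pick $t\in R_{12}$ with $t_{12}$ arbitrarily close to $P_{12}$---existence inside $V_\mathit{submod}\cap V_\epsilon$ follows from choosing $t_i\geq t_{12}-P_{12}+P_i$ and inflating $t_1,t_2$ to satisfy the submodular constraint $t_1+t_2\geq t_{12}$---and set $s=(0,0)$. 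The mechanism delivers both tasks to the $t$-player with cost greater than $\sqrt{B}$, while the optimum is $0$.
\end{itemize}

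The main subtlety is ensuring the constructed $t$'s remain inside the relevant $V^*$: this is trivial in $V$ and $V_\mathit{submod}$, and in $V_\epsilon$ or $V_\mathit{submod}\cap V_\epsilon$ amounts to placing $t_{12}$ essentially equal to $t_1+t_2$. The only nontrivial domain check is in the second subcase when $P_{12}>P_1+P_2$: satisfying both $t_i\geq t_{12}-P_{12}+P_i$ and $t_1+t_2\geq t_{12}$ requires lifting $t_1,t_2$ above the natural lower bounds $P_i$, but this is always feasible, and the remainder is bookkeeping.
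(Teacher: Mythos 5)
Your overall decomposition by mechanism type matches the paper, and the bundling and degenerate one-dimensional cases are handled correctly (and essentially identically to the paper, with slightly different constants). However, there is a genuine gap in your treatment of the constant-mechanism case when $P_{12}>\sqrt{B}$: you pick $t\in R_{12}$ with $t_{12}$ close to $P_{12}$, but $R_{12}$ may be empty. By Claim~\ref{cla:unique}, $R_{12}=\emptyset$ exactly when $P_{12}=\max(P_1,P_2)$, and in that case your existence argument breaks down: for the index $i$ achieving the maximum you need $t_i>t_{12}-P_{12}+P_i=t_{12}$, which contradicts monotonicity ($t_i\leq t_{12}$). This configuration really occurs (e.g.\ a constant quasi-flipping allocation with $P_1>\sqrt{B}$, $R_1,R_2,R_\emptyset\neq\emptyset$, $R_{12}=\emptyset$), so your subcase as written fails for it. The fix is not hard --- when $R_{12}=\emptyset$ and $P_{12}=P_1>\sqrt{B}$, pick $t\in R_1$ with $t_1$ close to $P_1$, $t_2$ large, $t_{12}=t_1+t_2$, and $s$ small; the $t$-player then gets task~$1$, makespan $\approx P_1>\sqrt{B}$, and the optimum is~$0$ --- but it is a missing case in your write-up.

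Beyond that gap, your constant-mechanism argument is substantially more intricate than the paper's, which simply fixes $t=(\sqrt{B},\sqrt{B})$ (additive) and observes that whatever the constant mechanism does with this $t$ --- give the $t$-player at least one task, or give both to the $s$-player --- one lands immediately on a bad instance (pick $s\to 0$ in the first case, $s\to (B,B)$ in the second) with no need to identify which region $t$ lies in or to split on the size of $P_{12}$. The paper's choice of a concrete $t$ and a case split on the \emph{output allocation} rather than on the virtual payments avoids exactly the $R_{12}=\emptyset$ pitfall you ran into. It is worth internalizing this style: when arguing about a constant mechanism you control $t$ but not the mechanism, so analyzing ``what does the mechanism do on this fixed $t$'' is safer than ``which $t$ lies in this region.''

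Two smaller remarks. First, your preliminary reduction --- ``if the allocation is independent of the $t$-player then $\alpha_t\equiv\emptyset$'' --- is correct and in fact folds into the ``independent of $s$'' case (the constant-$\emptyset$ mechanism is independent of both), so the paper's choice to consider only $s$-independent constant mechanisms loses nothing. Second, both your proof and the paper's produce ratios that \emph{approach} $\sqrt{B}$ as $\epsilon\to 0$ (since $s\in V_{+,B}$ has a strict upper bound $B$); this is a harmless technicality shared with the original.
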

\label{lem:highapprox}

\begin{proof} Assume first that $A$ is a constant mechanism. By the
  assumption that $R_\emptyset(s)\neq \emptyset$ for every $s,$ it
  must be the case that the constant mechanism is independent of $s.$

  If, for $t=(\sqrt{B},\sqrt{B})$ at least one of the tasks is given
  to the $t$-player, then the makespan is $2\sqrt{B}$ and the
  approximation ratio is unbounded, because $s$ can be chosen
  arbitrarily small. If for $t=(\sqrt{B},\sqrt{B})$ both tasks are
  given to the $s$-player, then we can choose $s=(B,B)$ to prove
  approximation ratio of at least $\sqrt{B}.$

  Assume now that $A$ is a bundling allocation, and consider the input
  $(t_1=\sqrt{B}, t_2=0, s_1=0, s_2=\sqrt{B}).$ Either both tasks are
  given to the $t$-player, or both are given to the $s$-player, so the
  makespan is $\sqrt{B}$ and the approximation ratio is unbounded.

  Finally, assume w.l.o.g., that $A$ realizes only the two allocations
  $\alpha_t=\emptyset,$ or $\alpha_t=1.$ Then set $s_2=\sqrt{B},$ and
  $t_1=t_2=0.$ Since the $s$-player must receive task $2,$ the
  approximation is again unbounded.
\end{proof}

\end{document}